\newtheorem{theorem}{Theorem}
\newtheorem{lemma}{Lemma}
\newtheorem{remark}{Remark}
\newtheorem{corollary}{Corollary}
\begin{document}
\title{On the Secrecy Rate of Spatial Modulation Based Indoor Visible Light Communications}

\author{Jin-Yuan Wang, Hong Ge, Min Lin, Jun-Bo Wang, Jianxin Dai, \\and Mohamed-Slim Alouini, \emph{Fellow, IEEE}
\thanks{Jin-Yuan Wang Hong Ge and Min Lin are with College of Telecommunications and Information Engineering, Nanjing University of Posts and Telecommunications, Nanjing 210003, China. (E-mail: jywang@njupt.edu.cn, 1217012002@njupt.edu.cn, linmin@njupt.edu.cn)}
\thanks{Jun-Bo Wang is with National Mobile Communications Research Laboratory, Southeast University, Nanjing 210096, China. (E-mail: jbwang@seu.edu.cn)}
\thanks{Jianxin Dai is with College of Science, Nanjing University of Posts and Telecommunications, Nanjing 210003, China. (E-mail: daijx@njupt.edu.cn)}
\thanks{Mohamed-Slim Alouini is with Computer, Electrical and Mathematical Science and Engineering Division, King Abdullah University of Science and Technology, Thuwal 23955-6900, Saudi Arabia. (E-mail: slim.alouini@kaust.edu.sa)}
\thanks{Corresponding author: Jin-Yuan Wang (jywang@njupt.edu.cn)}
}

\maketitle
\linespread{1.5}
\begin{abstract}
In this paper, 
we investigate the physical-layer security for a spatial modulation (SM) based indoor visible light communication (VLC) system, 
which includes multiple transmitters, a legitimate receiver, and a passive eavesdropper (Eve). 
At the transmitters, the SM scheme is employed, i.e., only one transmitter is active at each time instant. 
To choose the active transmitter, a uniform selection (US) scheme is utilized. 
Two scenarios are considered: one is with non-negativity and average optical intensity constraints, 
the other is with non-negativity, average optical intensity and peak optical intensity constraints.
Then, lower and upper bounds on the secrecy rate are derived for these two scenarios.
Besides, the asymptotic behaviors for the derived secrecy rate bounds at high signal-to-noise ratio (SNR) are analyzed.
To further improve the secrecy performance, a channel adaptive selection (CAS) scheme and a greedy selection (GS) scheme are proposed to select the active transmitter.
Numerical results show that the lower and upper bounds of the secrecy rate are tight. 
At high SNR, small asymptotic performance gaps exist between the derived lower and upper bounds. 
Moreover, the proposed GS scheme has the best performance, followed by the CAS scheme and the US scheme.
\end{abstract}

\begin{keywords}
Visible light communications,
Spatial modulation,
Physical-layer security,
Secrecy rate,
Asymptotic performance analysis,
Transmitter selection scheme.
\end{keywords}

\IEEEpeerreviewmaketitle

\newpage
\baselineskip=8.5mm
\section{Introduction}
\label{sec1}
For the fifth generation (5G) wireless communications,
multi-input multi-output (MIMO) will be employed as one of the promising technologies \cite{BIB00}.
However, by using multiple radio frequency (RF) chains,
the hardware cost of MIMO systems is very high.
To break such a limitation, spatial modulation (SM),
which employs only one RF chain, has been proposed as a low complexity solution \cite{BIB01,BIB01_1}.

The concept of SM in conventional RF wireless communications was first proposed by Chau and Yu \cite{BIB02}.
For SM systems with a multi-antenna transmitter,
only one antenna is activated at each time slot,
while other antennas remain silent.
A portion of the source data bits contains the index of the active antenna.
Therefore, the dimension of information is increased,
which can help to enhance the system performance.
The transceiver designs of SM were introduced in \cite{BIB03} and \cite{BIB04}.
At the receiver of SM systems, the active antenna index and the received signal should be estimated simultaneously.
To perform detection, the maximum likelihood detection \cite{BIB05},
matched filter based detection \cite{BIB06},
sphere decoding algorithm based detection \cite{BIB07},
and hybrid detection \cite{BIB08} were proposed.
Based on the transceiver design,
the performance indicators, such as channel capacity \cite{BIB09},
bit error rate \cite{BIB10}, and average bit error probability \cite{BIB11}, were investigated.
For a comprehensive introduction about SM, the readers can refer to \cite{BIB12}.

The large amount of research on SM has verified the advantages of SM over MIMO.
Recently, the investigation of SM has been extended to the field of visible light communications (VLC) \cite{BIB01}.
VLC is a novel data communication variant which uses visible light between 400-800 THz.
The concept of optical SM was proposed in \cite{BIB12_1},
while the SM applied to indoor VLC was discussed in \cite{BIB12_2}.
In indoor environment,
the SM was compared with the repetition coding and the spatial multiplexing in \cite{BIB12_3}.
Moreover, the constellation optimization design and the mutual information analysis for SM based VLC were investigated in \cite{BIB12_4} and \cite{BIB12_5}, respectively.
By using the channel state information (CSI), the channel adaptive SM schemes were analyzed in \cite{BIC1} and \cite{BIC3}.
To break the limitation that the number of required transmitters must be a power of two,
a channel adaptive bit mapping scheme was proposed in \cite{BIC2} for SM based VLC.
In \cite{BIC4}, a collaborative constellation based generalized SM encoding was presented.
In \cite{BIC5}, the impact of synchronization error on optical SM was analyzed.
In \cite{BIC6}, an iterative combinatorial symbol design algorithm was proposed for generalized SM in VLC.
Note that the above literatures do not consider the secure transmissions in the viewpoint of information-theoretic security.

In indoor VLC, the information security is a critical issue for users.
Owing to the line-of-sight propagation,
VLC is more secure than conventional RF wireless communications.
However, any receiver in VLC can receive information as long as it is located in the illuminated zone of the light-emitting diode (LED).
Therefore, such a feature still provides a possibility for unintended users to eavesdrop information.
To ensure secure transmission,
physical layer security techniques for indoor VLC have been proposed recently.
As it is known, the secrecy performance in VLC depends on the input distribution.
By employing the uniform \cite{BIC7},
truncated generalized normal \cite{BIC8},
and discrete \cite{BIC9} input distributions,
the secrecy performance for indoor VLC was discussed.
Analytical results suggest that the discrete input distribution significantly outperforms the truncated Gaussian and uniform distributions.
However, the discrete input distribution is still sub-optimal.
To further improve secrecy performance,
a better input distribution was obtained in \cite{BIB13}.
Focusing on a hybrid VLC/RF communication system with energy harvesting, the secrecy outage probability (SOP) was analyzed in \cite{BIB13_1}.
For VLC with spatially random terminals, the average secrecy capacity and the SOP were discussed in \cite{BIB13_2}.
To the best of our knowledge, the secrecy performance for the SM based VLC has not been well studied in open literature.

Motivated by the above work,
this paper analyzes the secrecy performance for an SM based VLC network,
which is consisted of multiple transmitters, a legitimate receiver, and an eavesdropper.
The main contributions are listed as follows.
\begin{itemize}
  \item The secrecy rate for SM based VLC with non-negativity and average optical intensity constraints is analyzed.
        By using the uniform selection (US) scheme and the existing results \cite{BIB13},
        a lower bound on secrecy rate is derived. According to the dual expression of secrecy rate, an upper bound of the secrecy rate is obtained. Both the lower and upper bounds are in closed-forms. Numerical results verify the tightness of these two newly derived bounds.
  \item The secrecy rate for SM based VLC with non-negativity, average optical intensity, and peak optical intensity constraints is analyzed.
        By adding the peak optical intensity constraint, the closed-form expressions of the secrecy rate bounds are further derived.
        The tightness of the lower and upper bounds are confirmed by numerical results.
  \item The asymptotic behaviors of the secrecy rate at high signal-to-noise ratio (SNR) are analyzed.
        At high SNR, the performance gap between the lower and upper bounds is small.
        Moreover, when the number of transmitter is one, the SM vanishes, the secrecy rate results coincide with the results in \cite{BIB13}.
  \item To improve the secrecy performance, a channel adaptive selection (CAS) scheme and a greedy selection (GS) scheme are proposed to select the active transmitter. Numerical results show that the proposed GS scheme performs better than the CAS and US schemes.
\end{itemize}

The rest of this paper is presented as follows.
Section \ref{section2} shows the system model.
In Sections \ref{section3} and \ref{section4},
the secrecy rate bounds and the asymptotic behaviors for the SM based VLC are analyzed over two scenarios.
Section \ref{section04} provides two transmitter selection schemes to improve secrecy performance.
Some typical numerical examples are given in Section \ref{section5}.
Finally, Section \ref{section6} concludes the paper and provides some future research directions.


\section{System Model}
\label{section2}
As illustrated in Fig. \ref{fig1}, an indoor VLC system with $M$ transmitters (i.e., Alice),
a legitimate receiver (i.e., Bob), and an eavesdropper (i.e., Eve) is considered.
For Alice, each transmitter employs an LED as the lighting source,
which is installed on the ceiling.
At Alice, the SM is employed, i.e., only one LED is active at each time instant and the others are silent.
The diagram of the SM in VLC is shown in Fig. \ref{fig1_1}.
At the receiver side, both Bob and Eve are located on the ground,
and each of them employs a photodiode (PD) to perform the optical-to-electrical conversion.
When an active LED transmits information to Bob, Eve can also receive the signal.

\begin{figure}
\centering
\includegraphics[width=8cm]{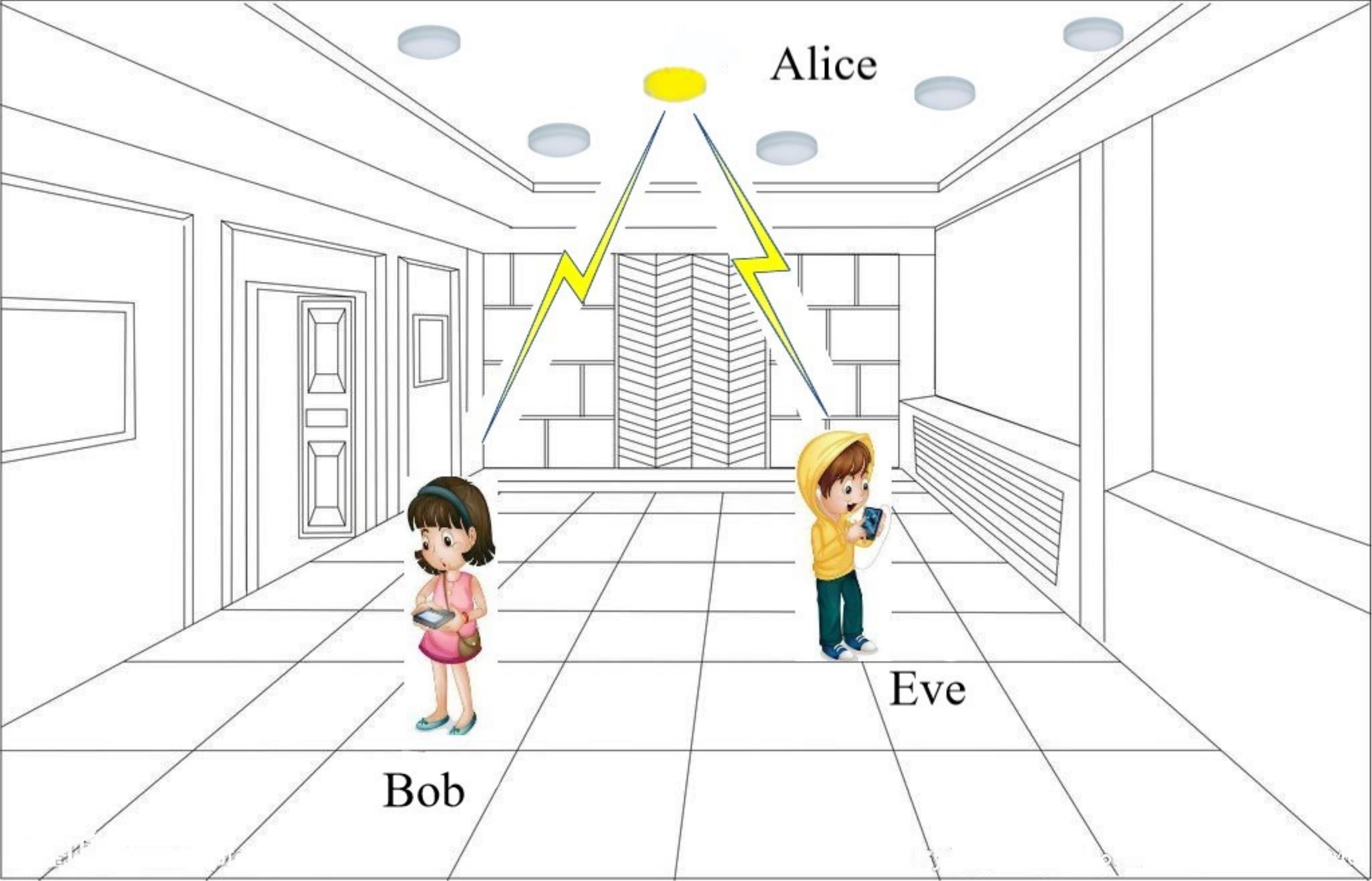}
\caption{An indoor VLC network with Alice, Bob and Eve.}
 \label{fig1}
\end{figure}

\begin{figure}
\centering
\includegraphics[width=8.5cm]{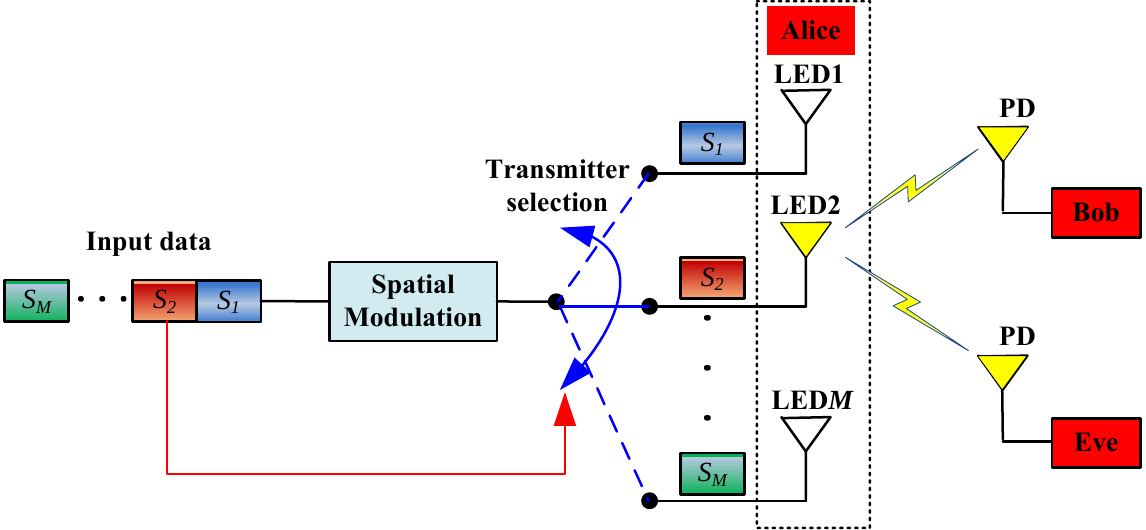}
\caption{The diagram of the SM in VLC.}
 \label{fig1_1}
\end{figure}

At the current time instant, we suppose that the $m$-th LED is activated.
Therefore, the received signals at Bob and Eve are given by
\begin{equation}
\left\{ \begin{array}{l}
{Y_{\rm{B}}} = {h_{{\rm{B,}}m}}X + {Z_{\rm{B}}}\\
{Y_{\rm{E}}} = {h_{{\rm{E,}}m}}X + {Z_{\rm{E}}}
\end{array} \right., m=1,2,\cdots,M,
 \label{eq1}
\end{equation}
where ${Z_{\rm{B}}} \sim N(0,\sigma _{\rm{B}}^2)$ and ${Z_{\rm{E}}} \sim N(0,\sigma _{\rm{E}}^2)$ are additive white Gaussian noises at Bob and Eve, $\sigma _{\rm{B}}^2$ and $\sigma _{\rm{E}}^2$ are the noise variances.
${h_{k,m}}$ is the direct current channel gain between the $m$-th LED and the $k$-th receiver ( $k = {\rm{B}}$ for Bob and $k = {\rm{E}}$  for Eve),
which is given by (6) in \cite{BIB14}.

By using SM, only one LED is activated at each time instant.
To choose the active LED, the uniform selection (US) scheme is utilized,
i.e., selecting each LED is equi-probable.
Therefore, the probability $p({h_k} = {h_{k,m}})$ can be expressed as
\begin{equation}
p({h_k} = {h_{k,m}}) = 1/M,\; k={\rm B\; or\; E}.
 \label{eq4}
\end{equation}

In (\ref{eq1}), the input signal $X$ is the transmitted optical intensity signal, which satisfies
the non-negativity constraint, i.e.,
\begin{equation}
X \ge 0.
 \label{eq3_1}
\end{equation}

Considering its physical characteristics, the LED is limited by its peak optical intensity $A$.
Consequently, the peak optical intensity constraint can be presented as
\begin{equation}
X \leq A.
 \label{eq3_2}
\end{equation}

To satisfy the illumination requirement in indoor scenario,
the dimmable average optical intensity constraint should be considered.
In VLC, the average optical intensity constraint depends on the dimming target,
which is given by \cite{BIB15}
\begin{equation}
{\mathbb{E}}(X) = \xi P,
 \label{eq3_3}
\end{equation}
where $\mathbb{E}( \cdot )$ is the expectation operator, $\xi  \in (0,1)$ denotes the dimming target,
and $P \in (0, A]$ represents the nominal optical intensity of each LED.

\section{Secrecy Rate for SM Based VLC with Constraints (\ref{eq3_1}) and (\ref{eq3_3})}
\label{section3}
Under constraints (\ref{eq3_1}) and (\ref{eq3_3}), the secrecy rate bounds for SM based VLC will be analyzed in this section,
The asymptotic behaviors of the secrecy rate at high SNR will be presented.

According to information theory \cite{BIB16},
when the main channel is inferior to the eavesdropper's channel (i.e., $h_{{\rm B},m}/\sigma_{\rm B} < h_{{\rm E},m}/\sigma_{\rm E} $),
the secrecy rate is zero;
otherwise, a positive secrecy rate ${R_s}$ for SM based VLC with constraints (\ref{eq3_1}) and (\ref{eq3_3}) is derived by solving the following problem
\begin{eqnarray}
&&{R_s} = \mathop {\max }\limits_{{f_X}\left( x \right)} \left[ {{\cal I}(X,{h_{\rm{B}}};{Y_{\rm{B}}}) - {\cal I}(X,{h_{\rm{E}}};{Y_{\rm{E}}})} \right] \nonumber \\
{\rm{s}}{\rm{.t}}{\rm{.}}&&\int_0^\infty  {{f_X}\left( x \right){\rm{d}}x}  = 1 \nonumber \\
&&{\mathbb{E}}\left( X \right) = \int_0^\infty  {x{f_X}\left( x \right){\rm{d}}x}  = \xi P,
 \label{eq5}
\end{eqnarray}
where ${f_X}\left( x \right)$ is the probability density function (PDF) of $X$,
${\cal I}( \cdot ; \cdot )$ denotes the mutual information.
Note that it is extremely challenging to solve optimization problem (\ref{eq5}).
Alternatively, tight secrecy rate bounds will be analyzed in the following.

\subsection{Lower Bound of Secrecy Rate}
\label{section3_1}
By analyzing optimization problem (\ref{eq5}),
a lower bound on secrecy rate for SM based VLC with constraints (\ref{eq3_1}) and (\ref{eq3_3}) is obtained in the following theorem.

\begin{theorem}
For the SM based VLC with constraints (\ref{eq3_1}) and (\ref{eq3_3}), the secrecy rate is lower-bounded by
\begin{equation}
{R_s} \!\ge\! \frac{1}{{2M}}\sum\limits_{m = 1}^M {\ln \left(\frac{\sigma_{\rm E}^2}{2\pi e \sigma_{\rm B}^2} {\frac{{{e^2}h_{{\rm{B}},m}^2{\xi ^2}{P^2} + 2\pi e\sigma _{\rm{B}}^2}}{{h_{{\rm{E}},m}^2{\xi ^2}{P^2} + \sigma _{\rm{E}}^2}}} \right)}.
 \label{eq6}
\end{equation}
\label{them1}
\end{theorem}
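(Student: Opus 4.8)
The plan is to establish the bound by achievability: since $R_s$ in (\ref{eq5}) is a supremum over all admissible input PDFs, it suffices to exhibit one feasible distribution and lower-bound the resulting secrecy rate. Guided by the single-LED result of \cite{BIB13}, I would take $X$ to be \emph{exponentially distributed} with mean $\xi P$, i.e. $f_X(x)=\frac{1}{\xi P}e^{-x/(\xi P)}$ for $x\ge 0$; this satisfies the non-negativity constraint (\ref{eq3_1}) and the average-intensity constraint (\ref{eq3_3}), and among all such distributions it is the one that maximizes the differential entropy, giving $h(X)=\ln(e\xi P)$ and $\mathrm{Var}(X)=\xi^2P^2$. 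The active-LED index $S$ (with $h_k=h_{k,S}$) stays uniform per (\ref{eq4}).

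First I would rewrite the secrecy rate for this input. Because $Y_k=h_{k,S}X+Z_k$ is a deterministic function of $(X,S)$ plus independent noise, one has $\mathcal{I}(X,h_k;Y_k)=H(Y_k)-\frac12\ln(2\pi e\sigma_k^2)$ for $k\in\{\mathrm B,\mathrm E\}$. Adding and subtracting the conditional entropies $H(Y_k\mid S)$ then splits the secrecy rate as
\[
R_s=\big[\mathcal{I}(S;Y_{\rm B})-\mathcal{I}(S;Y_{\rm E})\big]+\frac{1}{M}\sum_{m=1}^{M}\big[\mathcal{I}(X;Y_{\rm B}\mid h_{{\rm B},m})-\mathcal{I}(X;Y_{\rm E}\mid h_{{\rm E},m})\big],
\]
where the first bracket is the secrecy carried by the LED index and the second is the average per-LED signal secrecy. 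I would argue the index term is non-negative using the hypothesis that Bob's link dominates Eve's ($h_{{\rm B},m}/\sigma_{\rm B}>h_{{\rm E},m}/\sigma_{\rm E}$), so that $S$ is at least as identifiable from $Y_{\rm B}$ as from $Y_{\rm E}$, i.e. $\mathcal{I}(S;Y_{\rm B})\ge\mathcal{I}(S;Y_{\rm E})$. This reduces the task to lower-bounding each per-LED secrecy rate, which is exactly the single-antenna VLC wiretap bound of \cite{BIB13}.

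For a fixed index $m$ I would bound the two conditional mutual informations in opposite directions. For Bob, the entropy power inequality gives $e^{2H(h_{{\rm B},m}X+Z_{\rm B})}\ge h_{{\rm B},m}^2e^{2h(X)}+e^{2H(Z_{\rm B})}=e^2h_{{\rm B},m}^2\xi^2P^2+2\pi e\sigma_{\rm B}^2$, hence $\mathcal{I}(X;Y_{\rm B}\mid h_{{\rm B},m})\ge\frac12\ln\!\big((e^2h_{{\rm B},m}^2\xi^2P^2+2\pi e\sigma_{\rm B}^2)/(2\pi e\sigma_{\rm B}^2)\big)$. For Eve, the Gaussian maximum-entropy property applied to $Y_{\rm E}$, whose variance is $h_{{\rm E},m}^2\xi^2P^2+\sigma_{\rm E}^2$, yields $\mathcal{I}(X;Y_{\rm E}\mid h_{{\rm E},m})\le\frac12\ln\!\big((h_{{\rm E},m}^2\xi^2P^2+\sigma_{\rm E}^2)/\sigma_{\rm E}^2\big)$. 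Substituting both into the per-LED difference, combining the logarithms, and averaging over $m$ reproduces (\ref{eq6}) after routine algebra.

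The main obstacle is the non-negativity of the index term $\mathcal{I}(S;Y_{\rm B})-\mathcal{I}(S;Y_{\rm E})$: the per-LED SNR ordering controls each link's signal quality but not directly the \emph{separability} of the gain set $\{h_{{\rm B},m}\}$ relative to $\{h_{{\rm E},m}\}$, so a careful degradedness-type argument (or an explicit structural assumption) is needed to drop this term cleanly. The entropy-power and maximum-entropy steps themselves are standard; the only care required is verifying that the exponential input is admissible and that its entropy power $e^2\xi^2P^2$ and variance $\xi^2P^2$ feed correctly into the respective inequalities.
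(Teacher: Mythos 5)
Your per-LED machinery is exactly the paper's (Appendix A): the max-entropy exponential input with ${\cal H}(X)=\ln(e\xi P)$ and ${\rm var}(X)=\xi^2P^2$, the entropy-power inequality for Bob's output entropy, the Gaussian maximum-entropy bound for Eve's, and the average over $m$ — substituting these reproduces (\ref{eq6}) just as you describe. Where you genuinely differ is the explicit chain-rule split into an index term ${\cal I}(S;Y_{\rm B})-{\cal I}(S;Y_{\rm E})$ plus the per-LED average, together with your admission that you cannot prove the index term is non-negative. The paper never performs this split: its Appendix A simply identifies ${\cal H}(Y_{\rm B})$ and ${\cal H}(Y_{\rm E})$ with the entropies conditioned on the active LED (it writes ${\cal H}(Y_{\rm B})={\cal H}(h_{{\rm B},m}X+Z_{\rm B})$ inside the sum over $m$, and computes Eve's variance as the conditional one, $h_{{\rm E},m}^2\xi^2P^2+\sigma_{\rm E}^2$), which is tantamount to interpreting the secrecy rate in (\ref{eq5}) as the per-LED conditional average $\frac{1}{M}\sum_{m}\left[{\cal I}(X;Y_{\rm B}\mid S=m)-{\cal I}(X;Y_{\rm E}\mid S=m)\right]$ from the outset; this is also how the paper itself rewrites $R_s$ in (\ref{eq22_1}). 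Under that reading there is no index term to dispose of, and your per-LED bounds complete the proof. Under your literal reading of (\ref{eq5}) — where the LED index carries part of the message — the obstacle you flag is real and is equally unaddressed by the paper: per-LED SNR dominance does not imply ${\cal I}(S;Y_{\rm B})\ge{\cal I}(S;Y_{\rm E})$, since identifiability of the index depends on the spread of the gains $\{h_{k,m}\}_{m}$ rather than their magnitudes, so dropping that term would indeed require a separate argument (or a redefinition of the rate, which is what the paper implicitly does). In short, your proof is the paper's proof stated more carefully; the gap you identify is inherited from the paper's formulation, not introduced by your approach.
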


\begin{proof}
See Appendix \ref{appa}.
\end{proof}

\begin{corollary}
 When the number of LEDs is one (i.e., $M = 1$), the SM scheme vanishes, and the secrecy rate bound in (\ref{eq6})
coincides with (8) in \cite{BIB13}.
\label{coro1}
\end{corollary}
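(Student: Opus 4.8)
The plan is to establish the corollary by direct specialization of the lower bound (\ref{eq6}) to $M=1$ and matching the resulting expression against (8) in \cite{BIB13}. First I would observe that with a single LED the transmitter-index alphabet is degenerate: the choice of active LED carries no information, so the spatial-modulation mechanism collapses and the system (\ref{eq1}) reduces to the point-to-point Gaussian wiretap channel $Y_{\rm B}=h_{{\rm B},1}X+Z_{\rm B}$, $Y_{\rm E}=h_{{\rm E},1}X+Z_{\rm E}$ under the non-negativity and average-intensity constraints (\ref{eq3_1}) and (\ref{eq3_3}). This is precisely the single-LED VLC setting analyzed in \cite{BIB13}, so at the level of the model the two problems coincide.

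Next I would carry out the substitution $M=1$ in (\ref{eq6}). The normalization $\frac{1}{2M}$ becomes $\frac{1}{2}$ and the summation $\sum_{m=1}^{M}$ retains only its $m=1$ term. Writing $h_{\rm B}:=h_{{\rm B},1}$ and $h_{\rm E}:=h_{{\rm E},1}$, the bound becomes
\begin{equation}
R_s \ge \frac{1}{2}\ln\left(\frac{\sigma_{\rm E}^2}{2\pi e\,\sigma_{\rm B}^2}\cdot\frac{e^2 h_{\rm B}^2\xi^2 P^2 + 2\pi e\,\sigma_{\rm B}^2}{h_{\rm E}^2\xi^2 P^2 + \sigma_{\rm E}^2}\right),
\end{equation}
which I would then compare term-by-term with (8) in \cite{BIB13}. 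The key structural remark is that Theorem \ref{them1} is itself obtained by averaging, over the $M$ equiprobable LED choices of the US scheme (\ref{eq4}), a per-LED lower bound whose functional form is exactly that of \cite{BIB13} applied to the channel pair $(h_{{\rm B},m},h_{{\rm E},m})$; at $M=1$ this average degenerates to its single summand, so the right-hand side is literally the \cite{BIB13} bound evaluated at $(h_{{\rm B},1},h_{{\rm E},1})$.

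I expect no genuine obstacle here, since the statement is a consistency check rather than a new derivation. The only point requiring any care is the bookkeeping of the prefactor: one must confirm that the $\frac{1}{2M}$ coefficient in (\ref{eq6}), which arises from averaging the per-LED mutual-information difference over the uniformly selected transmitters, collapses cleanly to the $\frac{1}{2}$ coefficient of \cite{BIB13} at $M=1$, with no residual index-entropy term surviving. This is immediate, because any $\ln M$ contribution from decoding the active-LED index enters the Bob and Eve mutual informations identically and cancels in the secrecy difference, vanishing altogether when $M=1$.
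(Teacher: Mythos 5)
Your proposal is correct and matches the paper's (implicit) argument: the corollary is proved simply by setting $M=1$ in (\ref{eq6}), so the prefactor $\frac{1}{2M}$ becomes $\frac{1}{2}$, the sum reduces to its single term, and the resulting expression is exactly the single-LED lower bound (8) of \cite{BIB13}. Your additional remarks on the collapse of the index alphabet and the vanishing of any $\ln M$ contribution are harmless consistency observations, not extra steps the paper needs.
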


\subsection{Upper Bound of Secrecy Rate}
\label{section3_2}
In this subsection,
the dual expression of the secrecy rate \cite{BIB19,BIB17,BIB20} is adopted to further analyze the upper bound on the secrecy rate.

To facilitate the derivation, eq. (\ref{eq1}) can be re-formulated as
\begin{equation}
\left\{ \begin{array}{l}
Y'_{{\rm B},m} = X + Z'_{{\rm B},m}\\
Y'_{{\rm E},m} = X + Z'_{{\rm E},m}
\end{array} \right.,
 \label{eq1_1}
\end{equation}
where $Y'_{{\rm B},m}=Y_{\rm B}/h_{{\rm B},m}$, $Y'_{{\rm E},m}=Y_{\rm E}/h_{{\rm E},m}$,
$Z'_{{\rm B},m}=Z_{\rm B}/h_{{\rm B},m}$, and $Z'_{{\rm E},m}=Z_{\rm E}/h_{{\rm E},m}$.

\begin{lemma}
The conditional mutual information ${\cal I}\left( {X;{Y'_{{\rm B},m}}\left| {{Y'_{{\rm E},m}}} \right.} \right)$ is upper-bounded by
\begin{equation}
{\cal I}\left( {X;{Y'_{{\rm B},m}}\left| {{Y'_{{\rm E},m}}} \right.} \right) \le {{\mathbb{E}}_{X{Y'_{{\rm E},m}}}}\left\{ u \right\},
\label{eq21}
\end{equation}
where $u$ denotes a relative entropy, and it is defined as
\begin{eqnarray}
 u = D\left({{f_{{Y'_{{\rm B},m}}\left| {X{Y'_{{\rm E},m}}} \right.}}\left( {{y'_{{\rm B},m}}\left| {X,{Y'_{{\rm E},m}}} \right.} \right)\left\| {{g_{{Y'_{{\rm B},m}}\left| {{Y'_{{\rm E},m}}} \right.}}\left( {{y'_{{\rm B},m}}\left| {{Y'_{{\rm E},m}}}  \right.} \right)} \right.} \right),
\end{eqnarray}
where ${g_{{Y'_{{\rm B},m}}\left| {{Y'_{{\rm E},m}}} \right.}}\left( {{y'_{{\rm B},m}}\left| {{Y'_{{\rm E},m}}} \right.} \right)$ is an arbitrary conditional PDF of $Y'_{{\rm B},m}$ given $Y'_{{\rm E},m}$.
\label{le1}
\end{lemma}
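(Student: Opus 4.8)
The claimed inequality is the conditional form of the variational (dual) characterization of mutual information, so the plan is to exhibit it as a single application of the non-negativity of relative entropy. The starting point is the exact identity that expresses the conditional mutual information as the expected relative entropy of the true forward channel against the \emph{true} output marginal,
\[
{\cal I}\left( {X;{Y'_{{\rm B},m}}\left| {{Y'_{{\rm E},m}}} \right.} \right) = {\mathbb{E}}_{X{Y'_{{\rm E},m}}}\!\left[ D\!\left( {f_{{Y'_{{\rm B},m}}|X{Y'_{{\rm E},m}}}} \, \| \, {f_{{Y'_{{\rm B},m}}|{Y'_{{\rm E},m}}}} \right) \right],
\]
which is nothing more than the definition of ${\cal I}(X;{Y'_{{\rm B},m}}|{Y'_{{\rm E},m}})$ rewritten with Kullback--Leibler divergences. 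Here $f_{{Y'_{{\rm B},m}}|{Y'_{{\rm E},m}}}$ is the genuine conditional density of ${Y'_{{\rm B},m}}$ given ${Y'_{{\rm E},m}}$, whereas the quantity $u$ in the statement places an \emph{arbitrary} kernel $g_{{Y'_{{\rm B},m}}|{Y'_{{\rm E},m}}}$ in the second slot of the divergence.

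Next I would form the gap between the asserted upper bound and the true value, namely ${\mathbb{E}}_{X{Y'_{{\rm E},m}}}[u] - {\cal I}(X;{Y'_{{\rm B},m}}|{Y'_{{\rm E},m}})$. Writing both expectations as integrals against the joint density $f_{X{Y'_{{\rm E},m}}{Y'_{{\rm B},m}}}$, the two $\ln f_{{Y'_{{\rm B},m}}|X{Y'_{{\rm E},m}}}$ contributions cancel and only the log-ratio $\ln\!\big(f_{{Y'_{{\rm B},m}}|{Y'_{{\rm E},m}}}/g_{{Y'_{{\rm B},m}}|{Y'_{{\rm E},m}}}\big)$ survives. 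Crucially, this log-ratio does not depend on $X$, so integrating out $X$ collapses the weight $f_{X{Y'_{{\rm E},m}}{Y'_{{\rm B},m}}}$ into the marginal joint $f_{{Y'_{{\rm E},m}}{Y'_{{\rm B},m}}}$. Factoring the latter as $f_{{Y'_{{\rm E},m}}} f_{{Y'_{{\rm B},m}}|{Y'_{{\rm E},m}}}$ then rewrites the gap exactly as
\[
{\mathbb{E}}_{X{Y'_{{\rm E},m}}}[u] - {\cal I}\left( {X;{Y'_{{\rm B},m}}\left| {{Y'_{{\rm E},m}}} \right.} \right) = {\mathbb{E}}_{{Y'_{{\rm E},m}}}\!\left[ D\!\left( {f_{{Y'_{{\rm B},m}}|{Y'_{{\rm E},m}}}} \, \| \, {g_{{Y'_{{\rm B},m}}|{Y'_{{\rm E},m}}}} \right) \right].
\]

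The right-hand side is an average of relative entropies and is therefore non-negative, which immediately yields (\ref{eq21}); equality holds precisely when $g_{{Y'_{{\rm B},m}}|{Y'_{{\rm E},m}}}$ agrees with the true conditional marginal for almost every value of ${Y'_{{\rm E},m}}$. I do not expect a genuine obstacle here, since the argument is a bookkeeping exercise in Fubini's theorem followed by one invocation of the non-negativity of relative entropy. The only step demanding care is the middle one, where I must verify that marginalizing over $X$ legitimately turns the $X$-weighted measure into the $({Y'_{{\rm B},m}},{Y'_{{\rm E},m}})$ joint and that the resulting inner integral is a bona fide relative entropy for each fixed ${Y'_{{\rm E},m}}$. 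This reduction is exactly what makes the arbitrary kernel $g$ useful: every admissible choice of $g$ furnishes a valid upper bound, and a suitably chosen Gaussian-type $g$ will subsequently be tuned to produce a closed-form estimate.
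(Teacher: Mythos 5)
Your proposal is correct and is essentially the paper's own argument: the paper likewise writes the conditional mutual information and $\mathbb{E}_{XY'_{{\rm E},m}}\{u\}$ as integrals against the joint density, identifies their difference as the (expected) relative entropy $D\bigl(f_{Y'_{{\rm B},m}|Y'_{{\rm E},m}}\,\big\|\,g_{Y'_{{\rm B},m}|Y'_{{\rm E},m}}\bigr)$, and concludes by non-negativity of relative entropy. The only cosmetic difference is that you subtract to isolate the gap and explicitly marginalize out $X$, whereas the paper adds the divergence term to the mutual information to reconstitute $\mathbb{E}_{XY'_{{\rm E},m}}\{u\}$.
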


\begin{proof}
See Appendix \ref{appc}.
\end{proof}

From \emph{Lemma \ref{le1}},
it can be observed that selecting an arbitrary ${g_{{Y'_{{\rm B},m}}\left| {{Y'_{{\rm E},m}}} \right.}}\left( {{y'_{{\rm B},m}}\left| {{Y'_{{\rm E},m}}} \right.} \right)$ in (\ref{eq21}) will result in an upper bound of ${\cal I}\left( {X;{Y'_{{\rm B},m}}\left| {{Y'_{{\rm E},m}}} \right.} \right)$.
Therefore, we have
\begin{equation}
{\cal I}\left( {X;{Y'_{{\rm B},m}}\left| {{Y'_{{\rm E},m}}} \right.} \right) \!=\! \mathop {\min }\limits_{{g_{{Y'_{{\rm B},m}}\!\left| {{Y'_{{\rm E},m}}} \right.}}\!\!\!\left( {{y'_{{\rm B},m}}\left| {{Y'_{{\rm E},m}}} \right.} \!\!\right)} {{\mathbb{E}}_{X{Y'_{{\rm E},m}}}}\!\!\!\left\{ u \right\}.
 \label{eq22}
\end{equation}

According to (\ref{eq1_1}) and (\ref{eq7}), $R_s$ can be re-expressed as
\begin{eqnarray}
{R_s} 
=\mathop {\max }\limits_{{f_X}\left( x \right)} \frac{1}{M} \sum\limits_{m = 1}^M {\cal I}(X;Y'_{{\rm B},m}|Y'_{{\rm E},m}).
\label{eq22_1}
\end{eqnarray}
Note that a unique input PDF ${f_{{X^*}}}\left( x \right)$ can be found to maximize $\sum_{m = 1}^M {\cal I}(X;Y'_{{\rm B},m}|Y'_{{\rm E},m})/M$ under constraints (\ref{eq3_1}) and (\ref{eq3_3}).
Therefore, $R_s$ in (\ref{eq22_1}) can be further written as \cite{BIC7}
\begin{equation}
{R_s} = \frac{1}{M} \sum\limits_{m = 1}^M \left\{\mathop {\min }\limits_{{g_{{Y'_{{\rm B},m}}\!\left| {{Y'_{{\rm E},m}}} \right.}}\!\!\!\left( {{y'_{{\rm B},m}}\left| {{Y'_{{\rm E},m}}} \right.} \!\!\right)}  {{\mathbb{E}}_{{X^*}{Y'_{{\rm E},m}}}}\left\{ u \right\}\right\},
 \label{eq23}
\end{equation}
where ${X^*}$ and ${f_{{X^*}}}\left( x \right)$ denote the optimal input and its PDF.

Consequently, we can get an upper bound of the secrecy rate as \cite{BIC7}
\begin{equation}
{R_s} \le \frac{1}{M} \sum\limits_{m = 1}^M  {{\mathbb{E}}_{{X^*}{Y'_{{\rm E},m}}}}\left\{ u \right\}.
 \label{eq24}
\end{equation}
By analyzing (\ref{eq24}), \emph{Theorem \ref{them2}} is obtained as follows.

\begin{theorem}
For the SM based VLC with constraints (\ref{eq3_1}) and (\ref{eq3_3}), the secrecy rate is upper-bounded by
\begin{eqnarray}
{R_s} \le  {\left\{ \begin{array}{l}
\frac{1}{M}\sum\limits_{m = 1}^M \ln \left[ {\frac{{4e\left( {\frac{{{\sigma _{\rm{B}}}}}{{\sqrt {2\pi } }} + \frac{{{h_{{\rm{B}},m}}\xi P}}{2}} \right)}}{{\sqrt {2\pi e\sigma _{\rm{B}}^2\left( {{\rm{1 + }}\frac{{\sigma _{\rm{B}}^2h_{{\rm{E}},m}^2}}{{\sigma _{\rm{E}}^2h_{{\rm{B}},m}^2}}} \right)} }}} \right],\;{\rm{if}}\;\sqrt {\frac{{\frac{{\sigma _{\rm{B}}^2}}{{h_{{\rm{B}},m}^2}} + \frac{{\sigma _{\rm{E}}^2}}{{h_{{\rm{E}},m}^2}}}}{{2\pi }}}  \ge \frac{{{\sigma _{\rm{B}}}}}{{\sqrt {2\pi } {h_{{\rm{B}},m}}}} + \frac{{\xi P}}{2}\\
\frac{1}{M}\sum\limits_{m = 1}^M \ln \left( {\frac{{2\sqrt e {h_{{\rm{B}},m}}{\sigma _{\rm{E}}}}}{{\pi {h_{{\rm{E}},m}}{\sigma _{\rm{B}}}}}} \right),\;{\rm{otherwise}}
\end{array} \right.}
 \label{eq25}
\end{eqnarray}
\label{them2}
\end{theorem}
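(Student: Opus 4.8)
The plan is to instantiate the dual upper bound (\ref{eq24}) with a carefully chosen auxiliary conditional density and then collapse the resulting expectation to a closed form that depends on the input only through the constraint data. First I would exploit the structure of (\ref{eq1_1}): conditioned on $X=x$, the normalized observations $Y'_{{\rm B},m}$ and $Y'_{{\rm E},m}$ are independent, since $Z'_{{\rm B},m}$ and $Z'_{{\rm E},m}$ are merely scaled copies of the independent noises $Z_{\rm B}$ and $Z_{\rm E}$. Hence the true conditional law reduces to $f_{Y'_{{\rm B},m}|XY'_{{\rm E},m}}(y|x,\cdot)=\mathcal N\!\left(x,\sigma_{\rm B}^2/h_{{\rm B},m}^2\right)$, independent of $Y'_{{\rm E},m}$, and the relative entropy $u$ in (\ref{eq21}) separates into a differential-entropy piece and a cross term,
\begin{equation}
u = -\frac{1}{2}\ln\!\left(\frac{2\pi e\,\sigma_{\rm B}^2}{h_{{\rm B},m}^2}\right) - {\mathbb{E}}_{Y'_{{\rm B},m}|X}\!\left[\ln g_{Y'_{{\rm B},m}|Y'_{{\rm E},m}}\!\left(Y'_{{\rm B},m}|Y'_{{\rm E},m}\right)\right].
\end{equation}

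The creative step is the choice of $g$. Since \emph{any} admissible $g$ yields a valid upper bound through (\ref{eq24}), I would take the auxiliary output law from the Lapidoth--Moser--Wigger family used for intensity channels: a density supported essentially on the positive axis whose logarithm is affine in the magnitude of $Y'_{{\rm B},m}$ (an exponential body with a Gaussian-smoothed edge), shifted by $Y'_{{\rm E},m}$ so as to use the eavesdropper's observation. With this choice the cross term $-{\mathbb{E}}[\ln g]$ becomes affine in a first absolute moment of $Y'_{{\rm B},m}$ plus a normalization constant, both available in closed form. The appearance of $\sigma_{\rm B}/\sqrt{2\pi}=\tfrac12{\mathbb{E}}|Z_{\rm B}|$ and of the term $h_{{\rm B},m}\xi P/2$ in (\ref{eq25}) is precisely the fingerprint of this construction.

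Next I would take the outer expectation ${\mathbb{E}}_{X^*Y'_{{\rm E},m}}$. Because the cross term is affine in the absolute moment, and because $Y'_{{\rm B},m}=X^*+Z'_{{\rm B},m}$ with $X^*\ge 0$, the triangle inequality lets me bound the first absolute moment by ${\mathbb{E}}[X^*]+{\mathbb{E}}|Z'_{{\rm B},m}|=\xi P+(\sigma_{\rm B}/h_{{\rm B},m})\sqrt{2/\pi}$. This is the key reduction: the unknown optimal input $f_{X^*}$ enters the final expression only through its mean $\xi P$, so the bound is fully determined by $h_{{\rm B},m}$, $h_{{\rm E},m}$, $\sigma_{\rm B}$, $\sigma_{\rm E}$ and $\xi P$, and no explicit knowledge of $X^*$ is needed. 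The $1/M$ averaging over the $M$ equiprobable activations of (\ref{eq4}) then reproduces the outer sum in (\ref{eq25}).

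The two branches come from a free scale/rate parameter in $g$. Optimizing this parameter to tighten the bound gives a stationarity condition whose solution is interior only when the aggregate noise spread dominates the mean, i.e. when $\sqrt{(\sigma_{\rm B}^2/h_{{\rm B},m}^2+\sigma_{\rm E}^2/h_{{\rm E},m}^2)/(2\pi)}\ge \sigma_{\rm B}/(\sqrt{2\pi}\,h_{{\rm B},m})+\xi P/2$, yielding the first line of (\ref{eq25}); otherwise the optimum sits at the boundary of the feasible parameter range and produces the second line. I expect the main obstacle to be exactly this joint requirement on $g$: it must be rich enough to make the relative entropy and its expectation closed-form while keeping the positive-support normalization tractable, and the ensuing parameter optimization must be shown to yield precisely the stated piecewise expression with the indicated threshold. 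Carefully handling the non-negativity of $X$ in the normalization of $g$ is what forces the two-regime structure rather than a single formula.
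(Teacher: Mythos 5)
Your proposal follows the same general architecture as the paper's Appendix C proof --- instantiate the dual bound (\ref{eq24}) with a Laplacian-type auxiliary density $g$ centered at $\mu y'_{{\rm E},m}$, use the triangle inequality so that the unknown optimal input enters only through $\mathbb{E}[X^*]=\xi P$, and optimize the free parameters --- but it contains one gap that prevents it from reaching (\ref{eq25}). The problem is your evaluation of the entropy part of $u$. You invoke the (true) conditional independence of $Y'_{{\rm B},m}$ and $Y'_{{\rm E},m}$ given $X$ to replace $f_{Y'_{{\rm B},m}|XY'_{{\rm E},m}}$ by ${\cal N}(x,\sigma_{\rm B}^2/h_{{\rm B},m}^2)$, so your entropy term is $-\frac12\ln(2\pi e\sigma_{\rm B}^2/h_{{\rm B},m}^2)$. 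The paper instead evaluates $I_1=-{\cal H}(Y'_{{\rm B},m}|X^*,Y'_{{\rm E},m})$ via the chain rule (\ref{eq27}) combined with the surrogate joint law (\ref{eq32_1}), in which $Y'_{{\rm E},m}$ is modeled as $Y'_{{\rm B},m}$ plus an \emph{independent} noise of variance $\sigma_{\rm B}^2/h_{{\rm B},m}^2+\sigma_{\rm E}^2/h_{{\rm E},m}^2$; this produces the extra factor $\bigl(1+\tfrac{h_{{\rm E},m}^2\sigma_{\rm B}^2}{h_{{\rm B},m}^2\sigma_{\rm E}^2}\bigr)$ in (\ref{eq31}). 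That factor is exactly what yields the denominator $\sqrt{2\pi e\sigma_{\rm B}^2(1+\sigma_{\rm B}^2h_{{\rm E},m}^2/(\sigma_{\rm E}^2 h_{{\rm B},m}^2))}$ in the first branch of (\ref{eq25}) and what collapses the second branch to $2\sqrt e\,h_{{\rm B},m}\sigma_{\rm E}/(\pi h_{{\rm E},m}\sigma_{\rm B})$. Your $I_2$ computation would coincide with the paper's (under either joint law the difference $Y'_{{\rm B},m}-Y'_{{\rm E},m}$ has variance $\sigma_{\rm B}^2/h_{{\rm B},m}^2+\sigma_{\rm E}^2/h_{{\rm E},m}^2$, so your threshold condition is right), hence your route terminates in a bound that exceeds (\ref{eq25}) by exactly $\frac12\ln\bigl(1+\tfrac{\sigma_{\rm B}^2 h_{{\rm E},m}^2}{\sigma_{\rm E}^2 h_{{\rm B},m}^2}\bigr)$ in every summand of both branches: a valid but strictly weaker inequality, so Theorem \ref{them2} as stated is not established. (One may question whether the paper's own step is self-consistent, since it combines ${\cal H}(Y'_{{\rm E},m}|X^*)$ computed from the true channel with ${\cal H}(Y'_{{\rm E},m}|X^*,Y'_{{\rm B},m})$ computed from (\ref{eq32_1}); but reproducing (\ref{eq25}) requires that step, and your plan has no substitute for it.)

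Two secondary mischaracterizations are worth flagging. First, the two branches of (\ref{eq25}) do not arise from an interior-versus-boundary optimum of a scale/rate parameter: in the paper the scale $s^2$ is always optimized at an interior stationary point, namely $s^2$ equal to twice the relevant moment bound, which is where the $4e(\cdot)$ factors come from in (\ref{eq40_0})--(\ref{eq40_1}). The branching instead comes from the centering coefficient $\mu$ in (\ref{eq34}): the minimum of $I_3=|\mu|\,a+|1-\mu|\,b$, with $a=\sqrt{(\sigma_{\rm B}^2/h_{{\rm B},m}^2+\sigma_{\rm E}^2/h_{{\rm E},m}^2)/(2\pi)}$ and $b=\sigma_{\rm B}/(\sqrt{2\pi}h_{{\rm B},m})+\xi P/2$, jumps between $\mu=0$ (ignore Eve's observation) and $\mu=1$ (center $g$ on Eve's observation) according to whether $a\ge b$, which is precisely the threshold in (\ref{eq25}). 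Second, the non-negativity of $X$ plays no role in normalizing $g$ and does not force the two-regime structure: the paper's $g$ in (\ref{eq32}) is a two-sided Laplacian supported on all of $\mathbb{R}$, and positivity of the input is used only in the moment bound $\mathbb{E}|Y'_{{\rm B},m}|\le\mathbb{E}[X^*]+\mathbb{E}|Z'_{{\rm B},m}|$.
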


\begin{proof}
See Appendix \ref{appb}.
\end{proof}

\begin{corollary}
When the number of LEDs $M$ is one, eq. (\ref{eq25})
is the same as (16) in \cite{BIB13}.
\label{coro2}
\end{corollary}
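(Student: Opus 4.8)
The plan is to obtain \emph{Corollary \ref{coro2}} as a direct specialization of \emph{Theorem \ref{them2}} rather than as an independent derivation. Because the bound in (\ref{eq25}) is the $1/M$-weighted average of $M$ per-LED terms---the weight $1/M$ originating from the US scheme in (\ref{eq4})---and each summand already carries the structural form of a single-link VLC wiretap bound, setting $M=1$ should collapse the whole expression onto the result reported in \cite{BIB13}.

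First I would substitute $M=1$ into (\ref{eq25}). This sets the prefactor $1/M$ equal to one and reduces the summation $\sum_{m=1}^M$ to its single $m=1$ term, so the two-branch structure survives but with every quantity evaluated at $m=1$. Since only one LED remains, its index is redundant, and I would relabel $h_{{\rm B},1}\to h_{\rm B}$ and $h_{{\rm E},1}\to h_{\rm E}$ to match the notation of \cite{BIB13}. The upper branch then becomes
\begin{equation}
\ln \left[ \frac{4e\left( \frac{\sigma_{\rm B}}{\sqrt{2\pi}} + \frac{h_{\rm B}\xi P}{2} \right)}{\sqrt{2\pi e\sigma_{\rm B}^2\left( 1 + \frac{\sigma_{\rm B}^2 h_{\rm E}^2}{\sigma_{\rm E}^2 h_{\rm B}^2} \right)}} \right],
\end{equation}
the lower branch becomes $\ln\big(2\sqrt e\,h_{\rm B}\sigma_{\rm E}/(\pi h_{\rm E}\sigma_{\rm B})\big)$, and the switching condition reduces to $\sqrt{(\sigma_{\rm B}^2/h_{\rm B}^2+\sigma_{\rm E}^2/h_{\rm E}^2)/(2\pi)}\ge \sigma_{\rm B}/(\sqrt{2\pi}\,h_{\rm B})+\xi P/2$.

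The remaining step---more a matter of careful bookkeeping than of genuine difficulty---is to verify term by term that these two branches and their switching condition coincide exactly with the two-case expression in (16) of \cite{BIB13}. I anticipate that the point requiring the most attention is the threshold inequality rather than the branch formulas: I would confirm that the condition above is algebraically identical to the case-splitting criterion in \cite{BIB13}, so that the same branch is selected under the same physical circumstances. Since the derivation of \emph{Theorem \ref{them2}} in Appendix \ref{appb} applies the argument of \cite{BIB13} to each virtual link $Y'_{{\rm B},m}=X+Z'_{{\rm B},m}$, $Y'_{{\rm E},m}=X+Z'_{{\rm E},m}$ in (\ref{eq1_1}) separately, this agreement is expected to hold identically, which completes the verification.
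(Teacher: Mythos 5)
Your proposal is correct and matches the paper's (implicit) argument exactly: the paper offers no separate proof for \emph{Corollary \ref{coro2}} precisely because it is the direct specialization you describe---set $M=1$ in (\ref{eq25}), the prefactor and summation collapse to a single term, the index is dropped via $h_{{\rm B},1}\to h_{\rm B}$, $h_{{\rm E},1}\to h_{\rm E}$, and both branches together with the switching condition coincide with (16) in \cite{BIB13}. Your extra care in checking the threshold inequality as well as the branch formulas is sound bookkeeping, not a deviation from the paper's route.
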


\subsection{Asymptotic Behavior Analysis}
\label{section3_3}
In a typical indoor VLC scenario, the received SNR is large (generally greater than 30 dB).
Therefore, we are more interested in the secrecy rate in the high SNR regime.
In this subsection, under constraints (\ref{eq3_1}) and (\ref{eq3_3}),
we analyze the asymptotic behaviors of the upper and lower bounds of the secrecy rate when $P$ tends to infinity.

By analyzing \emph{Theorem \ref{them1}}, when $P \to \infty $, we have
\begin{equation}
\mathop {\lim }\limits_{P \to \infty } {R_s} \ge \frac{1}{2} \ln \left(\frac{e}{2\pi} \right) + \frac{1}{2M} \sum\limits_{m = 1}^M   \ln \left( {\frac{{\sigma _{\rm{E}}^2h_{{\rm{B}},m}^2}}{{\sigma _{\rm{B}}^2}h_{{\rm{E}},m}^2}} \right).
 \label{eq42}
\end{equation}

By analyzing \emph{Theorem \ref{them2}}, when $P \to \infty $,
we have
\begin{equation}
\mathop {\lim }\limits_{P \to \infty } {R_s} \le \frac{1}{2} \ln \left(\frac{4e}{\pi^2} \right)+ \frac {1}{2M} \sum\limits_{m = 1}^M  \ln \left( {\frac{{{\sigma _{\rm{E}}^2} {h_{{\rm{B}},m}^2}}}{{{\sigma _{\rm{B}}^2} {h_{{\rm{E}},m}^2}}}} \right).
 \label{eq43}
\end{equation}

\begin{corollary}
For the SM based VLC under constraints (\ref{eq3_1}) and (\ref{eq3_3}),
the asymptotic behavior of the secrecy rate bounds in the high SNR regime is expressed as
\begin{eqnarray}
\frac{1}{2} \ln \left(\frac{e}{2\pi} \right) + \frac{1}{2M} \sum\limits_{m = 1}^M   \ln \left( {\frac{{\sigma _{\rm{E}}^2h_{{\rm{B}},m}^2}}{{\sigma _{\rm{B}}^2}h_{{\rm{E}},m}^2}} \right)  \le \mathop {\lim }\limits_{P \to \infty } {R_s} \le \frac{1}{2} \ln \left(\frac{4e}{\pi^2} \right)+ \frac {1}{2M} \sum\limits_{m = 1}^M  \ln \left( {\frac{{{\sigma _{\rm{E}}^2} {h_{{\rm{B}},m}^2}}}{{{\sigma _{\rm{B}}^2} {h_{{\rm{E}},m}^2}}}} \right).
\label{eq44}
\end{eqnarray}
\label{coro3}
\end{corollary}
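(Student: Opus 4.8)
The plan is to derive \emph{Corollary \ref{coro3}} directly from the two closed-form bounds already in hand, by evaluating their $P \to \infty$ limits and then stacking the results. Since the lower bound in \emph{Theorem \ref{them1}} bounds $R_s$ from below for every finite $P$, its limit bounds $\lim_{P\to\infty} R_s$ from below; likewise the upper bound in \emph{Theorem \ref{them2}} supplies the matching upper limit. Concatenating the two limits then yields the two-sided inequality (\ref{eq44}).

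First I would take the limit of the lower bound (\ref{eq6}). Inside each logarithm, both the numerator and the denominator of the inner ratio have the form (constant)$\,\times\xi^2P^2 + \,$(constant). As $P\to\infty$ the additive constants $2\pi e\sigma_{\rm B}^2$ and $\sigma_{\rm E}^2$ are dominated by the $\xi^2P^2$ terms, so the inner ratio converges to $e^2h_{{\rm B},m}^2/h_{{\rm E},m}^2$. Multiplying by the prefactor $\sigma_{\rm E}^2/(2\pi e\sigma_{\rm B}^2)$, taking the logarithm, and separating the constant factor from the channel-dependent factor gives exactly (\ref{eq42}), where the term $\frac{1}{2M}\sum_m \ln(e/(2\pi))$ collapses to $\frac12\ln(e/(2\pi))$.

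Next I would take the limit of the upper bound (\ref{eq25}). The decisive observation is that the threshold defining the first branch, namely $\sigma_{\rm B}/(\sqrt{2\pi}h_{{\rm B},m}) + \xi P/2$, grows linearly in $P$, whereas the quantity it is compared against is independent of $P$. Hence for all sufficiently large $P$ the first condition is violated for every $m$, and the \emph{otherwise} branch applies throughout. The limiting bound is therefore $\frac{1}{M}\sum_m \ln\!\big(2\sqrt{e}\,h_{{\rm B},m}\sigma_{\rm E}/(\pi h_{{\rm E},m}\sigma_{\rm B})\big)$. Rewriting the constant via $\ln(2\sqrt{e}/\pi) = \frac12\ln(4e/\pi^2)$ and the channel factor via $\ln(h_{{\rm B},m}\sigma_{\rm E}/(h_{{\rm E},m}\sigma_{\rm B})) = \frac12\ln(\sigma_{\rm E}^2h_{{\rm B},m}^2/(\sigma_{\rm B}^2h_{{\rm E},m}^2))$ reproduces (\ref{eq43}).

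I do not anticipate any genuine difficulty here, as both computations are elementary limits of explicit expressions. The one step that deserves care is the branch selection in the upper bound: one must justify that the linear divergence of $\xi P/2$ forces the \emph{otherwise} case in the limit, rather than attempting to pass to the limit inside the first branch. Once (\ref{eq42}) and (\ref{eq43}) are established, combining them gives (\ref{eq44}) and completes the proof.
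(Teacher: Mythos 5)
Your proposal is correct and follows exactly the paper's own route: the paper likewise obtains (\ref{eq42}) by letting $P \to \infty$ in the lower bound of \emph{Theorem \ref{them1}}, obtains (\ref{eq43}) from the upper bound of \emph{Theorem \ref{them2}} (where, as you note, the growth of $\xi P/2$ forces the \emph{otherwise} branch for large $P$), and combines the two limits to get (\ref{eq44}). Your treatment of the branch selection is in fact slightly more explicit than the paper's, which passes over that point silently.
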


\begin{remark}
In \emph{Corollary \ref{coro3}}, the asymptotic lower and upper bounds on secrecy rate do not coincide,
and their difference is $0.5\ln[4e/(\pi^2)]-0.5\ln[e/(2\pi)] \approx 0.4674$ nat/transmission.
In other words, the asymptotic performance gap is small.
\label{rem1}
\end{remark}

\section{Secrecy Rate for SM Based VLC with Constraints (\ref{eq3_1}), (\ref{eq3_2}) and (\ref{eq3_3})}
\label{section4}
By adding a peak optical intensity constraint (\ref{eq3_2}),
the secrecy rate bounds and the asymptotic behaviors at high SNR for the SM based VLC will be further analyzed.

Similarly, when $h_{{\rm B},m}/\sigma_{\rm B} < h_{{\rm E},m}/\sigma_{\rm E} $, the secrecy rate is zero.
When $h_{{\rm B},m}/\sigma_{\rm B} \geq h_{{\rm E},m}/\sigma_{\rm E} $, the secrecy rate for SM based VLC with constraints (\ref{eq3_1}), (\ref{eq3_2}) and (\ref{eq3_3}) derived by solving
\begin{eqnarray}
&&{R_s} = \mathop {\max }\limits_{{f_X}\left( x \right)} \left[ {{\cal I}(X,{h_{\rm{B}}};{Y_{\rm{B}}}) - {\cal I}(X,{h_{\rm{E}}};{Y_{\rm{E}}})} \right] \nonumber \\
{\rm{s}}{\rm{.t}}{\rm{.}}&&\int_0^A  {{f_X}\left( x \right){\rm{d}}x}  = 1 \nonumber \\
&&{\mathbb{E}}\left( X \right) = \int_0^A  {x{f_X}\left( x \right){\rm{d}}x}  = \xi P.
\label{p2}
\end{eqnarray}
Note that it is also challenging to obtain a closed-form solution for problem (\ref{p2}).
Similarly, tight secrecy rate bounds will be analyzed in the following subsections.

\subsection{Lower Bound of Secrecy Rate}
\label{section4_1}
At first, we define the average to peak optical intensity ratio as $\alpha= \xi P/A$.
By analyzing problem (\ref{p2}),
a lower bound on secrecy rate for SM based VLC with constraints (\ref{eq3_1}), (\ref{eq3_2}) and (\ref{eq3_3}) is obtained in the following theorem.

\begin{theorem}
For the SM based VLC with constraints (\ref{eq3_1}), (\ref{eq3_2}) and (\ref{eq3_3}), the secrecy rate is lower-bounded by
\begin{equation}
{R_s} \ge \left\{ \begin{array}{l}
\frac{1}{{2M}} \sum\limits_{m = 1}^M {\ln \left[ \frac{3\sigma_{\rm E}^2 \left(A^2 {h_{{\rm{B}},m}^2} + 2\pi e\sigma _{\rm{B}}^2\right)}{{2\pi e \sigma_{\rm B}^2 \left( h_{{\rm E},m}^2 \xi^2 P^2 + 3\sigma_{\rm E}^2 \right)}} \right]},\; {\rm if}\; \alpha=0.5\\
\frac{1}{{2M}}\sum\limits_{m = 1}^M \ln \left\{ \frac{\sigma_{\rm E}^2 \left[h_{{\rm B},m}^2 e^{-2c \xi P} \left(\frac{e^{cA}-1}{c} \right)^2 + 2 \pi e \sigma_{\rm B}^2 \right] }{2 \pi e \sigma_{\rm B}^2 \left[\frac{h_{{\rm E},m}^2  A(cA-2)}{c(1-e^{-cA})} + \frac{2 h_{{\rm E},m}^2 }{c^2}- h_{{\rm E},m}^2 \xi^2 P^2 + \sigma_{\rm E}^2 \right] }\right\},{\rm if}\; \alpha \neq 0.5\; {\rm and}\; \alpha \in (0,1]
\end{array} \right.
\label{equ1}
\end{equation}
where $c$ can be obtained by solving the following equation
\begin{equation}
\alpha  = \frac{1}{{1 - {e^{ - cA}}}} - \frac{1}{{cA}}.
\label{equ2}
\end{equation}
\label{them3}
\end{theorem}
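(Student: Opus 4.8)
The plan is to obtain the bound by evaluating the per-channel secrecy rate at a single, well-chosen admissible input distribution: since $R_s$ in (\ref{p2}) is defined as a maximum over $f_X(x)$, any distribution satisfying the non-negativity, peak, and average constraints yields a valid lower bound. Because the US scheme activates each LED with probability $1/M$ and the condition $h_{{\rm B},m}/\sigma_{\rm B}\ge h_{{\rm E},m}/\sigma_{\rm E}$ makes every sub-channel a degraded wiretap channel, the objective reduces to the average $\frac{1}{M}\sum_{m=1}^M[\mathcal{I}(X;Y_{{\rm B},m})-\mathcal{I}(X;Y_{{\rm E},m})]$ of the per-channel secrecy rates driven by a common input $X$. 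It therefore suffices to lower-bound each summand for a fixed $f_X(x)$ and then average.

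For each $m$ I would bound the two mutual informations in opposite directions, following the single-channel argument of \cite{BIB13}. Writing $\mathcal{I}(X;Y_{{\rm B},m})=h(Y_{{\rm B},m})-\frac{1}{2}\ln(2\pi e\sigma_{\rm B}^2)$ and applying the entropy power inequality to $Y_{{\rm B},m}=h_{{\rm B},m}X+Z_{\rm B}$ gives $h(Y_{{\rm B},m})\ge\frac{1}{2}\ln(h_{{\rm B},m}^2e^{2h(X)}+2\pi e\sigma_{\rm B}^2)$, where $h(X)$ is the differential entropy of the input. For Eve I would instead upper-bound $h(Y_{{\rm E},m})$ by the Gaussian entropy of the same variance, $h(Y_{{\rm E},m})\le\frac{1}{2}\ln[2\pi e(h_{{\rm E},m}^2\mathrm{Var}(X)+\sigma_{\rm E}^2)]$. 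Subtracting the respective noise entropies yields the per-channel bound
\begin{equation}
\mathcal{I}(X;Y_{{\rm B},m})-\mathcal{I}(X;Y_{{\rm E},m})\ge\frac{1}{2}\ln\left[\frac{\sigma_{\rm E}^2\left(h_{{\rm B},m}^2e^{2h(X)}+2\pi e\sigma_{\rm B}^2\right)}{2\pi e\sigma_{\rm B}^2\left(h_{{\rm E},m}^2\mathrm{Var}(X)+\sigma_{\rm E}^2\right)}\right],
\end{equation}
expressed entirely through $h(X)$ and $\mathrm{Var}(X)$.

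It then remains to evaluate $h(X)$ and $\mathrm{Var}(X)$ for a distribution on $[0,A]$ with mean $\xi P$, and I would select the maximum-entropy one, namely the truncated exponential $f_X(x)=ce^{cx}/(e^{cA}-1)$. Enforcing $\mathbb{E}(X)=\xi P$ produces precisely the defining relation (\ref{equ2}) for $c$, while a short calculation gives $h(X)=\ln[(e^{cA}-1)/c]-c\xi P$, hence $e^{2h(X)}=e^{-2c\xi P}[(e^{cA}-1)/c]^2$, and two integrations by parts give $\mathbb{E}(X^2)=A(cA-2)/[c(1-e^{-cA})]+2/c^2$, so that $\mathrm{Var}(X)=\mathbb{E}(X^2)-\xi^2P^2$ matches the bracketed denominator term in (\ref{equ1}). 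Substituting into the per-channel bound and averaging over $m$ yields the $\alpha\neq0.5$ branch. The case $\alpha=0.5$ is the degenerate limit $c\to0$, in which the truncated exponential collapses to the uniform distribution on $[0,A]$ and the $1/c$ terms become singular; I would therefore treat it directly using $e^{2h(X)}=A^2$ and $\mathrm{Var}(X)=A^2/12$, and since $\xi P=A/2$ implies $3(h_{{\rm E},m}^2\mathrm{Var}(X)+\sigma_{\rm E}^2)=h_{{\rm E},m}^2\xi^2P^2+3\sigma_{\rm E}^2$, carrying the factor $3$ through the logarithm reproduces the stated expression.

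The main obstacle is the second-moment computation for the truncated exponential: verifying that $\mathbb{E}(X^2)-\xi^2P^2$ collapses exactly to $A(cA-2)/[c(1-e^{-cA})]+2/c^2-\xi^2P^2$ requires two integrations by parts together with the identity $e^{cA}/(e^{cA}-1)=1/(1-e^{-cA})$, after which the remaining substitutions and the factor-of-three bookkeeping in the uniform case are purely mechanical.
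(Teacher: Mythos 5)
Your proposal is correct and follows essentially the same route as the paper's Appendix D: the per-channel decomposition, the entropy-power-inequality lower bound on Bob's output entropy, the Gaussian maximum-entropy upper bound on Eve's, and the maximum-entropy input choices (uniform for $\alpha=0.5$, truncated exponential $ce^{cx}/(e^{cA}-1)$ otherwise) with the mean constraint yielding (\ref{equ2}) are exactly the paper's steps, and your moment/entropy computations match (\ref{eqd3}) and (\ref{eqd6}). The only cosmetic difference is that you correctly state Eve's entropy bound as an inequality (the paper writes it as an equality) and you view the $\alpha=0.5$ branch as the $c\to 0$ limit, which the paper treats as a separate case.
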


\begin{proof}
See Appendix \ref{appd}.
\end{proof}

\begin{corollary}
When the number of LEDs $M$ is one, eq. (\ref{equ1})
is the same as (20) in \cite{BIB13}.
\label{coro4}
\end{corollary}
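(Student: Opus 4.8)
The plan is to verify the claimed reduction by direct substitution of $M=1$ into the lower bound established in Theorem~\ref{them3}. When only a single LED is present, the spatial modulation degenerates---there is no transmitter index to convey---so the system collapses to the conventional point-to-point VLC wiretap channel of \cite{BIB13}, and the claim is that the bound (\ref{equ1}) recovers equation (20) there.

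First I would set $M=1$ in (\ref{equ1}). The outer summation $\sum_{m=1}^M$ then reduces to its single $m=1$ term and the prefactor $1/(2M)$ becomes $1/2$. Since there is now only one transmitter, the per-link gains $h_{{\rm B},1}$ and $h_{{\rm E},1}$ may be written without the index as $h_{\rm B}$ and $h_{\rm E}$, matching the single-link notation of \cite{BIB13}. I would then handle the two branches separately: for $\alpha=0.5$ the argument of the logarithm becomes $3\sigma_{\rm E}^2(A^2 h_{\rm B}^2 + 2\pi e \sigma_{\rm B}^2)/[2\pi e \sigma_{\rm B}^2(h_{\rm E}^2 \xi^2 P^2 + 3\sigma_{\rm E}^2)]$, while for $\alpha\neq 0.5$ the expression involves the parameter $c$ defined implicitly by (\ref{equ2}). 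In each case I would confirm that the resulting single-term bound coincides with the corresponding branch of equation (20) in \cite{BIB13}, observing that the defining relation (\ref{equ2}) for $c$ is independent of $M$ and hence carries over unchanged.

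The only point requiring care is notational: I would align the symbols used here (the per-transmitter gains, the ratio $\alpha=\xi P/A$, and the implicitly defined $c$) with those of \cite{BIB13} and check that no additional normalization has been absorbed into the single-transmitter formula. No genuinely new analysis is needed, since every term in Theorem~\ref{them3} specializes directly to the $M=1$ case; the corollary therefore follows immediately from the substitution.
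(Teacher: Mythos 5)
Your proposal is correct and matches the paper's treatment: the paper offers no separate derivation for Corollary~\ref{coro4}, treating it exactly as you do---as an immediate specialization of Theorem~\ref{them3} obtained by setting $M=1$, collapsing the sum and the $1/(2M)$ prefactor, dropping the transmitter index, and noting that the defining equation (\ref{equ2}) for $c$ does not involve $M$. Nothing further is needed.
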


\subsection{Upper Bound of Secrecy Rate}
\label{section4_2}
In this subsection,
the dual expression of the secrecy rate \cite{BIB19,BIB17,BIB20} is also utilized to analyze the upper bound of the secrecy rate.
For this scenario, eq. (\ref{eq24}) can also be derived.
By analyzing (\ref{eq24}), \emph{Theorem \ref{them4}} is obtained.

\begin{theorem}
For the SM based VLC with constraints (\ref{eq3_1}), (\ref{eq3_2}) and (\ref{eq3_3}), the secrecy rate is upper-bounded by
\begin{eqnarray}
{R_s} \le \frac{{\rm{1}}}{{{\rm{2}}M}}\!\!\sum\limits_{m = 1}^M {\ln \!\!\left[\! {\frac{{\left( {\frac{{h_{{\rm{E}},m}^2}}{{h_{{\rm{B}},m}^2}}\sigma _{\rm{B}}^{\rm{2}}{\rm{ + }}\sigma _{\rm{E}}^{\rm{2}}} \right)\left( {h_{{\rm{B}},m}^2A\xi P{\rm{ + }}\sigma _{\rm{B}}^{\rm{2}}} \right)}}{{\sigma _{\rm{B}}^{\rm{2}}\!\!\left(\! {h_{{\rm{E}},m}^2\!A\xi P \!+\! 2\frac{{h_{{\rm{E}},m}^2}}{{h_{{\rm{B}},m}^2}}\sigma _{\rm{B}}^{\rm{2}} \!+\! \sigma _{\rm{E}}^{\rm{2}}}\! \right)\!\!\!\left(\! {1 \!+\! \frac{{h_{{\rm{E}},m}^2\sigma _{\rm{B}}^{\rm{2}}}}{{h_{{\rm{B}},m}^2\sigma _{\rm{E}}^{\rm{2}}}}}\! \right)}}}\!\! \right]}.
\label{equ4}
\end{eqnarray}
\label{them4}
\end{theorem}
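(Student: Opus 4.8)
The plan is to obtain the bound directly from the dual representation, which has already reduced the problem to evaluating a single expectation per transmitter index $m$. Starting from eq. (\ref{eq24}), namely $R_s\le\frac1M\sum_{m=1}^M\mathbb{E}_{X^*Y'_{E,m}}\{u\}$, I note that this inequality is valid for \emph{every} admissible choice of the auxiliary conditional density $g_{Y'_{B,m}|Y'_{E,m}}$ of \emph{Lemma \ref{le1}}, so the entire task is to pick a tractable $g$ and evaluate the relative entropy $u$ in closed form. Using the normalized model (\ref{eq1_1}) together with the independence of the noises $Z'_{B,m}$ and $Z'_{E,m}$, the true conditional law $f_{Y'_{B,m}|X,Y'_{E,m}}$ is the Gaussian $\mathcal{N}(X,\sigma_B^2/h_{B,m}^2)$, whence $u=-\frac12\ln(2\pi e\sigma_B^2/h_{B,m}^2)-\mathbb{E}_{f}[\ln g]$, the first term being minus the differential entropy of that Gaussian.

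Next I would choose $g_{Y'_{B,m}|Y'_{E,m}}$ so that $\mathbb{E}\{-\ln g\}$ becomes a quadratic functional of the first two moments of $X^*$. The natural candidate is a density whose logarithm is affine-quadratic in $(y'_{B,m},y'_{E,m})$, i.e. a Gaussian-type law in $y'_{B,m}$ centred at a linear function of $y'_{E,m}$ with a free scale parameter. For such a $g$, carrying out the inner Gaussian expectation over $Y'_{B,m}$ given $(X,Y'_{E,m})$ turns $\mathbb{E}\{-\ln g\}$ into $\frac12\ln(2\pi\beta_m)+\frac{\sigma_B^2/h_{B,m}^2+\mathbb{E}[(X-\mu_m)^2]}{2\beta_m}$, where $\mu_m$ is the affine centre and $\beta_m$ the scale. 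The cross term $\mathbb{E}[(X-\mu_m)^2]$ then expands into a combination of $\mathbb{E}[X^2]$, $\mathbb{E}[X]$ and $\sigma_E^2/h_{E,m}^2$.

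The peak constraint (\ref{eq3_2}) enters at exactly this point and is the device that removes the unknown optimal input. Since $0\le X\le A$ forces $X^2\le AX$, taking expectations gives $\mathbb{E}[X^2]\le A\,\mathbb{E}[X]=A\xi P$ by (\ref{eq3_3}); this is the only place where $A$ appears, and it is what produces the factor $h_{B,m}^2A\xi P+\sigma_B^2$ in (\ref{equ4}). Substituting this bound (legitimately, since the coefficient of $\mathbb{E}[X^2]$ is nonnegative) and then fixing $\mu_m,\beta_m$ to the values dictated by the second-order statistics of the normalized channels should collapse each summand to a ratio of the variance-like quantities $\sigma_B^2/h_{B,m}^2$, $\sigma_E^2/h_{E,m}^2$ and $A\xi P$. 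Restoring the original gains then reproduces the bracketed expression in (\ref{equ4}), and the prefactor $\frac1{2M}\sum_{m=1}^M$ comes from the uniform-selection weight $1/M$ in (\ref{eq22_1}).

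The step I expect to be the main obstacle is the precise selection of the test density $g$. A crude Gaussian predictor of $Y'_{B,m}$ from $Y'_{E,m}$ already yields a valid upper bound, but its constant is looser than the one in (\ref{equ4}); matching the stated closed form requires tuning the centre and scale of $g$ (and possibly departing from a plain Gaussian) so that the exponential residual in the cross-entropy cancels and the logarithm reduces to the advertised rational function. A secondary, routine point is to invoke the assumed ordering $h_{B,m}/\sigma_B\ge h_{E,m}/\sigma_E$ to ensure each bracketed argument exceeds one, so that the bound is nonnegative and consistent with a genuine secrecy rate. Once the form of $g$ is fixed, the remaining work is elementary moment computation and algebraic simplification.
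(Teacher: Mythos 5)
Your skeleton is the paper's own: start from the dual bound (\ref{eq24}), take a Gaussian test density $g$ centred at a linear function of $y'_{{\rm E},m}$ with a free scale, invoke $0\le X\le A$ together with (\ref{eq3_3}) to get $\mathbb{E}[X^2]\le A\xi P$ (this is exactly the paper's step (\ref{eqe3})), and then optimize the two free parameters as in (\ref{eqe5}). However, there is a genuine gap, and it sits precisely where you hedge at the end. You evaluate both the conditional-entropy term and the cross-entropy term under the true joint law of $(Y'_{{\rm B},m},Y'_{{\rm E},m})$ given $X$, i.e.\ with independent noises, so that $f_{Y'_{{\rm B},m}|X,Y'_{{\rm E},m}}=\mathcal{N}(X,\sigma_{\rm B}^2/h_{{\rm B},m}^2)$ and your entropy term is $-\frac12\ln(2\pi e\sigma_{\rm B}^2/h_{{\rm B},m}^2)$. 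Carry your program to completion with $\tilde\sigma_{\rm B}^2=\sigma_{\rm B}^2/h_{{\rm B},m}^2$ and $\tilde\sigma_{\rm E}^2=\sigma_{\rm E}^2/h_{{\rm E},m}^2$: one gets $\mathbb{E}[(Y'_{{\rm B},m}-\mu Y'_{{\rm E},m})^2]=(1-\mu)^2\mathbb{E}[X^2]+\tilde\sigma_{\rm B}^2+\mu^2\tilde\sigma_{\rm E}^2$, and after applying $\mathbb{E}[X^2]\le A\xi P$ and minimizing over $(\mu,s^2)$ (the minimum is attained at $\mu^*=\frac{A\xi P}{A\xi P+\tilde\sigma_{\rm E}^2}$), each summand collapses to $\frac12\ln\bigl[1+\frac{A\xi P\,\tilde\sigma_{\rm E}^2}{\tilde\sigma_{\rm B}^2(A\xi P+\tilde\sigma_{\rm E}^2)}\bigr]$. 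That is a valid upper bound, but it is not (\ref{equ4}): in the same normalized variables, the summand of (\ref{equ4}) equals $\frac12\ln\bigl[\frac{\tilde\sigma_{\rm E}^2(A\xi P+\tilde\sigma_{\rm B}^2)}{\tilde\sigma_{\rm B}^2(A\xi P+2\tilde\sigma_{\rm B}^2+\tilde\sigma_{\rm E}^2)}\bigr]$, which is strictly smaller; e.g.\ as $A\to\infty$ your expression tends to $\frac12\ln(1+\tilde\sigma_{\rm E}^2/\tilde\sigma_{\rm B}^2)$ while (\ref{equ4}) tends to $\frac12\ln(\tilde\sigma_{\rm E}^2/\tilde\sigma_{\rm B}^2)$, consistent with (\ref{equ7}). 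Since the minimization over your Gaussian family (even with an affine centre) has a unique closed-form optimum, no further ``tuning of the centre and scale'' can close this gap.

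What produces (\ref{equ4}) in the paper is not a cleverer $g$ but a different joint law under which all expectations are taken: in (\ref{eq32_1}) the paper factors $f_{Y'_{{\rm B},m}Y'_{{\rm E},m}|X}$ as $f_{Y'_{{\rm B},m}|X}$ times a Gaussian in $y'_{{\rm E},m}-y'_{{\rm B},m}$ with variance $\tilde\sigma_{\rm B}^2+\tilde\sigma_{\rm E}^2$ independent of $(X,Y'_{{\rm B},m})$. Both the paper's $I_1$ (via (\ref{eq30})--(\ref{eq31}), which is where the factor $1+\frac{h_{{\rm E},m}^2\sigma_{\rm B}^2}{h_{{\rm B},m}^2\sigma_{\rm E}^2}$ originates) and its $I_2$ (via (\ref{eqe2}), which is where the term $2\frac{h_{{\rm E},m}^2}{h_{{\rm B},m}^2}\sigma_{\rm B}^2$ originates) are computed under that law; under your conditionally independent law, $\mathcal{H}(Y'_{{\rm E},m}|X,Y'_{{\rm B},m})=\mathcal{H}(Y'_{{\rm E},m}|X)$ and these extra terms never appear. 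So to prove the theorem \emph{as stated} you would have to adopt the paper's joint density and justify evaluating the dual bound under it --- a nontrivial point, since that density does not preserve Eve's marginal $f_{Y'_{{\rm E},m}|X}=\mathcal{N}(X,\tilde\sigma_{\rm E}^2)$ and hence is not an innocuous change of coupling --- whereas your current derivation provably terminates at a different, looser closed form.
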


\begin{proof}
See Appendix \ref{appe}.
\end{proof}

\begin{corollary}
 When the number of LEDs $M$ is one, eq. (\ref{equ4}) reduces to
(26) in \cite{BIB13}.
\label{coro5}
\end{corollary}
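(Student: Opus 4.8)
The plan is to obtain the claim by a direct specialization of \emph{Theorem \ref{them4}} to a single transmitter, followed by an elementary algebraic reduction; no new probabilistic argument is needed, since (\ref{equ4}) already provides the bound for arbitrary $M$. First I would set $M=1$ in the right-hand side of (\ref{equ4}). The outer prefactor $\frac{1}{2M}$ then becomes $\frac{1}{2}$, and the sum $\sum_{m=1}^{M}$ collapses to its unique term $m=1$. Since there is now only one LED, I would drop the transmitter index and write $h_{{\rm B},1}=h_{\rm B}$ and $h_{{\rm E},1}=h_{\rm E}$, so that (\ref{equ4}) becomes a single logarithm.

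Next I would carry out the cancellation inside that logarithm. Introducing the normalized per-link noise variances $\sigma_{\rm B}^2/h_{\rm B}^2$ and $\sigma_{\rm E}^2/h_{\rm E}^2$ (these are exactly the noise variances of the equivalent channels $Z'_{{\rm B},1}$ and $Z'_{{\rm E},1}$ defined after (\ref{eq1_1})), I would factor $h_{\rm E}^2$ out of $\frac{h_{\rm E}^2}{h_{\rm B}^2}\sigma_{\rm B}^2+\sigma_{\rm E}^2$ and out of $h_{\rm E}^2 A\xi P+2\frac{h_{\rm E}^2}{h_{\rm B}^2}\sigma_{\rm B}^2+\sigma_{\rm E}^2$, factor $h_{\rm B}^2$ out of $h_{\rm B}^2 A\xi P+\sigma_{\rm B}^2$ and out of $\sigma_{\rm B}^2$, and rewrite $1+\frac{h_{\rm E}^2\sigma_{\rm B}^2}{h_{\rm B}^2\sigma_{\rm E}^2}$ as $\big(\sigma_{\rm B}^2/h_{\rm B}^2+\sigma_{\rm E}^2/h_{\rm E}^2\big)/\big(\sigma_{\rm E}^2/h_{\rm E}^2\big)$. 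The factors $h_{\rm B}^2 h_{\rm E}^2$ and the common factor $\big(\sigma_{\rm B}^2/h_{\rm B}^2+\sigma_{\rm E}^2/h_{\rm E}^2\big)$ then cancel between numerator and denominator, leaving
\[
R_s \le \frac{1}{2}\ln\!\left[\frac{\frac{\sigma_{\rm E}^2}{h_{\rm E}^2}\big(A\xi P+\frac{\sigma_{\rm B}^2}{h_{\rm B}^2}\big)}{\frac{\sigma_{\rm B}^2}{h_{\rm B}^2}\big(A\xi P+2\frac{\sigma_{\rm B}^2}{h_{\rm B}^2}+\frac{\sigma_{\rm E}^2}{h_{\rm E}^2}\big)}\right].
\]

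Finally I would match this with (26) in \cite{BIB13}. Because for $M=1$ the SM averaging is vacuous, the model (\ref{eq1}) reduces exactly to the single-LED Gaussian wiretap channel of \cite{BIB13} under the same peak and average optical-intensity constraints, so the two upper bounds must coincide; the remaining task is purely notational, namely to confirm that the grouping of $h_{\rm B},h_{\rm E},\sigma_{\rm B},\sigma_{\rm E}$ displayed above is the one reported in \cite{BIB13}. The only mild obstacle is this bookkeeping---verifying that the factor $1+h_{\rm E}^2\sigma_{\rm B}^2/(h_{\rm B}^2\sigma_{\rm E}^2)$ cancels as claimed and that no stray power of $h_{\rm B}^2$ or $h_{\rm E}^2$ survives---after which the identity with (26) in \cite{BIB13} is immediate.
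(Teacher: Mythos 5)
Your proposal is correct and takes essentially the same route as the paper, which states \emph{Corollary \ref{coro5}} without a separate proof precisely because it follows by setting $M=1$ in (\ref{equ4}), collapsing the sum to its single term, and dropping the transmitter index so that the expression coincides with the single-LED bound of \cite{BIB13}. Your additional rewriting in terms of the normalized noise variances $\sigma_{\rm B}^2/h_{\rm B}^2$ and $\sigma_{\rm E}^2/h_{\rm E}^2$ is algebraically sound (the common factor $\sigma_{\rm B}^2/h_{\rm B}^2+\sigma_{\rm E}^2/h_{\rm E}^2$ does cancel as you claim), and is a harmless refinement on top of the direct substitution the paper relies on.
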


\subsection{Asymptotic Behavior Analysis}
\label{section4_3}
In subsections \ref{section4_1} and \ref{section4_2},
the lower and upper bounds on secrecy rate for the SM based VLC with constraints (\ref{eq3_1}), (\ref{eq3_2}) and (\ref{eq3_3}) are derived.
In this subsection, the asymptotic behavior of the secrecy rate when $A$ tends to infinity will be analyzed.

By analyzing \emph{Theorem \ref{them3}}, when $\alpha=0.5$, we have $A=2\xi P$.
Therefore, eq. (\ref{equ1}) can be further written as
\begin{equation}
{R_s} \ge\frac{1}{{2M}} \sum\limits_{m = 1}^M \!\!{\ln \left[ \frac{3\sigma_{\rm E}^2 \left(A^2 {h_{{\rm{B}},m}^2} + 2\pi e\sigma _{\rm{B}}^2\right)}{{2\pi e \sigma_{\rm B}^2 \left( h_{{\rm E},m}^2 \frac{A^2}{4}  + 3\sigma_{\rm E}^2 \right)}} \right]}.
\end{equation}
Then, we can get
\begin{equation}
\mathop {\lim }\limits_{A \to \infty } {R_s} \ge \frac{1}{M}\sum\limits_{m = 1}^M {\ln \left(\sqrt{\frac{6}{\pi e}} {\frac{{{h_{{\rm{B}},m}}{\sigma _{\rm{E}}}}}{{{h_{{\rm{E}},m}}{\sigma _{\rm{B}}}}}} \right)}.
 \label{equ6}
\end{equation}

By analyzing \emph{Theorem \ref{them4}}, when $A \to \infty$, we have
 \begin{equation}
\mathop {\lim }\limits_{A \to \infty } {R_s} \le \frac{1}{M}\sum\limits_{m = 1}^M {\ln \left( {\frac{{{h_{{\rm{B}},m}}{\sigma _{\rm{E}}}}}{{{h_{{\rm{E}},m}}{\sigma _{\rm{B}}}}}} \right)}.
 \label{equ7}
\end{equation}

\begin{corollary}
For the SM based VLC with constraints (\ref{eq3_1}), (\ref{eq3_2}) and (\ref{eq3_3}),
the asymptotic behavior of the secrecy rate when $\alpha=0.5$ is given by
\begin{eqnarray}
\frac 12 \ln \left( {\frac{{\rm{6}}}{{\pi e}}}\right) + \frac{1}{M}\sum\limits_{m = 1}^M {\ln \left( { \frac{{{h_{{\rm{B}},m}}{\sigma _{\rm{E}}}}}{{{h_{{\rm{E}},m}}{\sigma _{\rm{B}}}}}} \right)}  \le \mathop {\lim }\limits_{A \to \infty } {R_s} \le \frac{1}{M}\sum\limits_{m = 1}^M {\ln \left( {\frac{{{h_{{\rm{B}},m}}{\sigma _{\rm{E}}}}}{{{h_{{\rm{E}},m}}{\sigma _{\rm{B}}}}}} \right)}.
\label{equ8}
\end{eqnarray}
\label{coro6}
\end{corollary}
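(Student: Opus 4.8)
The plan is to establish \emph{Corollary \ref{coro6}} by simply specializing the two asymptotic bounds already derived to the regime $\alpha=0.5$ and taking the limit $A\to\infty$. Since the lower bound in \emph{Theorem \ref{them3}} splits into two cases, I would first observe that $\alpha=0.5$ forces the case $A=2\xi P$, so the relevant lower-bound expression is the first branch of (\ref{equ1}) with $\xi^2 P^2$ replaced by $A^2/4$. Taking $A\to\infty$ in that closed form, the additive noise terms $2\pi e\sigma_{\rm B}^2$ and $3\sigma_{\rm E}^2$ become negligible against the $A^2$ terms, so the argument of the logarithm converges to $3\sigma_{\rm E}^2 h_{{\rm B},m}^2/(2\pi e\,\sigma_{\rm B}^2\,h_{{\rm E},m}^2/4)=6\sigma_{\rm E}^2 h_{{\rm B},m}^2/(\pi e\,\sigma_{\rm B}^2 h_{{\rm E},m}^2)$; factoring out the constant $6/(\pi e)$ and pulling the square root out front reproduces (\ref{equ6}), which is the left-hand inequality of (\ref{equ8}).

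Next I would treat the upper bound. Starting from (\ref{equ4}) in \emph{Theorem \ref{them4}}, I note that the only $A$-dependence sits in the factors $h_{{\rm B},m}^2 A\xi P+\sigma_{\rm B}^2$ in the numerator and $h_{{\rm E},m}^2 A\xi P+2(h_{{\rm E},m}^2/h_{{\rm B},m}^2)\sigma_{\rm B}^2+\sigma_{\rm E}^2$ in the denominator. As $A\to\infty$ (with $\xi P$ scaling linearly via $\alpha=\xi P/A$, or more generally as the dominant terms grow), these two factors are each dominated by their $A\xi P$ term, so their ratio tends to $h_{{\rm B},m}^2/h_{{\rm E},m}^2$. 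The surviving prefactor is then $[(h_{{\rm E},m}^2/h_{{\rm B},m}^2)\sigma_{\rm B}^2+\sigma_{\rm E}^2]\cdot(h_{{\rm B},m}^2/h_{{\rm E},m}^2)/[\sigma_{\rm B}^2(1+h_{{\rm E},m}^2\sigma_{\rm B}^2/(h_{{\rm B},m}^2\sigma_{\rm E}^2))]$, and a short simplification shows this collapses to $\sigma_{\rm E}^2 h_{{\rm B},m}^2/(\sigma_{\rm B}^2 h_{{\rm E},m}^2)$, whose logarithm is twice $\ln(h_{{\rm B},m}\sigma_{\rm E}/(h_{{\rm E},m}\sigma_{\rm B}))$; combined with the $1/(2M)$ factor this yields exactly (\ref{equ7}), the right-hand inequality of (\ref{equ8}).

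Finally, I would assemble the two limits into the chained inequality (\ref{equ8}), noting that the middle quantity $\lim_{A\to\infty}R_s$ is sandwiched because the original $R_s$ lies between the lower and upper bounds for every finite $A$ and the limit preserves the inequalities. The main obstacle, such as it is, lies in the upper-bound simplification: one must verify carefully that the noise-ratio prefactor in (\ref{equ4}) algebraically reduces to $\sigma_{\rm E}^2 h_{{\rm B},m}^2/(\sigma_{\rm B}^2 h_{{\rm E},m}^2)$ rather than leaving a residual constant, and one must confirm that the cross terms in the denominator (the $2(h_{{\rm E},m}^2/h_{{\rm B},m}^2)\sigma_{\rm B}^2$ piece) are genuinely subdominant to $h_{{\rm E},m}^2 A\xi P$ in the limit. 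Both checks are routine once the leading-order terms are identified, so no real difficulty is expected beyond bookkeeping.
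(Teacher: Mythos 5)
Your proposal is correct and follows essentially the same route as the paper: substitute $\xi P = A/2$ into the first branch of (\ref{equ1}) and let $A\to\infty$ to obtain (\ref{equ6}), let $A\to\infty$ in (\ref{equ4}) to obtain (\ref{equ7}) (your algebraic check that the noise-ratio prefactor collapses to $\sigma_{\rm E}^2 h_{{\rm B},m}^2/(\sigma_{\rm B}^2 h_{{\rm E},m}^2)$ is exactly the simplification the paper leaves implicit), and then sandwich $\lim_{A\to\infty}R_s$ between the two limits. No gaps.
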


\begin{remark}
In \emph{Corollary \ref{coro6}}, when $\alpha=0.5$,
the asymptotic performance gap equals $0.5\ln[{\pi e/6 }] \approx 0.1765$ nat/transmission.
Although a performance gap exists between the asymptotic lower and upper bounds, the difference is small.
\label{rem2}
\end{remark}

When $\alpha \neq 0.5$ and $\alpha \in (0,1]$, we have $\xi P= \alpha A$.
Moreover, for any $\alpha$ in (\ref{equ2}), $cA$ is a constant.
Let $b=cA$, eq. (\ref{equ1}) can be further written as
\begin{eqnarray}
{R_s} \ge\frac{1}{{2M}}\sum\limits_{m = 1}^M \ln \left\{ {\frac{{\sigma _{\rm{E}}^{\rm{2}}h_{{\rm{B}},m}^2{e^{ - 2\alpha b}}\frac{{{{({e^b} - 1)}^2}}}{{{b^2}}}{A^2} + 2\pi e\sigma _{\rm{E}}^{\rm{2}}\sigma _{\rm{B}}^{\rm{2}}}}{{{\rm{2}}\pi e\sigma _{\rm{B}}^{\rm{2}}\left[ {\frac{{h_{{\rm{E}},m}^2(b - 2)}}{{b(1 - {e^{ - b}})}} + \frac{{2h_{{\rm{E}},m}^2}}{{{b^2}}} - h_{{\rm{E}},m}^2{\alpha ^2}} \right]{A^2} + {\rm{2}}\pi e\sigma _{\rm{B}}^{\rm{2}}\sigma _{\rm{E}}^2}}} \right\},
\label{equ9}
\end{eqnarray}
where $b$ can be derived by
\begin{equation}
\alpha  = \frac{1}{{1 - {e^{ - b}}}} - \frac{1}{b}.
\end{equation}

When $A \to \infty$, eq. (\ref{equ9}) can be written as
\begin{eqnarray}
\mathop {\lim }\limits_{A \to \infty } {R_s} \ge \ln \left\{ {\frac{{{e^{ - \alpha b}}({e^b} - 1)}}{{b\sqrt {{\rm{2}}\pi e\left[ {\frac{{b - 2}}{{b(1 - {e^{ - b}})}} + \frac{2}{{{b^2}}} - {\alpha ^2}} \right]} }}} \right\} + \frac{1}{M}\sum\limits_{m = 1}^M {\ln \left( {\frac{{{h_{{\rm{B}},m}}{\sigma _{\rm{E}}}}}{{{h_{{\rm{E}},m}}{\sigma _{\rm{B}}}}}} \right)}.
\label{equ10}
\end{eqnarray}
According to (\ref{equ7}) and (\ref{equ10}), the following corollary is obtained.

\begin{corollary}
For the SM based VLC with constraints (\ref{eq3_1}), (\ref{eq3_2}) and (\ref{eq3_3}),
the asymptotic behavior of the secrecy rate when $\alpha \neq 0.5$ and $\alpha \in (0,1]$ is given by
\begin{eqnarray}
\ln\!\! \left\{\!\! {\frac{{{e^{ - \alpha b}}({e^b} - 1)}}{{b\sqrt {{\rm{2}}\pi e\!\!\left[ {\frac{{b - 2}}{{b(1 - {e^{ - b}})}} \!+\! \frac{2}{{{b^2}}} \!-\! {\alpha ^2}} \right]} }}}\!\! \right\} \!+\! \frac{1}{M}\!\!\sum\limits_{m = 1}^M \!\!{\ln\!\! \left(\!\! {\frac{{{h_{{\rm{B}},m}}{\sigma _{\rm{E}}}}}{{{h_{{\rm{E}},m}}{\sigma _{\rm{B}}}}}}\!\! \right)} \le \mathop {\lim }\limits_{A \to \infty } {R_s} \le \frac{1}{M}\sum\limits_{m = 1}^M {\ln \left( {\frac{{{h_{{\rm{B}},m}}{\sigma _{\rm{E}}}}}{{{h_{{\rm{E}},m}}{\sigma _{\rm{B}}}}}} \right)}.
\label{equ11}
\end{eqnarray}
\label{coro7}
\end{corollary}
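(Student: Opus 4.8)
The plan is to assemble the two one-sided asymptotic expressions that immediately precede the statement, namely the lower bound~(\ref{equ10}) and the upper bound~(\ref{equ7}), and to verify that each follows from the corresponding finite-$A$ bound by a leading-term argument as $A\to\infty$. The structural facts that make this work are that $\alpha\in(0,1]$ is held fixed, so $\xi P=\alpha A$, and that through the defining relation~(\ref{equ2}) the quantity $b=cA$ is a constant independent of $A$; together these render every $A$-dependence explicit.

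First I would handle the upper bound. Starting from \emph{Theorem~\ref{them4}}, I substitute $\xi P=\alpha A$ into~(\ref{equ4}) so that the products $h_{{\rm B},m}^2A\xi P=h_{{\rm B},m}^2\alpha A^2$ and $h_{{\rm E},m}^2A\xi P=h_{{\rm E},m}^2\alpha A^2$ become the leading terms in the numerator and denominator. Letting $A\to\infty$, the additive $\sigma^2$ constants are negligible against these $A^2$ terms, the common factor $\alpha A^2$ cancels, and the argument of the logarithm reduces to
\[
\frac{\left(\frac{h_{{\rm E},m}^2}{h_{{\rm B},m}^2}\sigma_{\rm B}^2+\sigma_{\rm E}^2\right)h_{{\rm B},m}^2}{\sigma_{\rm B}^2 h_{{\rm E},m}^2\left(1+\frac{h_{{\rm E},m}^2\sigma_{\rm B}^2}{h_{{\rm B},m}^2\sigma_{\rm E}^2}\right)}.
\]
The key simplification is that both numerator and denominator carry the common factor $h_{{\rm B},m}^2\sigma_{\rm E}^2+h_{{\rm E},m}^2\sigma_{\rm B}^2$; cancelling it collapses the ratio to $h_{{\rm B},m}^2\sigma_{\rm E}^2/(h_{{\rm E},m}^2\sigma_{\rm B}^2)$, and then $\frac12\ln(\cdot)=\ln\sqrt{(\cdot)}$ yields exactly~(\ref{equ7}).

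Next I would handle the lower bound, starting from the rewritten form~(\ref{equ9}) of \emph{Theorem~\ref{them3}}, in which every surviving quantity is expressed through the constant $b$ with explicit $A^2$ prefactors. Taking $A\to\infty$, the $A^2$ terms again dominate the additive $\sigma_{\rm B}^2\sigma_{\rm E}^2$ remainders in numerator and denominator, the $A^2$ cancels, and the logarithm splits into an $m$-independent piece and the channel-dependent piece $\ln\!\big(h_{{\rm B},m}\sigma_{\rm E}/(h_{{\rm E},m}\sigma_{\rm B})\big)$. Since the $m$-independent piece is constant across the sum, averaging $M$ identical copies returns a single copy, reproducing~(\ref{equ10}). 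Chaining~(\ref{equ10}) and~(\ref{equ7}) around $\lim_{A\to\infty}R_s$ then gives~(\ref{equ11}).

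The main obstacle is algebraic bookkeeping rather than anything conceptual: one must confirm that the discarded additive terms are genuinely lower order than the $A^2$ terms so that the limit can be passed through the finite sum termwise, and one must execute the shared-factor cancellation in the upper bound carefully, as it is the only nontrivial identity and is precisely what reduces the four-factor expression in~(\ref{equ4}) to the clean ratio $h_{{\rm B},m}^2\sigma_{\rm E}^2/(h_{{\rm E},m}^2\sigma_{\rm B}^2)$.
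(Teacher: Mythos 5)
Your proposal is correct and follows essentially the same route as the paper: the paper likewise obtains the lower bound by rewriting Theorem~\ref{them3} as (\ref{equ9}) via $\xi P=\alpha A$ and the constant $b=cA$, letting the $A^2$ terms dominate to reach (\ref{equ10}), and obtains (\ref{equ7}) from Theorem~\ref{them4} in the limit $A\to\infty$, then chains the two. Your explicit verification of the shared-factor cancellation $h_{{\rm B},m}^2\sigma_{\rm E}^2+h_{{\rm E},m}^2\sigma_{\rm B}^2$ in the upper bound is a detail the paper leaves implicit, but it is the same argument.
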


\begin{remark}
In \emph{Corollary \ref{coro7}}, when $\alpha \neq 0.5$ and $\alpha \in (0,1]$,
the asymptotic performance gap at high SNR is $\ln \left\{ {b{e^{\alpha b}}\sqrt {{\rm{2}}\pi e\left[ {\frac{{b - 2}}{{b(1 - {e^{ - b}})}} + \frac{2}{{{b^2}}} - {\alpha ^2}} \right]} /({e^b} - 1)} \right\}$ nat/transmission.
Numerical results in Section \ref{section5} will show that such a performance gap is small.
\label{rem3}
\end{remark}

\section{Secrecy Performance Improvement Schemes}
\label{section04}
In Section \ref{section2}, the US scheme is utilized to select an active transmitter,
i.e., the probability of selecting each LED is assumed to be the same.
However, such a selection scheme does not perform well in some cases.
To improve the secrecy rate, two novel transmitter selection schemes are provided in this section.

\subsection{Channel Adaptive Selection Scheme}
As it is known, the probability of selecting each LED depends on the CSI of both Bob and Eve.
The larger the difference between ${h_{{\rm{B,}}m}}/{\sigma _{\rm{B}}}$ and ${h_{{\rm{E,}}m}}/{\sigma _{\rm{E}}}$ is,
the larger the secrecy rate becomes.
To enhance the secrecy rate,
the LED with large ${h_{{\rm{B},}m}}/{\sigma _{\rm{B}}}-{h_{{\rm{E,}}m}}/{\sigma _{\rm{E}}}$ should be selected with large probability.
Therefore, the probability of selecting the $m$-th LED in (\ref{eq4}) is modified as
\begin{eqnarray}
p({h_k} = {h_{k,m}}) = \frac{{\frac{{{h_{{\rm{B}},m}}}}{{{\sigma _{\rm{B}}}}} - \frac{{{h_{{\rm{E}},m}}}}{{{\sigma _{\rm{E}}}}}}}{{\sum\limits_{j = 1}^M {\left( {\frac{{{h_{{\rm{B}},j}}}}{{{\sigma _{\rm{B}}}}} - \frac{{{h_{{\rm{E}},j}}}}{{{\sigma _{\rm{E}}}}}} \right)} }}.
\label{eqq1}
\end{eqnarray}

In practice, the process of selecting each LED in the CAS scheme is presented in Algorithm \ref{alg1}.
Based on the CAS scheme,
\emph{Theorems \ref{them1}} and \emph{\ref{them2}} can be updated as \emph{Theorem \ref{them5}}.

\begin{algorithm}[!h]
\caption{The CAS scheme}
\begin{algorithmic}[1]
\State Given the noise variances ${\sigma _{\rm{B}}}$, ${\sigma _{\rm{E}}}$, and the number of LEDs $M$.
\State Obtain the positions of Alice, Bob and Eve.
\State Compute the probability of selecting each LED by using (\ref{eqq1}).
\State Compute the cumulative probabilities ${q_i} = \sum\nolimits_{m = 1}^i {p({h_k} = {h_{k,m}})}, i=1,\cdots,M$.
\State Generate a random number $r$ in the range of $[0,1]$.
\If{$r < {q_1}$}
　　　\State The first LED is selected;
\Else \;\textbf{if} {${q_{k - 1}} < r \le {q_k}$} \textbf{then}
　　　\State The $k$-th LED is selected.
\EndIf
\State \textbf{endif}
\State Repeat Steps 2-10 to select another LED for the next time instant.
\end{algorithmic}
\label{alg1}
\end{algorithm}

\begin{theorem}
For the SM based VLC with constraints (\ref{eq3_1}) and (\ref{eq3_3}),
by using the CAS scheme in (\ref{eqq1}),
the lower and upper bounds of the secrecy rate are given by
\begin{eqnarray}
{R_s} \ge \frac{1}{2}\sum\limits_{m = 1}^M \left[ \frac{{\frac{{{h_{{\rm{B}},m}}}}{{{\sigma _{\rm{B}}}}} - \frac{{{h_{{\rm{E}},m}}}}{{{\sigma _{\rm{E}}}}}}}{{\sum\limits_{j = 1}^M {\left( {\frac{{{h_{{\rm{B}},j}}}}{{{\sigma _{\rm{B}}}}} - \frac{{{h_{{\rm{E}},j}}}}{{{\sigma _{\rm{E}}}}}} \right)} }}    \ln \left( {\frac{{\sigma _{\rm{E}}^{\rm{2}}}}{{{\rm{2}}\pi e\sigma _{\rm{B}}^2}}\frac{{{e^2}h_{{\rm{B}},m}^2{\xi ^2}{P^2} + 2\pi e\sigma _{\rm{B}}^2}}{{h_{{\rm{E}},m}^2{\xi ^2}{P^2} + \sigma _{\rm{E}}^2}}} \right) \right].
\end{eqnarray}
and
\begin{eqnarray}
{R_s} \le \left\{ \begin{array}{l}
\sum\limits_{m = 1}^M \!\!{\frac{{\frac{{{h_{{\rm{B}},m}}}}{{{\sigma _{\rm{B}}}}} - \frac{{{h_{{\rm{E}},m}}}}{{{\sigma _{\rm{E}}}}}}}{{\sum\limits_{j = 1}^M {\left( {\frac{{{h_{{\rm{B}},j}}}}{{{\sigma _{\rm{B}}}}} - \frac{{{h_{{\rm{E}},j}}}}{{{\sigma _{\rm{E}}}}}} \right)} }}\!\!\ln \!\!\left[ {\frac{{4e\left( {\frac{{{\sigma _{\rm{B}}}}}{{\sqrt {2\pi } }} + \frac{{{h_{{\rm{B}},m}}\xi P}}{2}} \right)}}{{\sqrt {2\pi e\sigma _{\rm{B}}^{\rm{2}}\left( {{\rm{1 + }}\frac{{\sigma _{\rm{B}}^{\rm{2}}h_{{\rm{E}},m}^2}}{{\sigma _{\rm{E}}^{\rm{2}}h_{{\rm{B}},m}^2}}} \right)} }}} \right],} \;{\rm{if}}\;\sqrt {\frac{{\frac{{\sigma _{\rm{B}}^2}}{{h_{{\rm{B}},m}^2}} + \frac{{\sigma _{\rm{E}}^2}}{{h_{{\rm{E}},m}^2}}}}{{2\pi }}}  \ge \frac{{{\sigma _{\rm{B}}}}}{{\sqrt {2\pi } {h_{{\rm{B}},m}}}} + \frac{{\xi P}}{2}\\
\!\!\!\!\sum\limits_{m = 1}^M\!\! {\frac{{\frac{{{h_{{\rm{B}},m}}}}{{{\sigma _{\rm{B}}}}} - \frac{{{h_{{\rm{E}},m}}}}{{{\sigma _{\rm{E}}}}}}}{{\sum\limits_{j = 1}^M {\left( {\frac{{{h_{{\rm{B}},j}}}}{{{\sigma _{\rm{B}}}}} - \frac{{{h_{{\rm{E}},j}}}}{{{\sigma _{\rm{E}}}}}} \right)} }}\ln \left( {\frac{{2\sqrt e {h_{{\rm{B}},m}}{\sigma _{\rm{E}}}}}{{\pi {h_{{\rm{E}},m}}{\sigma _{\rm{B}}}}}} \right)},\;{\rm{otherwise}}
\end{array} \right.
\end{eqnarray}
\label{them5}
\end{theorem}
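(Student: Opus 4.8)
The plan is to exploit the fact that the selection scheme enters the entire analysis of Theorems \ref{them1} and \ref{them2} only through the probability mass assigned to each active LED. Starting from the re-expression of the secrecy rate in (\ref{eq22_1}), which for the US scheme reads $R_s = \max_{f_X}\frac{1}{M}\sum_{m=1}^M \mathcal{I}(X;Y'_{B,m}|Y'_{E,m})$, I would first replace the uniform weight $1/M$ by the CAS weight $p(h_k=h_{k,m})$ of (\ref{eqq1}), obtaining $R_s = \max_{f_X}\sum_{m=1}^M p(h_k=h_{k,m})\,\mathcal{I}(X;Y'_{B,m}|Y'_{E,m})$. This step simply reflects that each conditional mutual information $\mathcal{I}(X;Y'_{B,m}|Y'_{E,m})$ is now weighted by the probability of actually activating the $m$-th LED rather than by $1/M$.

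For the lower bound, I would reuse the suboptimal input distribution employed in the proof of Theorem \ref{them1} (Appendix \ref{appa}). Because that input distribution and the resulting per-channel lower bound—namely the argument of the logarithm in (\ref{eq6})—depend only on the channel gains $h_{B,m}$, $h_{E,m}$, the noise variances $\sigma_B^2$, $\sigma_E^2$, and the average-intensity constraint $\xi P$, and not on how the active LED is chosen, the same per-channel bound holds verbatim under the CAS prior. Multiplying each per-channel term by its CAS weight $p(h_k=h_{k,m})$ and summing over $m$ then yields the claimed lower bound of Theorem \ref{them5}.

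For the upper bound, I would invoke the dual-expression argument of Lemma \ref{le1} and Theorem \ref{them2} (Appendix \ref{appb}) channel by channel. The crucial point is that the per-channel upper bound on $\mathcal{I}(X;Y'_{B,m}|Y'_{E,m})$ is valid for every admissible input $X$, since it is derived by fixing an auxiliary conditional density $g_{Y'_{B,m}|Y'_{E,m}}$ and then bounding $\mathbb{E}_{X Y'_{E,m}}\{u\}$ using only the non-negativity and average-intensity constraints. Consequently the two-case per-channel bound in (\ref{eq25}) is unchanged; weighting it by $p(h_k=h_{k,m})$ and summing over $m$ produces the piecewise upper bound of Theorem \ref{them5}, with the same threshold condition distinguishing the two cases.

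The main obstacle will be justifying rigorously that the secrecy rate genuinely decomposes as the $p(h_k=h_{k,m})$-weighted sum of the per-channel conditional mutual informations, i.e., that switching from the uniform to the CAS prior on the active-LED index leaves each per-channel term intact and only re-weights it. This rests on the independence of the intensity signal $X$ from the LED-selection index and on the fact that both the lower-bounding input of Theorem \ref{them1} and the upper-bounding auxiliary density of Theorem \ref{them2} are constructed per channel without reference to the selection probabilities. Once this decomposition is in place, the remaining work is the routine substitution of $1/M$ by $p(h_k=h_{k,m})$.
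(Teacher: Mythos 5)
Your proposal is correct and matches the paper's (implicit) reasoning exactly: the paper offers no separate proof of Theorem \ref{them5}, presenting it as the immediate update of Theorems \ref{them1} and \ref{them2} obtained by replacing the uniform weight $1/M$ in the decomposition (\ref{eq22_1}) with the CAS probability (\ref{eqq1}). Your key observation---that the per-channel lower bound (exponential input plus entropy-power inequality) and the per-channel dual-expression upper bound are constructed independently of the selection probabilities, so they carry over verbatim and need only be re-weighted---is precisely the justification the paper relies on.
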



By considering the CAS scheme in (\ref{eqq1}),
\emph{Theorems \ref{them3}} and \emph{\ref{them4}} can be modified as \emph{Theorem \ref{them6}}.
\begin{theorem}
For the SM based VLC with constraints (\ref{eq3_1}), (\ref{eq3_2}) and (\ref{eq3_3}),
by using the CAS scheme in (\ref{eqq1}),
the lower and upper bounds of the secrecy rate are given by
\begin{eqnarray}
{R_s} \!\!\ge\!\!\! \left\{ \begin{array}{l}\!\!\!\!
\frac{1}{2}\!\!\sum\limits_{m = 1}^M \!\!\!\left\{ \frac{{\frac{{{h_{{\rm{B}},m}}}}{{{\sigma _{\rm{B}}}}} - \frac{{{h_{{\rm{E}},m}}}}{{{\sigma _{\rm{E}}}}}}}{{\sum\limits_{j = 1}^M {\left( {\frac{{{h_{{\rm{B}},j}}}}{{{\sigma _{\rm{B}}}}} - \frac{{{h_{{\rm{E}},j}}}}{{{\sigma _{\rm{E}}}}}} \right)} }}  \ln \left[ {\frac{{2\sigma _{\rm{E}}^2\left( {{A^2}h_{{\rm{B}},m}^2 + 2\pi e\sigma _{\rm{B}}^2} \right)}}{{2\pi e\sigma _{\rm{B}}^2\left( {h_{{\rm{E}},m}^2{\xi ^{\rm{2}}}{P^2} + 3\sigma _{\rm{E}}^2} \right)}}} \right] \right\},\;{\rm{if}}\;\alpha  = 0.5\\
\!\!\!\!\frac{1}{2}\!\!\sum\limits_{m = 1}^M\!\!\! \left\{\!\! \frac{{\frac{{{h_{{\rm{B}},m}}}}{{{\sigma _{\rm{B}}}}} - \frac{{{h_{{\rm{E}},m}}}}{{{\sigma _{\rm{E}}}}}}}{{\sum\limits_{j = 1}^M \!\!\!{\left(\! {\frac{{{h_{{\rm{B}},j}}}}{{{\sigma _{\rm{B}}}}} - \frac{{{h_{{\rm{E}},j}}}}{{{\sigma _{\rm{E}}}}}}\! \right)} }} \! \ln\!\! \left[\!\! {\frac{{\sigma _{\rm{E}}^2\left[ {h_{{\rm{B}},m}^2{e^{ - 2c\xi P}}{{\left( {\frac{{{e^{cA}} - 1}}{c}} \right)}^2} + 2\pi e\sigma _{\rm{B}}^2} \right]}}{{2\pi e\sigma _{\rm{B}}^2\left( {\frac{{h_{{\rm{E}},m}^2A(cA - 2)}}{{c(1 - {e^{ - cA}})}} + \frac{{2h_{{\rm{E}},m}^2}}{{{c^2}}} - h_{{\rm{E}},m}^2{\xi ^2}{P^2} + \sigma _{\rm{E}}^{\rm{2}}} \right)}}}\!\! \right] \!\!\right\}\!\!,{\rm{if}}\;\alpha  \ne 0.5\;{\rm{and}}\;\alpha  \in (0,1]
\end{array} \right.
\end{eqnarray}
and
\begin{eqnarray}
{R_s} \le \frac{1}{2}\sum\limits_{m = 1}^M \left\{ \frac{{\frac{{{h_{{\rm{B}},m}}}}{{{\sigma _{\rm{B}}}}} - \frac{{{h_{{\rm{E}},m}}}}{{{\sigma _{\rm{E}}}}}}}{{\sum\limits_{j = 1}^M {\left( {\frac{{{h_{{\rm{B}},j}}}}{{{\sigma _{\rm{B}}}}} - \frac{{{h_{{\rm{E}},j}}}}{{{\sigma _{\rm{E}}}}}} \right)} }}  \ln \left[ {\frac{{\left( {\frac{{h_{{\rm{E}},m}^2}}{{h_{{\rm{B}},m}^2}}\sigma _{\rm{B}}^2 + \sigma _{\rm{E}}^2} \right)\left( {h_{{\rm{B}},m}^2A\xi P + \sigma _{\rm{B}}^{\rm{2}}} \right)}}{{\sigma _{\rm{B}}^2\left( {h_{{\rm{E}},m}^2A\xi P + 2\frac{{h_{{\rm{E}},m}^2}}{{h_{{\rm{B}},m}^2}}\sigma _{\rm{B}}^2 + \sigma _{\rm{E}}^2} \right)\left( {1 + \frac{{h_{{\rm{E}},m}^2\sigma_{\rm{B}}^2}}{{h_{{\rm{B}},m}^2\sigma_{\rm{E}}^2}}} \right)}}} \right] \right\}.
\end{eqnarray}
\label{them6}
\end{theorem}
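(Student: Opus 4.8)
The plan is to obtain \emph{Theorem \ref{them6}} from the uniform-selection results \emph{Theorems \ref{them3}} and \emph{\ref{them4}} by re-weighting the outer average according to the CAS selection law (\ref{eqq1}), while reusing the per-LED analysis verbatim. The starting point, in analogy with (\ref{eq22_1}) but now for the peak-constrained problem (\ref{p2}), is to condition on the index $m$ of the active LED and write the secrecy rate as an expectation of the conditional mutual information ${\cal I}(X;Y'_{{\rm B},m}|Y'_{{\rm E},m})$ over the selection distribution. Under the US scheme this expectation places mass $1/M$ on each LED, which is exactly the origin of the $1/M$ prefactors in \emph{Theorems \ref{them3}} and \emph{\ref{them4}}; under the CAS scheme it instead places mass $p(h_k=h_{k,m})$ given by (\ref{eqq1}) on the $m$-th term, so the common factor $1/M$ is replaced by this channel-adaptive weight inside the sum.

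The key step is to confirm that each per-LED bound is insensitive to the selection rule. For the lower bound, the derivation behind \emph{Theorem \ref{them3}} (Appendix \ref{appd}) evaluates a single admissible input PDF that is maximum-entropy under (\ref{eq3_1})--(\ref{eq3_3})---uniform on $[0,A]$ when $\alpha=0.5$ and truncated-exponential with parameter $c$ from (\ref{equ2}) otherwise---and then applies an entropy-power inequality for Bob together with a Gaussian maximum-entropy bound for Eve on the scalar sub-channel (\ref{eq1_1}) at fixed $m$; since the true $R_s$ maximizes over all inputs, any such fixed candidate yields $R_s\ge\sum_{m=1}^M p(h_k=h_{k,m}){\cal I}_m$, and the resulting bracketed logarithmic term depends only on $h_{{\rm B},m}$, $h_{{\rm E},m}$, $A$ and $\xi P$ (through the entropy and second moment of the common input). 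For the upper bound, the dual expression (\ref{eq24}) bounds ${\cal I}(X;Y'_{{\rm B},m}|Y'_{{\rm E},m})$ for every admissible input using only the universal mean constraint $\mathbb{E}(X)=\xi P$ and the support bound $0\le X\le A$ (through $\mathbb{E}(X^2)\le A\xi P$), so the per-LED closed form in \emph{Theorem \ref{them4}} holds for the optimal input regardless of the weights. Summing the per-LED lower and upper bounds against the weights $p(h_k=h_{k,m})$ then reproduces the two expressions claimed in \emph{Theorem \ref{them6}}.

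The main obstacle is precisely this decoupling argument: one must verify that the CAS weights enter only through the outer average and never through the interior logarithmic terms. This requires noting, for the lower bound, that a single input distribution is used simultaneously for all LEDs (so its differential entropy and moments are weight-independent constants), and, for the upper bound, that the inequality in (\ref{eq24}) is valid for any admissible input and may therefore be averaged against any probability mass function, including (\ref{eqq1}). Finally I would remark that (\ref{eqq1}) is a legitimate selection distribution---its entries are nonnegative and sum to one---precisely because only LEDs with $h_{{\rm B},m}/\sigma_{\rm B}\ge h_{{\rm E},m}/\sigma_{\rm E}$ contribute a positive secrecy rate and are eligible for selection.
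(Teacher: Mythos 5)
Your proposal is correct and takes essentially the same route as the paper: the paper offers no separate derivation for \emph{Theorem \ref{them6}}, stating only that \emph{Theorems \ref{them3}} and \emph{\ref{them4}} ``can be modified'' by using the CAS scheme, which is precisely your argument that the selection law enters only through the outer average (the per-LED lower-bound terms come from a fixed maximum-entropy input whose entropy and second moment are weight-independent, and the dual upper bound holds for any admissible input), so the weights $1/M$ may simply be replaced by the probabilities in (\ref{eqq1}). One minor mismatch worth noting: carried out verbatim, your derivation yields the constant $3$ (as in \emph{Theorem \ref{them3}}) in the numerator of the $\alpha=0.5$ branch, whereas \emph{Theorem \ref{them6}} as printed has a $2$ --- evidently a typo in the paper --- but since the factor-$3$ bound implies the weaker factor-$2$ bound, your argument still establishes the stated result.
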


\subsection{Greedy Selection Scheme}
In this subsection, the GS scheme is introduced.
In this scheme, the LED with the maximum value of $h_{{\rm B}, m}/\sigma_{\rm B} - h_{{\rm E}, m}/\sigma_{\rm E}$ is selected at each time instant.
Therefore, the probability of selecting the $m$-th LED is re-expressed as
\begin{eqnarray}
p({h_k} \!=\! {h_{k,m}}) \!=\!\! \left\{ \begin{array}{l}\!\!\!\!
1,\;{\rm if}\;m \!=\! \arg \mathop {\max }\limits_{k = 1,\cdots,M} \!\!\left\{\! {\frac{{{h_{{\rm{B}},k}}}}{{{\sigma _{\rm{B}}}}} \!-\! \frac{{{h_{{\rm E},k}}}}{{{\sigma _{\rm{E}}}}}}\! \right\}\\
\!\!\!\!0,\;{\rm otherwise}
\end{array} \right.
\label{eqw1}
\end{eqnarray}

For this scheme, the process of selecting each LED in the GS scheme is provided in Algorithm \ref{alg2}.
By using the GS scheme,
\emph{Theorems \ref{them1}} and \emph{\ref{them2}} can be updated as \emph{Theorem \ref{them7}}.

\begin{algorithm}[!h]
\caption{The GS scheme}
\begin{algorithmic}[1]
\State Given the noise variances ${\sigma _{\rm{B}}}$, ${\sigma _{\rm{E}}}$, and the number of LEDs $M$.
\State Obtain the positions of Alice, Bob and Eve.
\State Compute ${{h_{{\rm{B}},k}}{\rm{/}}{\sigma _{\rm{B}}} - {h_{{\rm E},k}}{\rm{/}}{\sigma _{\rm{E}}}}$ for $k=1,2,\cdots,M$.
\If{$m = \arg \mathop {\max }\limits_{k = 1, \cdots ,M} \left\{ {{h_{{\rm{B}},k}}{\rm{/}}{\sigma _{\rm{B}}} - {h_{{\rm E},k}}{\rm{/}}{\sigma _{\rm{E}}}} \right\}$}
　　　\State The $m$-th LED is selected.
\EndIf
\State \textbf{endif}
\State Repeat Steps 2-6 to select another LED for the next time instant.
\end{algorithmic}
\label{alg2}
\end{algorithm}

\begin{theorem}
For the SM based VLC with constraints (\ref{eq3_1}) and (\ref{eq3_3}),
by using the GS scheme in (\ref{eqw1}),
the lower and upper bounds of the secrecy rate are given by
\begin{eqnarray}
{R_s} \ge \mathop {\max }\limits_m \left\{ {\frac{1}{2}\ln \left( {\frac{{\sigma _{\rm{E}}^2}}{{2\pi e\sigma _{\rm{B}}^2}}\frac{{{e^2}h_{{\rm{B}},m}^2{\xi ^2}{P^2} + 2\pi e\sigma _{\rm{B}}^2}}{{h_{{\rm{E}},m}^2{\xi ^2}{P^2}{\rm{ + }}\sigma _{\rm{E}}^2}}} \right)} \right\}.
\end{eqnarray}
and
\begin{eqnarray}
{R_s} \!\le\! \left\{ \begin{array}{l}\!\!\!\!
\max\limits_{m} \left\{\ln \!\!\left[ {\frac{{4e\left( {\frac{{{\sigma _{\rm{B}}}}}{{\sqrt {2\pi } }} + \frac{{{h_{{\rm{B}},m}}\xi P}}{2}} \right)}}{{\sqrt {2\pi e\sigma _{\rm{B}}^{\rm{2}}\left( {{\rm{1 + }}\frac{{\sigma _{\rm{B}}^{\rm{2}}h_{{\rm{E}},m}^2}}{{\sigma _{\rm{E}}^{\rm{2}}h_{{\rm{B}},m}^2}}} \right)} }}} \right]\right\}, \;{\rm{if}}\;\sqrt {\frac{{\frac{{\sigma _{\rm{B}}^2}}{{h_{{\rm{B}},m}^2}} + \frac{{\sigma _{\rm{E}}^2}}{{h_{{\rm{E}},m}^2}}}}{{2\pi }}}  \ge \frac{{{\sigma _{\rm{B}}}}}{{\sqrt {2\pi } {h_{{\rm{B}},m}}}} + \frac{{\xi P}}{2}\\
\max\limits_{m} \left\{\ln \left( {\frac{{2\sqrt e {h_{{\rm{B}},m}}{\sigma _{\rm{E}}}}}{{\pi {h_{{\rm{E}},m}}{\sigma _{\rm{B}}}}}} \right)\right\},\;{\rm{otherwise}}
\end{array} \right.
\end{eqnarray}
\label{them7}
\end{theorem}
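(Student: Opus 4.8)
The plan is to treat the GS scheme as the degenerate case of the general selection-probability framework already underlying Theorems \ref{them1}, \ref{them2}, and \ref{them5}. The proofs of Theorems \ref{them1} and \ref{them2} (Appendices \ref{appa} and \ref{appb}) extend verbatim to an arbitrary channel-selection distribution $p(h_k = h_{k,m})$, yielding the weighted decompositions $R_s \ge \sum_{m=1}^M p(h_k = h_{k,m}) L_m$ and $R_s \le \sum_{m=1}^M p(h_k = h_{k,m}) U_m$, where $L_m$ and $U_m$ are the single-LED lower and upper bounds appearing as the summands of (\ref{eq6}) and (\ref{eq25}). The uniform choice $p(h_k = h_{k,m}) = 1/M$ recovers Theorems \ref{them1} and \ref{them2}, while the CAS weights of (\ref{eqq1}) recover Theorem \ref{them5}. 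The first step is therefore to re-invoke this weighted form, which needs no new analysis beyond the chain-rule bookkeeping already present in those appendices.

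The second step is to substitute the GS selection probability (\ref{eqw1}) into these two weighted sums. Since (\ref{eqw1}) places unit mass on the single index $m^\star = \arg\max_m \left\{ h_{{\rm B},m}/\sigma_{\rm B} - h_{{\rm E},m}/\sigma_{\rm E} \right\}$ and zero mass elsewhere, every weighted sum collapses to its $m^\star$ term, giving $R_s \ge L_{m^\star}$ and $R_s \le U_{m^\star}$. Operationally this reduces the SM system to an effective single-LED link, so the two resulting bounds are exactly the $M=1$ specializations of Theorems \ref{them1} and \ref{them2} (equivalently Corollaries \ref{coro1} and \ref{coro2}) evaluated at LED $m^\star$.

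The remaining, and main, obstacle is to show that the bound attained at the GS-selected index $m^\star$ coincides with the maximum over all LEDs, that is, $L_{m^\star} = \max_m L_m$ and $U_{m^\star} = \max_m U_m$, so that the statement may legitimately be written with $\max_m$. I would establish this by verifying that each per-LED bound is a monotone function of the selection metric $h_{{\rm B},m}/\sigma_{\rm B} - h_{{\rm E},m}/\sigma_{\rm E}$: inspection of the summand of (\ref{eq6}) shows that $L_m$ increases in $h_{{\rm B},m}$ and decreases in $h_{{\rm E},m}$, matching the dependence of the metric, and the same check applies to both branches of (\ref{eq25}). The delicate point is that $L_m$ and $U_m$ depend on $h_{{\rm B},m}$ and $h_{{\rm E},m}$ separately rather than only through their scaled difference, so the monotone alignment between the metric and the bounds must be argued under the nonzero-secrecy operating condition $h_{{\rm B},m}/\sigma_{\rm B} \ge h_{{\rm E},m}/\sigma_{\rm E}$. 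Once this alignment is confirmed, the index that maximizes the metric also maximizes each bound, and the $\max_m$ form follows. I expect this monotone-alignment verification, rather than any of the preceding algebra, to be the crux of the argument.
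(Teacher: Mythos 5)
Your first two steps are exactly the route the paper takes: the paper never writes a separate proof for Theorem \ref{them7}; it simply declares that Theorems \ref{them1} and \ref{them2} ``can be updated'' once the uniform weights $1/M$ are replaced by the GS weights (\ref{eqw1}), whereupon the weighted sums collapse to the single term at $m^\star = \arg\max_m\{h_{{\rm B},m}/\sigma_{\rm B} - h_{{\rm E},m}/\sigma_{\rm E}\}$, giving $R_s \ge L_{m^\star}$ and $R_s \le U_{m^\star}$. That part is uncontroversial. Note also that for the upper bound no alignment question even arises: since $U_{m^\star} \le \max_m U_m$, the statement $R_s \le \max_m U_m$ follows trivially from the collapse, being merely a weakening.

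The genuine gap is your third step, and it cannot be repaired as you propose: the claimed monotone alignment between the GS metric and the per-LED bounds is false in general, so $L_{m^\star} = \max_m L_m$ need not hold. At high SNR the summand of (\ref{eq6}) tends to $\frac{1}{2}\ln(e/2\pi) + \ln\left(h_{{\rm B},m}\sigma_{\rm E}/(h_{{\rm E},m}\sigma_{\rm B})\right)$, a function of the channel \emph{ratio}, while the GS rule (\ref{eqw1}) maximizes the \emph{difference} $h_{{\rm B},m}/\sigma_{\rm B} - h_{{\rm E},m}/\sigma_{\rm E}$; the two maximizers can disagree. Concretely, take $\sigma_{\rm B}=\sigma_{\rm E}=1$ and two LEDs with $(h_{{\rm B},1},h_{{\rm E},1})=(100,90)$ and $(h_{{\rm B},2},h_{{\rm E},2})=(10,1)$, both satisfying the nonzero-secrecy condition $h_{{\rm B},m}\ge h_{{\rm E},m}$: GS picks LED 1 (difference $10$ vs.\ $9$), yet $\ln(h_{{\rm B},2}/h_{{\rm E},2})=\ln 10 \gg \ln(10/9)=\ln(h_{{\rm B},1}/h_{{\rm E},1})$, so $\max_m L_m = L_2$ while the system actually operates on link 1. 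Worse, for large $P$ the asserted lower bound $\max_m L_m \approx \frac{1}{2}\ln(e/2\pi)+\ln 10 \approx 1.88$ exceeds the valid upper bound on link 1, $U_{m^\star}\approx \frac{1}{2}\ln(4e/\pi^2)+\ln(10/9)\approx 0.15$, so $R_s \ge \max_m L_m$ is simply false for this configuration and no proof of the alignment can exist. What the collapse argument actually proves is the theorem with both bounds evaluated at $m^\star$ (i.e., the $M=1$ corollaries applied at the GS index); the $\max_m$ form of the lower bound is legitimate only under extra structure forcing the two maximizers to coincide, e.g.\ a common ratio $h_{{\rm B},m}/h_{{\rm E},m}$ across LEDs with equal noise variances, which is precisely the situation in the paper's numerical section where $h_{\rm B}/h_{\rm E}$ is held fixed. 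You have in fact uncovered a looseness in the paper's own statement; the correct fix is to restate the lower bound at $m^\star$ (or redefine GS to select the bound-maximizing LED), not to pursue the alignment argument.
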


By using the GS scheme,
\emph{Theorems \ref{them3}} and \emph{\ref{them4}} can be updated as \emph{Theorem \ref{them8}}.
\begin{theorem}
For the SM based VLC with constraints (\ref{eq3_1}), (\ref{eq3_2}) and (\ref{eq3_3}),
by using the GS scheme in (\ref{eqw1}),
the lower and upper bounds of the secrecy rate are given by
\begin{eqnarray}
{R_s} \!\ge\!\! \left\{ \begin{array}{l}\!\!\!\!
\max\limits_{m} \left\{ \frac{1}{2} {  \ln \left[ {\frac{{2\sigma _{\rm{E}}^2\left( {{A^2}h_{{\rm{B}},m}^2 + 2\pi e\sigma _{\rm{B}}^2} \right)}}{{2\pi e\sigma _{\rm{B}}^2\left( {h_{{\rm{E}},m}^2{\xi ^{\rm{2}}}{P^2} + 3\sigma _{\rm{E}}^2} \right)}}} \right]}\right\},\;{\rm{if}}\;\alpha  = 0.5\\
\!\!\!\max\limits_{m}\!\left\{\!\frac{1}{2}  {\ln\! \left[\! {\frac{{\sigma _{\rm{E}}^2\left[ {h_{{\rm{B}},m}^2{e^{ - 2c\xi P}}{{\left( {\frac{{{e^{cA}} - 1}}{c}} \right)}^2} + 2\pi e\sigma _{\rm{B}}^2} \right]}}{{2\pi e\sigma _{\rm{B}}^2\!\!\left(\! {\frac{{h_{{\rm{E}},m}^2\!A(cA \!-\! 2)}}{{c(1 - {e^{ - cA}})}} + \frac{{2h_{{\rm{E}},m}^2}}{{{c^2}}} - h_{{\rm{E}},m}^2{\xi ^2}{P^2} + \sigma _{\rm{E}}^{\rm{2}}}\! \right)}}}\! \right]} \!\right\},{\rm{if}}\;\alpha  \ne 0.5\; {\rm{and}}\;\alpha  \in (0,1]
\end{array} \right.
\end{eqnarray}
and
\begin{eqnarray}
{R_s} \le \max\limits_{m} \left\{\frac{1}{2}  \ln \left[ {\frac{{\left( {\frac{{h_{{\rm{E}},m}^2}}{{h_{{\rm{B}},m}^2}}\sigma _{\rm{B}}^2 + \sigma _{\rm{E}}^2} \right)\left( {h_{{\rm{B}},m}^2A\xi P + \sigma _{\rm{B}}^{\rm{2}}} \right)}}{{\sigma _{\rm{B}}^2\left( {h_{{\rm{E}},m}^2A\xi P + 2\frac{{h_{{\rm{E}},m}^2}}{{h_{{\rm{B}},m}^2}}\sigma _{\rm{B}}^2 + \sigma _{\rm{E}}^2} \!\right)\left( {1+ \frac{{h_{{\rm{E}},m}^2\sigma_{\rm{B}}^2}}{{h_{{\rm{B}},m}^2\sigma_{\rm{E}}^2}}} \right)}}} \right]\right\}.
\end{eqnarray}
\label{them8}
\end{theorem}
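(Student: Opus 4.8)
The plan is to reuse the per-LED analysis already carried out for \emph{Theorems \ref{them3}} and \emph{\ref{them4}} and to feed into it the degenerate selection law induced by the GS scheme (\ref{eqw1}). First I would note that, because SM activates exactly one LED per channel use and the active index is random with distribution $p(h_k=h_{k,m})$, every secrecy-rate expression in the paper is an expectation over that index,
\[
R_s=\sum_{m=1}^{M}p(h_k=h_{k,m})\,R_{s,m},
\]
where $R_{s,m}$ is the secrecy rate conditioned on LED $m$ being active. This is exactly the structure displayed in (\ref{eq22_1}) and (\ref{eq24}): the prefactor $1/M$ there is simply the US probability, while the summand is a per-LED quantity. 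Under constraints (\ref{eq3_1}), (\ref{eq3_2}) and (\ref{eq3_3}), the per-LED lower and upper bounds are precisely the bracketed terms inside the sums of (\ref{equ1}) and (\ref{equ4})---equivalently, \emph{Theorems \ref{them3}} and \emph{\ref{them4}} specialized to a single LED through \emph{Corollaries \ref{coro4}} and \emph{\ref{coro5}}. I would treat these as established and denote them $L_m$ and $U_m$ (each already carrying the factor $1/2$).

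Next I would substitute the GS law (\ref{eqw1}) into this decomposition. Since $p(h_k=h_{k,m})=1$ for the single index $m^\ast=\arg\max_k\{h_{{\rm B},k}/\sigma_{\rm B}-h_{{\rm E},k}/\sigma_{\rm E}\}$ and $0$ otherwise, the sum collapses to a single term, $R_s=R_{s,m^\ast}$, and applying the per-LED bounds at $m=m^\ast$ gives $L_{m^\ast}\le R_s\le U_{m^\ast}$. To recast these in the $\max_m\{\cdot\}$ form of the theorem, I would argue that the GS criterion is consistent with the ordering of the per-LED bounds, i.e.\ the LED maximizing the channel-quality gap $h_{{\rm B},m}/\sigma_{\rm B}-h_{{\rm E},m}/\sigma_{\rm E}$ also maximizes $L_m$ and $U_m$, so that $L_{m^\ast}=\max_m L_m$ and $U_{m^\ast}=\max_m U_m$. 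The bracketed expressions themselves are then copied verbatim from (\ref{equ1}) and (\ref{equ4}), with $c$ fixed by (\ref{equ2}) in the $\alpha\neq 0.5$ branch, so no new integration or optimization is needed; this mirrors the derivation of \emph{Theorem \ref{them7}} for the two-constraint case.

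The main obstacle is precisely that consistency claim, and it matters most for the lower bound. For the upper bound the $\max_m$ envelope is free: $R_s=R_{s,m^\ast}\le U_{m^\ast}\le\max_m U_m$ holds whether or not $m^\ast$ attains the maximum. For the lower bound, however, the guaranteed bound is only $R_s\ge L_{m^\ast}$, so the stated inequality $R_s\ge\max_m L_m$ is genuinely valid only when $m^\ast$ also maximizes $L_m$. This is delicate because $L_m$ depends on $h_{{\rm B},m}$ and $h_{{\rm E},m}$ separately---at high SNR through the ratio $h_{{\rm B},m}/h_{{\rm E},m}$ rather than through the difference $h_{{\rm B},m}/\sigma_{\rm B}-h_{{\rm E},m}/\sigma_{\rm E}$ that GS optimizes---so the two maximizers need not coincide in full generality. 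I would close this gap either by restricting to the physically typical regime in which the channel gains are ordered consistently (so both maximizers agree), or by reading $\max_m$ as shorthand for evaluation at the greedy index $m^\ast$ actually realized by the scheme; establishing the monotonicity of $L_m$ in the GS metric rigorously is the step I expect to require the most care.
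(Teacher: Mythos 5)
Your proposal is correct and takes essentially the same route as the paper, which in fact gives no separate proof of Theorem \ref{them8}: the theorem is stated as an ``update'' of Theorems \ref{them3} and \ref{them4}, obtained exactly as you describe by collapsing the per-index decomposition onto the greedy index $m^\ast$ of (\ref{eqw1}) and inserting the per-LED bounds. The reservation you raise about the lower bound is genuine and is glossed over by the paper: since GS maximizes the difference $h_{{\rm B},m}/\sigma_{\rm B}-h_{{\rm E},m}/\sigma_{\rm E}$ while $L_m$ is governed (at high SNR) by the ratio $h_{{\rm B},m}\sigma_{\rm E}/(h_{{\rm E},m}\sigma_{\rm B})$, the two maximizers can differ, in which case only $R_s\ge L_{m^\ast}$ is justified rather than $R_s\ge\max_m L_m$; the two readings coincide in the paper's numerical setting, where $h_{\rm B}/h_{\rm E}$ is held fixed across LEDs, which is presumably why the discrepancy went unnoticed.
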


\section{Numerical Results}
\label{section5}
In this section, some classic numerical evaluations of the secrecy rate for the indoor SM based VLC system are shown.
Here, we consider a typical three-node indoor VLC system with room size $5 \;{\rm m} \times 4 \;{\rm m} \times 3 \;{\rm m}$.
In the system, $M = 8$ LEDs (i.e., Alice) are installed on the ceiling with a height of 3 m,
while the legitimate receiver (Bob) and the eavesdropper (Eve) are located at the height of 0.8 m.
The coordinates of Alice and Bob are shown in Table \ref{tab1}.
In addition, the noise variances of both Bob and Eve are set to be $\sigma _{\rm{B}}^2 = \sigma _{\rm{E}}^2 =  - 104 \;\rm{dBm}$.

\begin{table}[!h]
\caption{Positions of Alice and Bob.}
\begin{center}
\begin{tabularx}{14cm}{|p{1.2cm}|X|p{1.8cm}|}
\hline\hline
\centering Scenarios  &\centering Alice  & \centering Bob
\tabularnewline\hline
\centering Positions &\centering (1, 2, 3), (1, 3, 3), (2, 2, 3), (2, 3, 3), (3, 2, 3), (3, 3, 3), (4, 2, 3), (4, 3, 3) & \centering (2.5, 1.5, 0.8)
\tabularnewline\hline\hline
\end{tabularx}
\end{center}
\label{tab1}
\end{table}

\begin{figure}
\centering
\includegraphics[width=8.5cm]{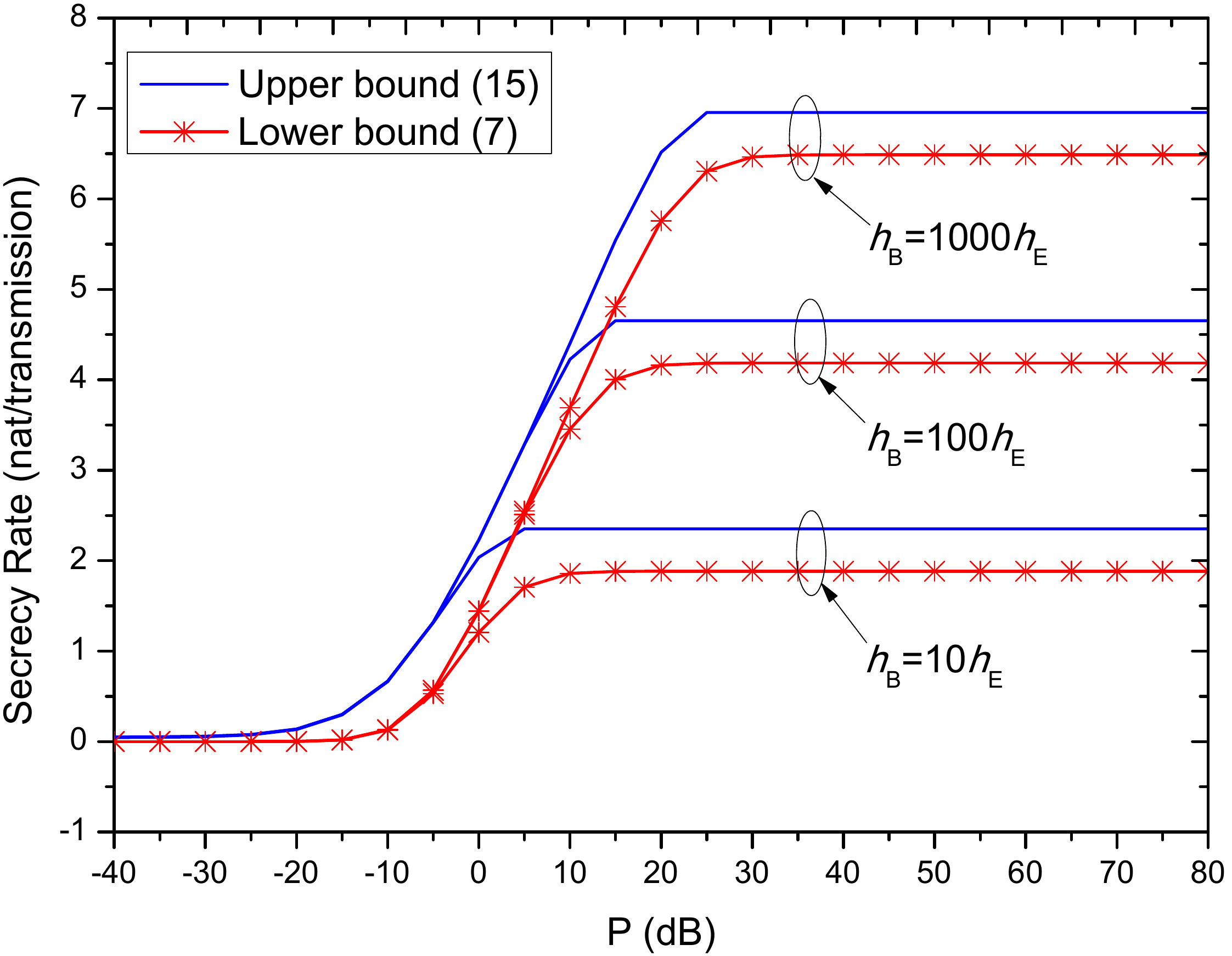}
\caption{Secrecy rate bounds versus $P$ with different ${h_{\rm{B}}}/{h_{\rm{E}}}$ when $\xi  = 0.5$.}
 \label{fig2}
\end{figure}

\subsection{Results of SM Based VLC with Constraints (\ref{eq3_1}) and (\ref{eq3_3})}

Fig. \ref{fig2} depicts the secrecy rate bounds versus $P$ with different ${h_{\rm{B}}}/{h_{\rm{E}}}$ when $\xi  = 0.5$.
It can be observed that, when ${h_{\rm{B}}}/{h_{\rm{E}}}$ takes a different value, the bounds of secrecy rate varies.
Specifically, as the increase of ${h_{\rm{B}}}/{h_{\rm{E}}}$, the secrecy rate bounds also increase.
Moreover, as the increase of $P$, all the secrecy rate bounds increase first and then tend to stable values.
To show the asymptotic behavior at high SNR, Table \ref{tab2} shows the secrecy performance gaps between lower bound (\ref{eq6}) and upper bound (\ref{eq25}).
The results show that the performance gap is about 0.4674 nat/transmission,
which coincides with the conclusion in \emph{Remark \ref{rem1}}.
In other words, the difference between the lower and upper bounds is small,
which demonstrates the correctness of the derived secrecy rate bounds.

\begin{table}
\caption{Performance gaps between (\ref{eq6}) and (\ref{eq25}) when $\xi  = 0.5$.}
\centering
\setlength{\tabcolsep}{6pt}
	\begin{tabular}{|c|c|c|c|}
		\hline\hline
        \multicolumn{1}{|c|}{ \multirow{2}*{$P$ ({\rm dB})} }& \multicolumn{3}{c|}{Performance gaps (nat/transmission)} \\
        \cline{2-4}
		{} & ${h_{\rm{B}}} = 10{h_{\rm{E}}}$ & ${h_{\rm{B}}} = 100{h_{\rm{E}}}$ & ${h_{\rm{B}}} = 1000{h_{\rm{E}}}$ \\
        \hline
        30 & 0.46736 & 0.46762 & 0.49213 \\
        \hline
        40 & 0.46735 & 0.46736 & 0.46761 \\
        \hline
        50 & 0.46735 & 0.46736 & 0.46736 \\
        \hline
        60 & 0.46735 & 0.46736 & 0.46735 \\
        \hline
        70 & 0.46735 & 0.46736 & 0.46735 \\
        \hline
        80 & 0.46735 & 0.46736 & 0.46735 \\
        \hline\hline
	\end{tabular}
\label{tab2}	
\end{table}

Fig. \ref{fig3} plots the relationship between the secrecy rate bounds and $\xi $ with different $P$ when ${h_{\rm{B}}}/{h_{\rm{E}}} = 1000$.
For small $\xi $, a rapid increase in the secrecy rate bounds can be observed with the increase of $\xi$.
However, for large $\xi$, the secrecy rate bounds increase slowly and then tend to stable values as the increase of $\xi$.
Moreover, with the increase of $P$, the secrecy rate performance also improves.
This indicates that an indoor VLC system with larger nominal optical intensity has better performance.

\begin{figure}
\centering
\includegraphics[width=8.5cm]{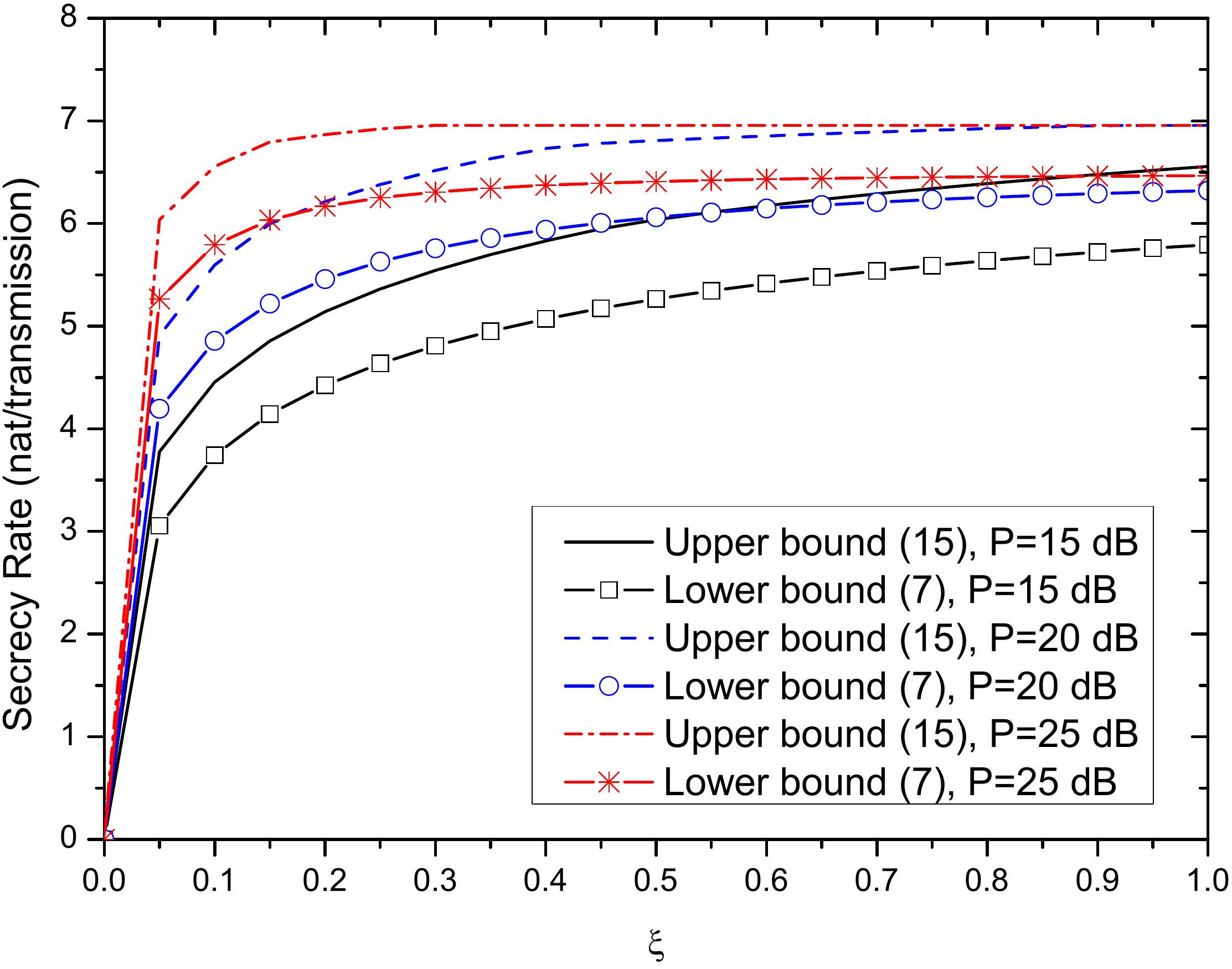}
\caption{Secrecy rate bounds versus $\xi $ with different $P$ when ${h_{\rm{B}}}/{h_{\rm{E}}} = 1000$.}
 \label{fig3}
\end{figure}

Fig. \ref{fig4} plots the secrecy rate bounds versus ${h_{\rm{B}}}/{h_{\rm{E}}}$ with different $P$ when $\xi=0.5$.
As can be observed, when ${h_{\rm{B}}}/{h_{\rm{E}}}<1$,
the main channel is worse than the eavesdropping channel, all secrecy rate bounds are zeros.
In this case, secure information transmission cannot be achieved.
When ${h_{\rm{B}}}/{h_{\rm{E}}} \in (1, 10^{4}]$,
the secrecy rate bounds become nonnegative and increase rapidly with the increase of ${h_{\rm{B}}}/{h_{\rm{E}}}$.
When ${h_{\rm{B}}}/{h_{\rm{E}}} > 10^{4}$, the secrecy rate bounds will not increase anymore.
Moreover, when $h_{\rm{B}}/h_{\rm{E}}$ is small, no matter how large the nominal optical intensity $P$ is, the values of upper bounds on secrecy rate are almost the same as each other. In this case, increasing the transmit optical intensity cannot enhance the secrecy performance.
However, when $h_{\rm{B}}/h_{\rm{E}}$ is large, the trends of secrecy performance improvement become apparent by increasing $P$.

\begin{figure}
\centering
\includegraphics[width=8.5cm]{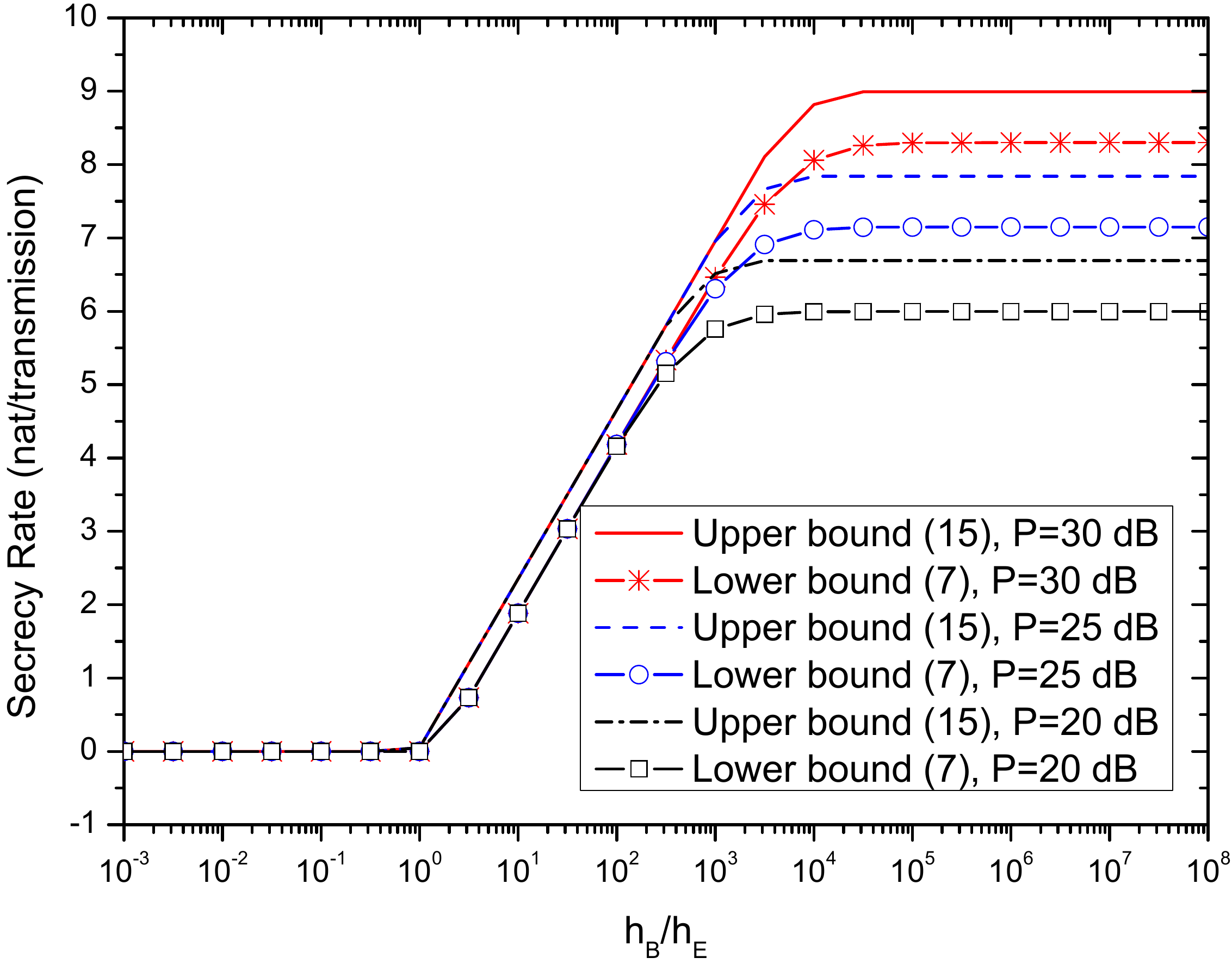}
\caption{Secrecy rate bounds versus ${h_{\rm{B}}}/{h_{\rm{E}}}$ with different $P$ when $\xi=0.5$.}
 \label{fig4}
\end{figure}



\subsection{Results of SM Based VLC with Constraints (\ref{eq3_1}), (\ref{eq3_2}) and (\ref{eq3_3})}

Fig. \ref{fig6} shows the secrecy rate bounds versus $P$ with different ${h_{\rm{B}}}/{h_{\rm{E}}}$ when $A=P$. Specifically, Fig. \ref{fig6}(a) shows the results with $\xi=0.5$, while Fig. \ref{fig6}(b) corresponds to $\xi=0.3$.
Similar to Fig. \ref{fig2}, the secrecy rate bounds first increase and then tend to stable values with the increase of $P$.
Furthermore, as the increase of ${h_{\rm{B}}}/{h_{\rm{E}}}$, the secrecy performance enhances.
When $\xi=0.5$ in Fig. \ref{fig6}(a), the performance gaps between the lower bound (\ref{equ1}) and the upper bound (\ref{equ4}) are small.
To quantitatively quantify the performance gaps when $\xi=0.5$, Table \ref{tab3} is presented.
At high SNR in Table \ref{tab3}, the performance gap between asymptotic lower bound and asymptotic upper bound of secrecy rates is about 0.1765 nat/transmission, which is consistent with the result in \emph{Remark \ref{rem2}}.
When $\xi=0.3$ in Fig. \ref{fig6}(b), the performance gaps between the lower bound (\ref{equ1}) and the upper bound (\ref{equ4}) is provided in Table \ref{tab4}. In this case, the asymptotic performance gap at high SNR is about 0.2676 nat/transmission, which is the same as the result in \emph{Remark \ref{rem3}}.

%

\begin{figure}
\centering
\includegraphics[width=8.5cm]{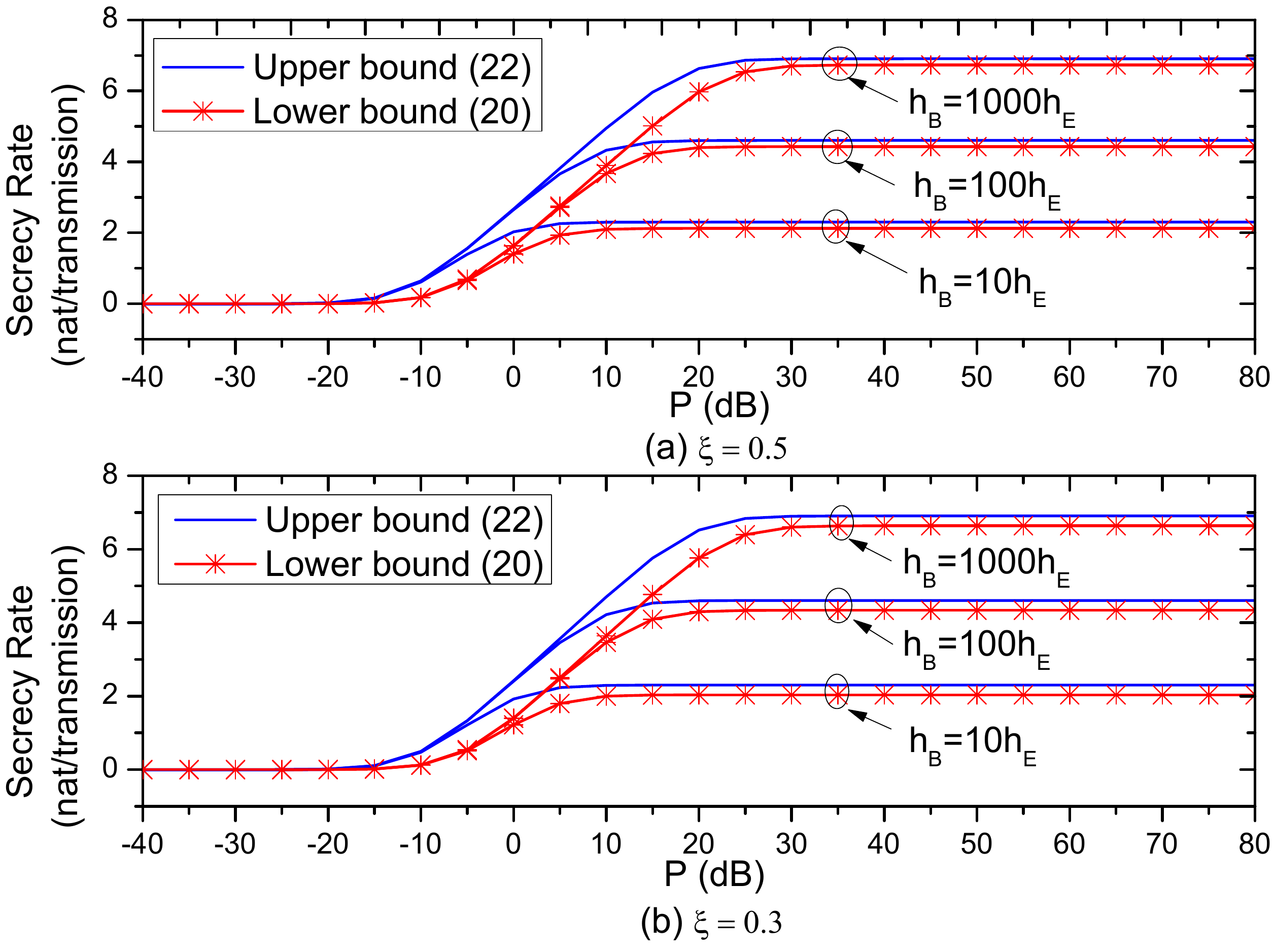}
\caption{Secrecy rate bounds versus $P$ with different $h_{\rm{B}}/h_{\rm{E}}$ when $A=P$.}
 \label{fig6}
\end{figure}

\begin{table}
\caption{Performance gaps between (\ref{equ1}) and (\ref{equ4}) when $A=P$ and $\xi  = 0.5$.}
\centering
\setlength{\tabcolsep}{6pt}
	\begin{tabular}{|c|c|c|c|}
		\hline\hline
        \multicolumn{1}{|c|}{ \multirow{2}*{$P$ ({\rm dB})} }& \multicolumn{3}{c|}{Performance gaps (nat/transmission)} \\
        \cline{2-4}
		{} & ${h_{\rm{B}}} = 10{h_{\rm{E}}}$ & ${h_{\rm{B}}} = 100{h_{\rm{E}}}$ & ${h_{\rm{B}}} = 1000{h_{\rm{E}}}$ \\
        \hline
        30 & 0.17650 & 0.17672 & 0.19849 \\
        \hline
        40 & 0.17649 & 0.17649 & 0.17672 \\
        \hline
        50 & 0.17649 & 0.17649 & 0.17649 \\
        \hline
        60 & 0.17649 & 0.17649 & 0.17649 \\
        \hline
        70 & 0.17649 & 0.17649 & 0.17649 \\
        \hline
        80 & 0.17649 & 0.17649 & 0.17649 \\
        \hline\hline
	\end{tabular}
\label{tab3}	
\end{table}

\begin{table}
\caption{Performance gaps between (\ref{equ1}) and (\ref{equ4}) when $A=P$ and $\xi  = 0.3$.}
\centering
\setlength{\tabcolsep}{6pt}
	\begin{tabular}{|c|c|c|c|}
		\hline\hline
        \multicolumn{1}{|c|}{ \multirow{2}*{$P$ ({\rm dB})} }& \multicolumn{3}{c|}{Performance gaps (nat/transmission)} \\
        \cline{2-4}
		{} & ${h_{\rm{B}}} = 10{h_{\rm{E}}}$ & ${h_{\rm{B}}} = 100{h_{\rm{E}}}$ & ${h_{\rm{B}}} = 1000{h_{\rm{E}}}$ \\
        \hline
        30 & 0.26763 & 0.26792 & 0.75651 \\
        \hline
        40 & 0.26762 & 0.26762 & 0.29605 \\
        \hline
        50 & 0.26762 & 0.26762 & 0.26793 \\
        \hline
        60 & 0.26762 & 0.26762 & 0.26762 \\
        \hline
        70 & 0.26762 & 0.26762 & 0.26762 \\
        \hline
        80 & 0.26762 & 0.26762 & 0.26762 \\
        \hline\hline
	\end{tabular}
\label{tab4}	
\end{table}

Fig. \ref{fig7} plots the secrecy rare bounds versus $\xi$ with different $A$ when $P=A$ and $h_{\rm B}/h_{\rm E}=1000$.
The changing trends of secrecy rate bounds in this figure is different from that in Fig. \ref{fig3}.
As can be seen, the curves of all lower bounds on secrecy rate (\ref{equ1}) are symmetric with respect to $\xi=0.5$,
and the maximum values of the lower bounds are achieved at $\xi=0.5$.
However, the upper bounds of secrecy rate (\ref{equ4}) always increase with the increase of $\xi$.
Moreover, the larger the peak optical intensity $A$ is, the slower the increasing trend of the upper bound (\ref{equ4}) becomes.
Furthermore, with the increase of $A$, the performance gaps between the lower and upper bounds of secrecy rate become smaller and smaller.

\begin{figure}
\centering
\includegraphics[width=8.5cm]{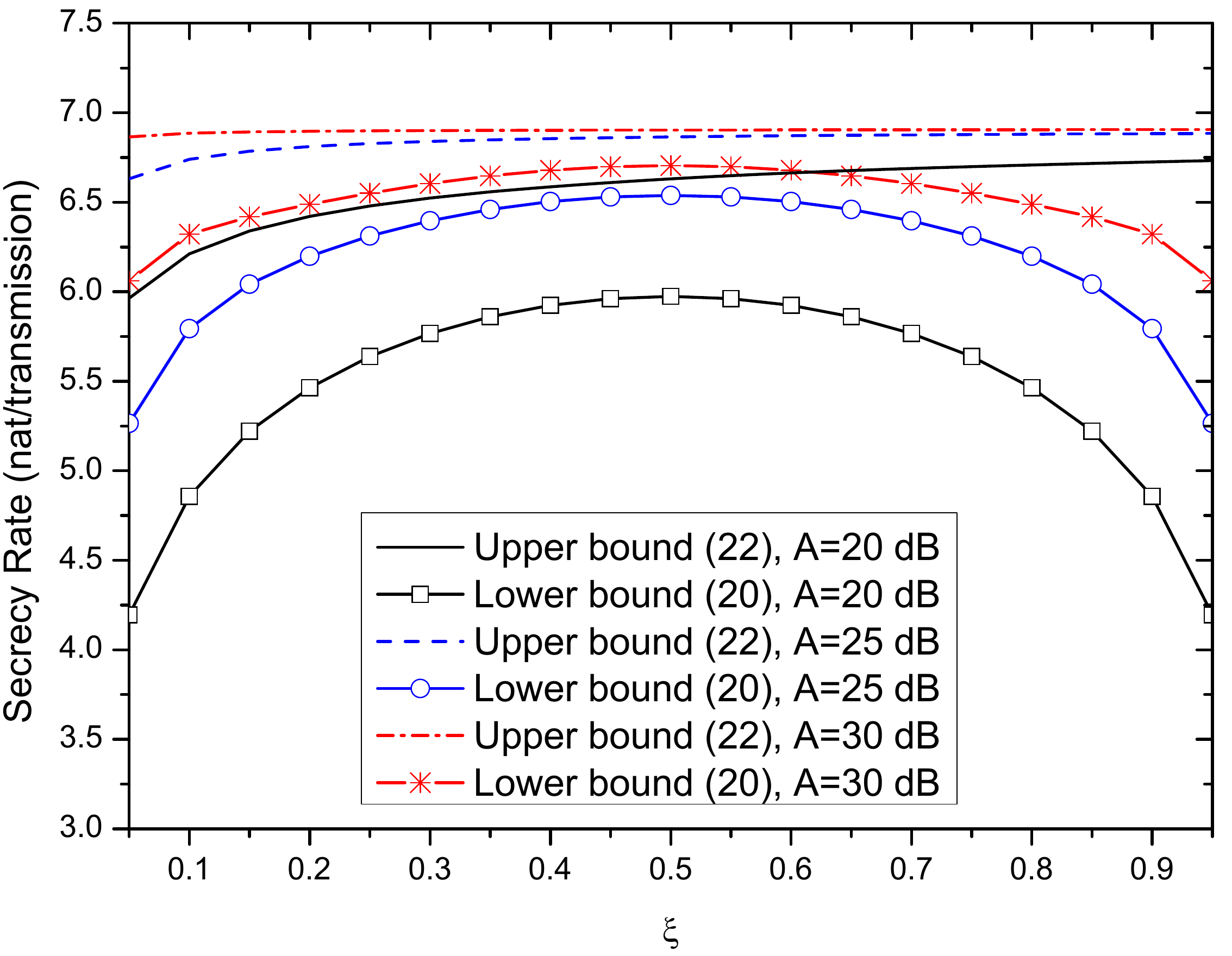}
\caption{Secrecy rare bounds versus $\xi$ with different $A$ when $P=A$ and $h_{\rm B}/h_{\rm E}=1000$.}
 \label{fig7}
\end{figure}

Fig. \ref{fig8} depicts the secrecy rate bounds versus ${h_{\rm{B}}}/{h_{\rm{E}}}$ with different $A$ when $\xi=0.5$ and $A=P$. Similar to Fig. \ref{fig4}, when ${h_{\rm{B}}}/{h_{\rm{E}}}<1$, all secrecy rate bounds are zeros.
The secrecy rate bounds also increase rapidly when ${h_{\rm{B}}}/{h_{\rm{E}}} \in (1, 10^{4}]$, but the secrecy rate bounds do not increase any more when ${h_{\rm{B}}}/{h_{\rm{E}}} >10^{4}$. In Fig. \ref{fig8}, for small ${h_{\rm{B}}}/{h_{\rm{E}}}$, the performance gaps between (\ref{equ1}) and (\ref{equ4}) are small and they can be ignored. Moreover, the secrecy rate bounds do not change with $A$ for small ${h_{\rm{B}}}/{h_{\rm{E}}}$, which is similar to the conclusion in Fig. \ref{fig4}.

\begin{figure}
\centering
\includegraphics[width=8.5cm]{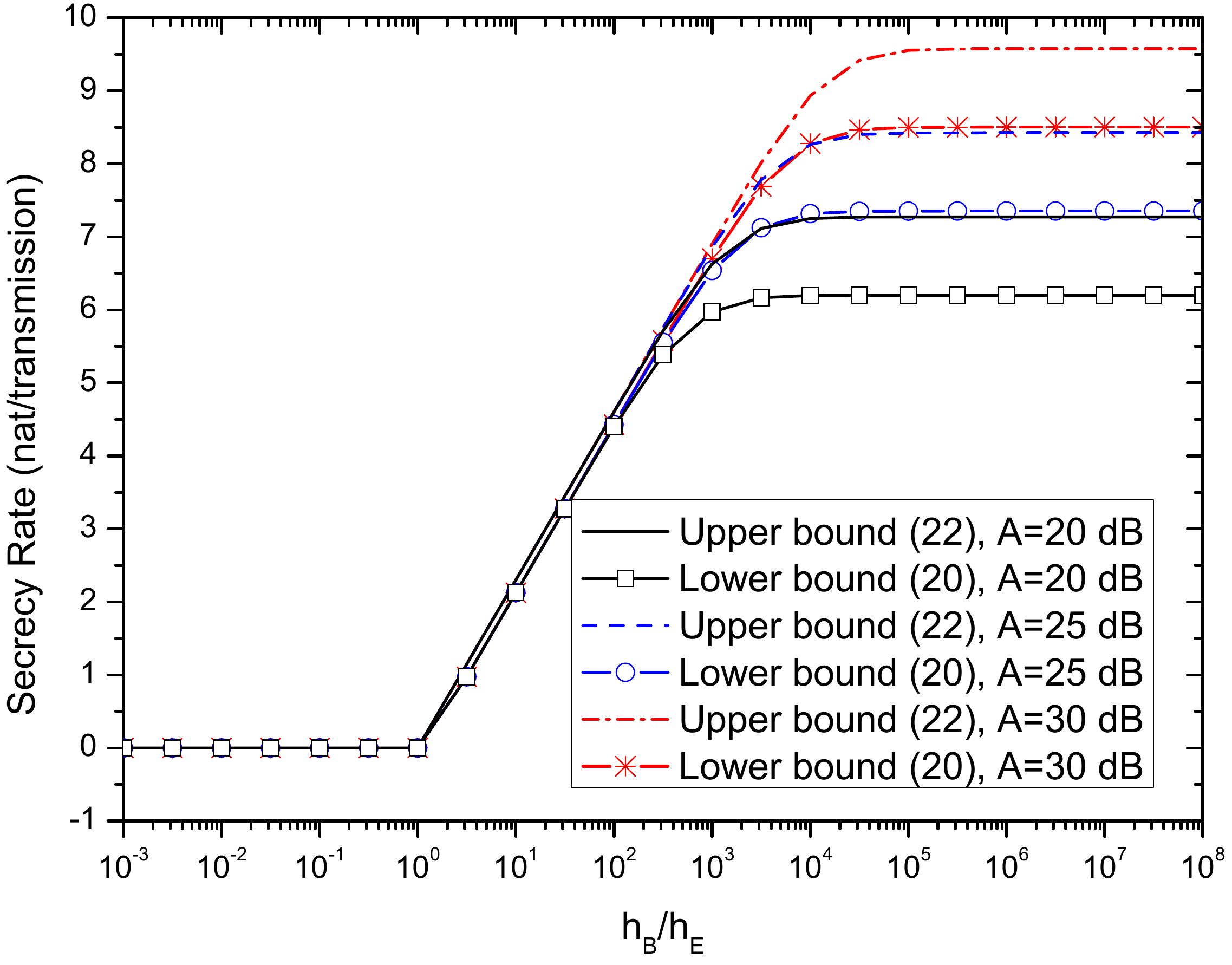}
\caption{Secrecy rate bounds versus ${h_{\rm{B}}}/{h_{\rm{E}}}$ with different $A$ when $\xi=0.5$ and $A=P$.}
 \label{fig8}
\end{figure}

\subsection{Results Comparisons Among the US, CAS, and GS schemes}
When given the positions of Alice, the positions of Bob and Eve have a large impact on the performance of the transmitter selection schemes.
To compare the performance of the US, CAS and GS schemes, the average secrecy rate is used as performance evaluation metric.
Bob traverses his position over the whole receiver plane,
while the position of Eve is selected by satisfying $h_{\rm B}/h_{\rm E}=1000$.

\begin{figure}
\centering
\includegraphics[width=8.5cm]{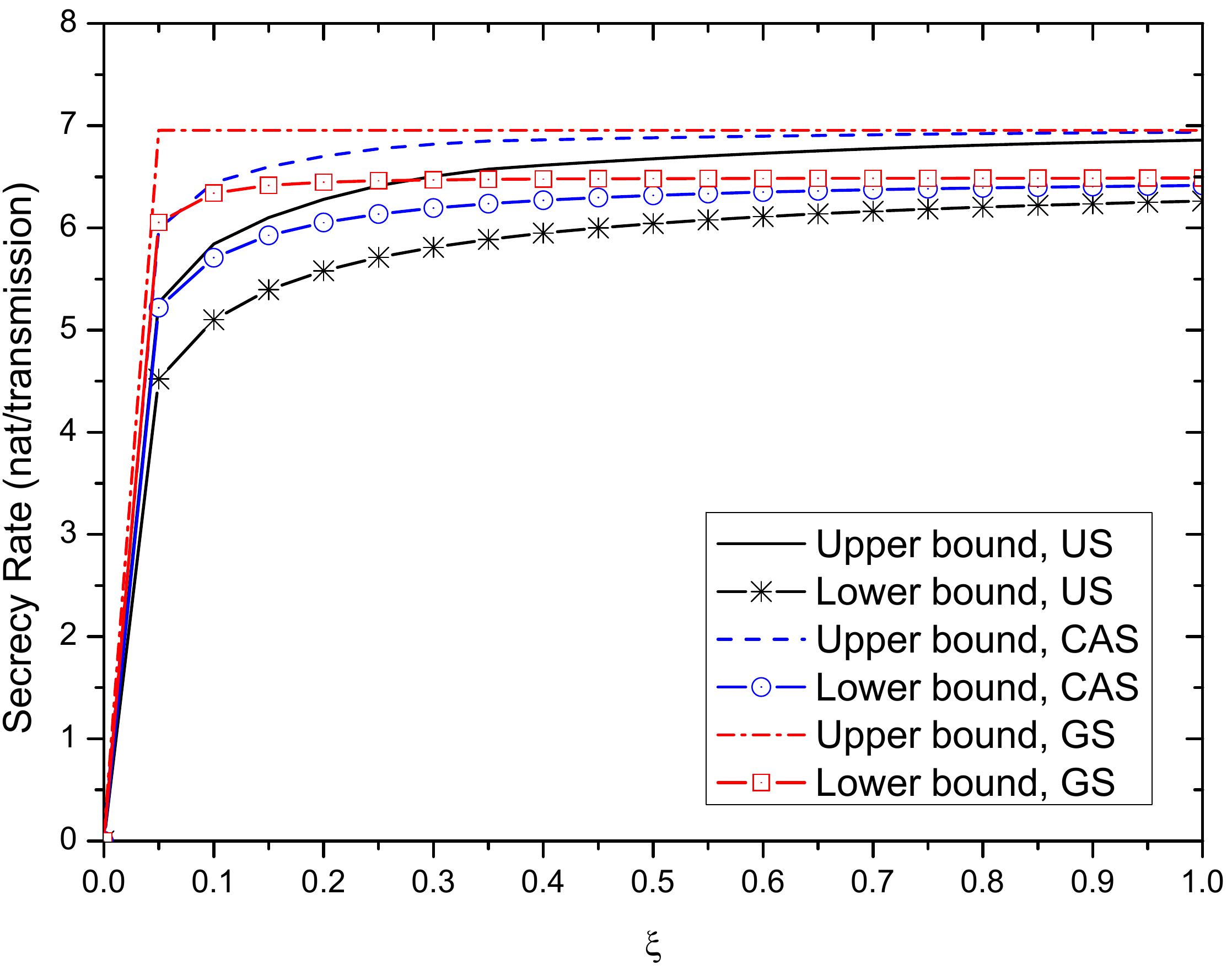}
\caption{Average secrecy rate bounds versus $\xi$ for three transmitter selection schemes when $P=25$ dB and $h_{\rm B}/h_{\rm E}=1000$.}
 \label{fig9}
\end{figure}

\begin{figure}
\centering
\includegraphics[width=8.5cm]{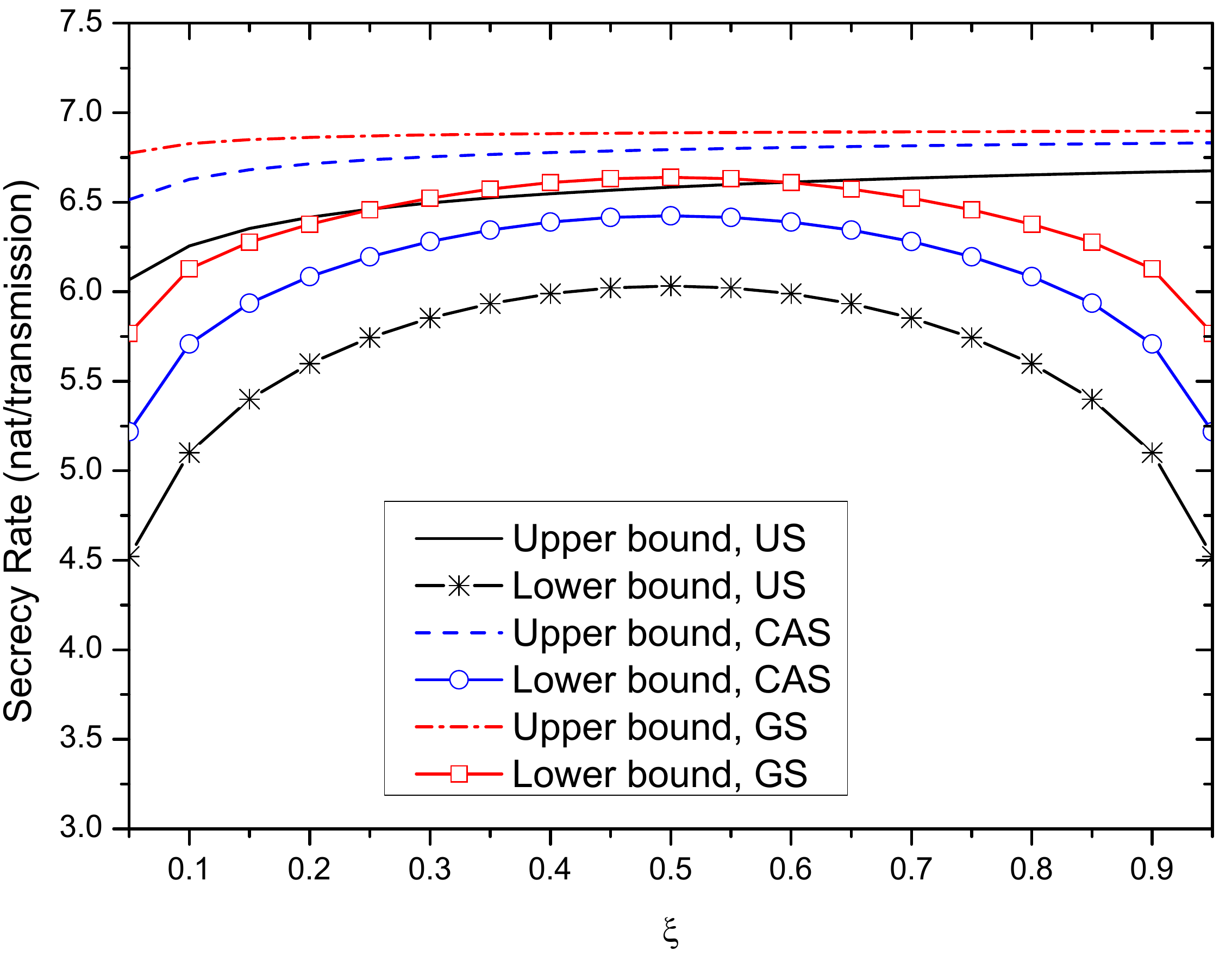}
\caption{Average secrecy rate bounds versus $\xi$ for three transmitter selection schemes when $A=P=25$ dB and $h_{\rm B}/h_{\rm E}=1000$.}
 \label{fig10}
\end{figure}

With the constraints (\ref{eq3_1}) and (\ref{eq3_3}),
Fig. \ref{fig9} shows the average secrecy rate bounds versus $\xi$ for three transmitter selection schemes when $P=25$ dB and $h_{\rm B}/h_{\rm E}=1000$.
As can be seen, the largest secrecy rate is achieved by the GS scheme,
the second largest secrecy rate is obtained by the CAS scheme,
and the smallest secrecy rate is got by the US scheme.
This indicates that selecting transmitter with equal probability may not be a good scheme in practical SM system.
The newly proposed GS and CAS schemes can provide large secrecy performance gains over the US scheme.
When $\xi$ is small, the performance gains of the GS and CAS schemes are dramatically.
With the increase of $\xi$, such performance gains tend to diminish.

With the constraints (\ref{eq3_1}), (\ref{eq3_2}) and (\ref{eq3_3}),
Fig. \ref{fig10} shows the average secrecy rate bounds versus $\xi$ for three transmitter selection schemes when $A=P=25$ dB and $h_{\rm B}/h_{\rm E}=1000$.
Similar to Fig. \ref{fig9}, the GS scheme achieves the largest secrecy rate,
while the US scheme is the worst scheme.
Different from Fig. \ref{fig9}, the secrecy performance gains of the GS and CAS schemes in this figure are obvious for all $\xi$.

\section{Conclusions}
\label{section6}
The secrecy performance for the indoor VLC system using SM scheme is studied in this paper.
The VLC system is consisted of $M$ transmitters, a legitimate receiver and an eavesdropper.
At each time instant, only one transmitter is active via employing the SM scheme.
The US scheme is used to choose the active transmitter.
Under the non-negativity and average optical intensity constraints,
the lower and upper bounds on secrecy rate are derived, respectively.
By considering an additional peak optical intensity constraint,
newly secrecy rate bounds are further obtained.
At high SNR, the asymptotic performance gaps between asymptotic lower and upper bounds are small.
Numerical results verify the tightness of the derived lower and upper bounds.
To further improve the secrecy performance, the CAS and GS schemes are proposed.

In this paper, both Bob and Eve are fixed on the floor.
However, when considering the randomness of the receivers' positions, the derived results in this paper can not be used.
Therefore, analyzing the stochastic secrecy performance is the natural next step.


\numberwithin{equation}{section}
\appendices
\section{Proof of Lower Bound (\ref{eq6}) in Theorem \ref{them1}}
\label{appa}
\renewcommand{\theequation}{A.\arabic{equation}}
The objective function in (\ref{eq5}) can be further rewritten as
\begin{eqnarray}
{R_s} &=& \mathop {\max }\limits_{{f_X}\left( x \right)} \frac 1M \sum\limits_{m = 1}^M \left[  {{\cal I}(X,{h_{{\rm{B}},m}};{Y_{\rm{B}}})} - \sum\limits_{m = 1}^M {{\cal I}(X,{h_{{\rm{E}},m}};{Y_{\rm{E}}})}  \right] \nonumber \\
 &=& \mathop {\max }\limits_{{f_X}\left( x \right)} \frac{1}{M}\sum\limits_{m = 1}^M \left[{\cal H}\left( {{Y_{\rm{B}}}} \right) - {\cal H}\left( {{Y_{\rm{E}}}} \right)+ {\cal H}\left( {{Y_{\rm{E}}}\left| {{h_{{\rm{E}},m}},X} \right.} \right) - {\cal H}\left( {{Y_{\rm{B}}}\left| {{h_{{\rm{B}},m}},X} \right.} \right) \right].
 \label{eq7}
\end{eqnarray}
where ${\cal H}( \cdot )$ denotes the entropy.

Referring to (\ref{eq1}), the PDF of $f_{\left. {{Y_{k}}} \right|{h_{{k},m}},X}\left( {\left. {{y_{k}}} \right|{h_{{k},m}},x} \right)$ ($k=$B or E) can be written as
\begin{eqnarray}
f_{{\left. {{Y_{k}}} \right|{h_{{k},m}},X}}\!\left( {\left. {{y_{k}}} \right|{h_{{k},m}},x} \right) \!\!=\!\! \frac{1}{{\sqrt {2\pi}\sigma _{k} }}e^{ - \frac{{{{({y_{k}} \!-\! {h_{{k},m}}x)}^2}}}{{2\sigma _{k}^2}}}.
\label{eq11}
\end{eqnarray}
Therefore, ${\cal H}\left( {\left. {{Y_{k}}} \right|{h_{{k},m}},X} \right)$ is derived as
\begin{eqnarray}
{\cal H}\left( {\left. {{Y_{k}}} \right|{h_{{k},m}},X} \right) = \frac{1}{2}\ln \left( {2\pi e\sigma _{k}^2} \right).
\label{eq11}
\end{eqnarray}
Substituting (\ref{eq11}) into (\ref{eq7}), we have
\begin{equation}
{R_s} = \mathop {\max }\limits_{{f_X}\left( x \right)} \frac{1}{M}\sum\limits_{m = 1}^M {\left[ {{\cal H}\left( {{Y_{\rm{B}}}} \right) - {\cal H}\left( {{Y_{\rm{E}}}} \right)} \right] + \ln \left( {\frac{{{\sigma _{\rm{E}}}}}{{{\sigma _{\rm{B}}}}}} \right)}.
 \label{eq12}
\end{equation}

According to the entropy-power inequality \cite{BIB16}, ${\cal H}\left( {{Y_{\rm{B}}}} \right)$ in (\ref{eq12}) can be lower-bounded by
\begin{eqnarray}
{\cal H}\left( {{Y_{\rm{B}}}} \right) &=& {\cal H}\left( {{h_{{\rm{B}},m}}X + {Z_{\rm{B}}}} \right) \nonumber \\
 &\ge&
 \frac{1}{2}\ln \left[ {{e^{2\left[ {{\cal H}(X) + \ln ({h_{{\rm{B}},m}})} \right]}} + 2\pi e\sigma _{\rm{B}}^2} \right].
 \label{eq13}
\end{eqnarray}

Moreover, ${\cal H}\left( {{Y_{\rm{E}}}} \right)$ is given by
\begin{equation}
{\cal H}\left( {{Y_{\rm{E}}}} \right) = \frac{1}{2}\ln \left[ {2\pi e{\mathop{\rm var}} \left( {{Y_{\rm{E}}}} \right)} \right].
 \label{eq14}
\end{equation}
where ${\mathop{\rm var}} ( \cdot )$ denotes the variance of a random variable.

Substituting (\ref{eq13}) and (\ref{eq14}) into (\ref{eq12}), the lower bound of the secrecy rate is given by
\begin{eqnarray}
{R_s} \ge \mathop {\max }\limits_{{f_X}\left( x \right)} \frac{1}{{2M}}\sum\limits_{m = 1}^M {\ln \left[ {\frac{{{e^{2\left[ {{\cal H}( X) + \ln ({h_{{\rm{B}},m}})} \right]}} + 2\pi e\sigma _{\rm{B}}^2}}{{2\pi e{\mathop{\rm var}} \left( {{Y_{\rm{E}}}} \right)}}} \right]}
 + \ln \left( {\frac{{{\sigma _{\rm{E}}}}}{{{\sigma _{\rm{B}}}}}} \right).
 \label{eq15}
\end{eqnarray}
To obtain a tight lower bound, a good input PDF can be derived by maximizing ${\cal H}(X)$ under constraints (\ref{eq3_1}) and (\ref{eq3_3}).
Referring to \cite{BIB13}, the optimal input PDF is given by
\begin{equation}
{f_X}\left( x \right) = \frac{1}{{\xi P}}{e^{ - \frac{1}{{\xi P}}x}},x \ge 0.
 \label{eq17}
\end{equation}
Based on the input PDF (\ref{eq17}), we can get
\begin{equation}
\left\{ {\begin{array}{*{20}{c}}
{{\cal H}\left( X \right) = \ln \left( {e\xi P} \right)}\\
{{\mathop{\rm var}} \left( {{Y_{\rm{E}}}} \right) = h_{{\rm{E}},m}^2{\xi ^2}{P^2} + \sigma _{\rm{E}}^2}
\end{array}} \right..
 \label{eq18}
\end{equation}
Substituting (\ref{eq18}) into (\ref{eq15}), \emph{Theorem \ref{them1}} holds.

\section{Proof of (\ref{eq21}) in Lemma \ref{le1}}
\label{appc}
\renewcommand{\theequation}{B.\arabic{equation}}
According to the definitions, we have \cite{BIB19}
\begin{eqnarray}
I(\!X;{Y'_{{\rm B},m}}\!\left| {{Y'_{{\rm E},m}}} \right.\! ) \!\!=\!\!\! \int_0^{ \infty }\!\!\!\! \int_{ - \infty }^{ \infty } \!\!\int_{ - \infty }^{ \infty }\!\!\!\! {f_{X{Y'_{{\rm B},m}}\!{Y'_{{\rm E},m}}}}\!\!(\! {x,\!{y'_{{\rm B},m}},\!{y'_{{\rm E},m}}}\!)\!\! \ln \!\!\frac{{{f_{{Y'_{{\rm B},m}}\!\left| {X{Y'_{{\rm E},m}}} \right.}}\!\!\!\!\!\left( {{y'_{{\rm B},m}}\!\!\left| {x,{y'_{{\rm E},m}}} \right.} \!\right)}}{{{f_{{Y'_{{\rm B},m}}\!\left| {{Y'_{{\rm E},m}}} \right.}}\!\!\!\!\left( {{y'_{{\rm B},m}}\left| {{y'_{{\rm E},m}}} \right.} \right)}}{\rm{d}}{y'_{{\rm B},m}}\!{\rm{d}}{y'_{{\rm E},m}}\!{\rm{d}}x,
\label{eqb1}
\end{eqnarray}
and
\begin{eqnarray}
&&\!\!\!\!\!\!\!\!\!{\mathbb{E}_{X{Y'_{{\rm E},m}}}}\!\!\!\left[\!\! {D\!\!\left(\!\! {{f_{{Y'_{{\rm B},m}}\left| {{Y'_{{\rm E},m}}} \right.}}\!\!\!\!\left( {{y'_{{\rm B},m}}\left| {{Y'_{{\rm E},m}}} \right.} \!\!\right)\!\!\left\| {{g_{{Y'_{{\rm B},m}}\left| {{Y'_{{\rm E},m}}} \right.}}\!\!\!\!\left( {{y'_{{\rm B},m}}\left| {{Y'_{{\rm E},m}}} \right.}\!\! \right)}\!\! \right.} \right)}\!\! \right]\nonumber \\
&&\!\!\!\!\!\!\! =  \!\!{\int_0^{ + \infty }\!\! {\int_{ - \infty }^{ + \infty }\!\! {\int_{ - \infty }^{ + \infty } \!\! {{f_{X{Y'_{{\rm B},m}}{Y'_{{\rm E},m}}}}\left( {x,{y'_{{\rm B},m}},{y'_{{\rm E},m}}} \right)} } } }  \ln \frac{{{f_{{Y'_{{\rm B},m}}\left| {{Y'_{{\rm E},m}}} \right.}}\left( {{y'_{{\rm B},m}}\left| {{y'_{{\rm E},m}}} \right.} \right)}}{{{g_{{Y'_{{\rm B},m}}\left| {{Y'_{{\rm E},m}}} \right.}}\left( {{y'_{{\rm B},m}}\left| {{y'_{{\rm E},m}}} \right.} \right)}}{\rm d}{y'_{{\rm B},m}}{\rm d}{y'_{{\rm E},m}}{\rm d}x.
\label{eqb2}
\end{eqnarray}
Combining (\ref{eqb1}) with (\ref{eqb2}), we can get
\begin{eqnarray}
 {\mathbb{E}_{X{Y'_{{\rm E},m}}}}\!\!\!\left[\!\! {D\!\!\left(\!\! {{f_{{Y'_{{\rm B},m}}\left| {{Y'_{{\rm E},m}}} \right.}}\!\!\!\!\left( {{y'_{{\rm B},m}}\left| {{Y'_{{\rm E},m}}} \right.} \!\!\right)\!\!\left\| {{g_{{Y'_{{\rm B},m}}\left| {{Y'_{{\rm E},m}}} \right.}}\!\!\!\!\left( {{y'_{{\rm B},m}}\left| {{Y'_{{\rm E},m}}} \right.}\!\! \right)}\!\! \right.} \right)}\!\! \right] +I(X;{Y'_{{\rm B},m}}\left| {{Y'_{{\rm E},m}}} \right. )  = {\mathbb{E}_{X{Y'_{{\rm E},m}}}}\{u\}.
 \label{eq20}
\end{eqnarray}
Because the relative entropy on the left hand side of (\ref{eq20}) is non-negative, \emph{Lemma \ref{le1}} holds.

\section{Proof of Lower Bound (\ref{eq25}) in Theorem \ref{them2}}
\label{appb}
\renewcommand{\theequation}{C.\arabic{equation}}
According to the information theory, eq. (\ref{eq24}) can be further written as
\begin{eqnarray}
&&\!\!\!\!\!\!{R_s} \!\!\le\!\! \frac{1}{M} \!\!\sum\limits_{m = 1}^M \!\!\left\{\!\!\underbrace {{{\mathbb{E}}_{{X^*}}}\!\!\left\{\! {\int_{ - \infty }^{ + \infty } \!\!\!\! {\int_{ - \infty }^{ + \infty }\!\!\!\! {{f_{{Y'_{{\rm B},m}}{Y'_{{\rm E},m}}\!\left| {X} \right.}}\!\!\!\left( {{y'_{{\rm B},m}},{y'_{{\rm E},m}}\!\left| {X} \right.}\! \right)\!\!\ln\!\! \left[\! {{f_{{Y'_{{\rm B},m}}\left| {X{Y'_{{\rm E},m}}} \right.}}\!\!\!\left( {{y'_{{\rm B},m}}\left| {X,{y'_{{\rm E},m}}} \right.} \!\right)}\!\! \right]} }\! {\rm{d}}{y'_{{\rm B},m}}{\rm{d}}{y'_{{\rm E},m}}}\!\! \right\}}_{{I_1}} \right. \nonumber \\
&& \left.+\! \underbrace {{{\mathbb{E}}_{{X^*}}}\!\!\!\left\{\!\! { -\!\! \int_{ - \infty }^{ + \infty }\!\!\!\! {\int_{ - \infty }^{ + \infty } \!\!\!\! {{f_{{Y'_{{\rm B},m}}{Y'_{{\rm E},m}}\left| {X} \right.}}\!\!\!\left( {{y'_{{\rm B},m}},{y'_{{\rm E},m}}\!\left| {X} \right.} \right)\!\ln\!\! \left[\! {{g_{{Y'_{{\rm B},m}}\left| {{Y'_{{\rm E},m}}} \right.}}\!\!\!\left( {{y'_{{\rm B},m}}\left| {{y'_{{\rm E},m}}} \right.} \right)} \right]} } {\rm{d}}{y'_{{\rm B},m}}{\rm{d}}{y'_{{\rm E},m}}} \right\}}_{{I_2}}\right\}.
 \label{eq26}
\end{eqnarray}

In (\ref{eq26}), ${I_1}$ can be expressed as
\begin{eqnarray}
{I_1} &=&  - {\cal H}\left( {{Y'_{{\rm B},m}}\left| {{X^*},{Y'_{{\rm E},m}}} \right.} \right) \nonumber \\
 &=&  {\cal H}\!\left( {{Y'_{{\rm E},m}}\left| {{X^*}} \right.} \right) - {\cal H}\!\left( {{Y'_{{\rm B},m}}\left| {{X^*}} \right.} \right) - {\cal H}\left( {{Y'_{{\rm E},m}}\left| {{X^*},{Y'_{{\rm B},m}}} \right.} \right),
  \label{eq27}
\end{eqnarray}
where ${\cal H}( {{{Y'_{k,m}}} |{X^*}})$ ($k=$B or E) can be obtain as
\begin{equation}
{\cal H}\left( {\left. {{Y'_{k,m}}} \right|{X^*}} \right) = {\cal H}\left( {\left. {{Y'_{k,m}}} \right|X} \right) = \frac{1}{2}\ln \left( { \frac{2\pi e \sigma _{k}^2}{h_{k,m}^2}} \right).
\label{eq28}
\end{equation}
Moreover, we have
\begin{eqnarray}
{\cal H}\left( {{Y'_{{\rm E},m}}\left| {{X^*},{Y'_{{\rm B},m}}} \right.} \right) 
=\frac{1}{2}\ln \left[ {2\pi e\!\!\left( \frac{\sigma _{\rm{E}}^2}{h_{{\rm E},m}^2} \!+\! \frac{\sigma _{\rm{B}}^2}{h_{{\rm B},m}^2} \right)}\!\! \right].
\label{eq30}
\end{eqnarray}
Substituting (\ref{eq28}) and (\ref{eq30}) into (\ref{eq27}), we have
\begin{equation}
{I_1} =  - \frac{1}{2}\ln \left[ { \frac{2\pi e\sigma_{\rm B}^2}{h_{{\rm B},m}^2} \left( {1 + \frac{{h_{{\rm{E}},m}^2\sigma _{\rm{B}}^2}}{{h_{{\rm{B}},m}^2\sigma _{\rm{E}}^2}}} \right)} \right].
\label{eq31}
\end{equation}

To obtain ${I_2}$, ${g_{{Y'_{{\rm B},m}}\left| {{Y'_{{\rm E},m}}} \right.}}\left( {{y'_{{\rm B},m}}\left| {{y'_{{\rm E},m}}} \right.} \right)$ is chosen as
\begin{equation}
{g_{{Y'_{{\rm B},m}}\left| {{Y'_{{\rm E},m}}} \right.}}\left( {{y'_{{\rm B},m}}\left| {{y'_{{\rm E},m}}} \right.} \right) = \frac{1}{{2{s^2}}}{e^{ - \frac{{\left| {{y'_{{\rm B},m}} - \mu {y'_{{\rm E},m}}} \right|}}{{{s^2}}}}},
\label{eq32}
\end{equation}
where $\mu $ and $s$ are free parameters to be determined \cite{BIB13}.

Moreover, ${f_{{Y'_{{\rm{B}},m}}{Y'_{{\rm E},m}}|X}}({y'_{{\rm{B}},m}},{y'_{{\rm{E}},m}}|X)$ is given by
\begin{eqnarray}
{f_{{Y'_{{\rm{B}},m}}{Y'_{{\rm E},m}}|X}}({y'_{{\rm{B}},m}},{y'_{{\rm{E}},m}}|X) = \frac{{{e^{ - \frac{{{{({y'_{{\rm{B}},m}} - X)}^2}}}{{2\frac{{\sigma _{\rm{B}}^2}}{{h_{{\rm{B}},m}^2}}}}}}}}{{\sqrt {2\pi \frac{{\sigma _{\rm{B}}^2}}{{h_{{\rm{B}},m}^2}}} }}\frac{{{e^{ - \frac{{{{\left( {{y'_{{\rm{E}},m}} - {y'_{{\rm{B}},m}}} \right)}^2}}}{{2\left( {\frac{{\sigma _{\rm{B}}^2}}{{h_{{\rm{B}},m}^2}} + \frac{{\sigma _{\rm{E}}^2}}{{h_{{\rm{E}},m}^2}}} \right)}}}}}}{{\sqrt {2\pi \left( {\frac{{\sigma _{\rm{B}}^2}}{{h_{{\rm{B}},m}^2}} + \frac{{\sigma _{\rm{E}}^2}}{{h_{{\rm{E}},m}^2}}} \right)} }}.
\label{eq32_1}
\end{eqnarray}
Then, ${I_2}$ can be obtained as
\begin{eqnarray}
{I_2} 
 = \ln (2{s^2}) \!+\! \frac{1}{{{s^2}}}{\mathbb{E}_{{X^*}}}\!\!\left[ {\int_{ - \infty }^\infty \! {\frac{{{e^{ - \frac{{{{({y'_{{\rm{B}},m}} - X)}^2}}}{{2\frac{{\sigma _{\rm{B}}^2}}{{h_{{\rm{B}},m}^{\rm{2}}}}}}}}}}{{\sqrt {2\pi } \frac{{{\sigma _{\rm{B}}}}}{{{h_{{\rm{B}},m}}}}}}\!\!\int_{ - \infty }^\infty \!\! {\frac{{{e^{ - \frac{{{t^2}}}{{2\left( {\frac{{\sigma _{\rm{B}}^2}}{{h_{{\rm{B}},m}^2}} + \frac{{\sigma _{\rm{E}}^2}}{{h_{{\rm{E}},m}^2}}} \right)}}}}}}{{\sqrt {2\pi \!\!\left(\! {\frac{{\sigma _{\rm{B}}^2}}{{h_{{\rm{B}},m}^2}} \!+\! \frac{{\sigma _{\rm{E}}^2}}{{h_{{\rm{E}},m}^2}}} \!\right)} }}\left| {(1 \!-\! \mu ){y'_{{\rm{B}},m}} \!-\! \mu t} \right|{\rm{d}}t{\rm{d}}{y'_{{\rm{B}},m}}} } } \right]\!\!.
\label{eq33}
\end{eqnarray}
Because $\left| {a - b} \right| \le \left| a \right| + \left| b \right|$ and $\left| {a + b} \right| \le \left| a \right| + \left| b \right|$, eq. (\ref{eq33}) can be further upper-bounded by
\begin{eqnarray}
{I_2} \le \ln (2{s^2}) + \frac{2}{{{s^2}}} \underbrace {\left[ {\left| \mu  \right|\sqrt {\frac{{\frac{{\sigma _{\rm{B}}^2}}{{h_{{\rm{B}},m}^2}} \!+\! \frac{{\sigma _{\rm{E}}^2}}{{h_{{\rm{E}},m}^2}}}}{{2\pi }}}  \!+\! \left| {1 \!-\! \mu } \right|\left( {\frac{{{\sigma _{\rm{B}}}}}{{\sqrt {2\pi } {h_{{\rm{B}},m}}}} \!+\! \frac{{\xi P}}{2}} \right)}\!\! \right]}_{{I_3}}\!\!.
\label{eq34}
\end{eqnarray}
To get a relatively tight upper bound of ${I_2}$, the minimum value of ${I_3}$ in (\ref{eq34}) should be determined first.
Here, three cases are considered:

Case 1: when $\mu  < 0$, we have
\begin{equation}
{I_3} \ge \frac{{{\sigma _{\rm{B}}}}}{{\sqrt {2\pi } {h_{{\rm{B}},m}}}} + \frac{{\xi P}}{2}.
\label{eq35}
\end{equation}

Case 2: when $0 \le \mu  \le {\rm{1}}$,
if $\sqrt{\left({\frac{{\sigma _{\rm{B}}^2}}{{h_{{\rm{B}},m}^2}} + \frac{{\sigma _{\rm{E}}^2}}{{h_{{\rm{E}},m}^2}}}\right)/(2\pi)}  \ge \frac{{{\sigma _{\rm{B}}}}}{{\sqrt {2\pi } {h_{{\rm{B}},m}}}} + \frac{{\xi P}}{2}$, we can also get (\ref{eq35}). Otherwise, we have
\begin{equation}
{I_3} \ge \sqrt {\frac{{\frac{{\sigma _{\rm{B}}^2}}{{h_{{\rm{B}},m}^2}} + \frac{{\sigma _{\rm{E}}^2}}{{h_{{\rm{E}},m}^2}}}}{{2\pi }}}.
\label{eq37}
\end{equation}

Case 3: when $\mu  > 1$, we can also easily obtain (\ref{eq37}).

According to the above three cases, we have
\begin{eqnarray}
{I_3} \ge \left\{ \begin{array}{l}
\frac{{{\sigma _{\rm{B}}}}}{{\sqrt {2\pi } {h_{{\rm{B}},m}}}} + \frac{{\xi P}}{2},\;{\rm{if}}\;\sqrt {\frac{{\frac{{\sigma _{\rm{B}}^2}}{{h_{{\rm{B}},m}^2}} + \frac{{\sigma _{\rm{E}}^2}}{{h_{{\rm{E}},m}^2}}}}{{2\pi }}} \ge \frac{{{\sigma _{\rm{B}}}}}{{\sqrt {2\pi } {h_{{\rm{B}},m}}}} + \frac{{\xi P}}{2}\\
\sqrt {\frac{{\frac{{\sigma _{\rm{B}}^2}}{{h_{{\rm{B}},m}^2}} + \frac{{\sigma _{\rm{E}}^2}}{{h_{{\rm{E}},m}^2}}}}{{2\pi }}} ,\;{\rm otherwise}
\end{array} \right.
\label{eq39}
\end{eqnarray}
Submitting (\ref{eq39}) into (\ref{eq34}), ${I_2}$ is further upper-bounded by
\begin{eqnarray}
{I_{\rm{2}}} \le \left\{ \begin{array}{l}
\ln (2{s^2}) + \frac{2}{{{s^2}}}\left( {\frac{{{\sigma _{\rm{B}}}}}{{\sqrt {2\pi } {h_{{\rm{B}},m}}}} + \frac{{\xi P}}{2}} \right),\;{\rm{if}}\;\sqrt {\frac{{\frac{{\sigma _{\rm{B}}^2}}{{h_{{\rm{B}},m}^2}} + \frac{{\sigma _{\rm{E}}^2}}{{h_{{\rm{E}},m}^2}}}}{{2\pi }}}  \ge \frac{{{\sigma _{\rm{B}}}}}{{\sqrt {2\pi } {h_{{\rm{B}},m}}}} + \frac{{\xi P}}{2}\\
\ln (2{s^2}) + \frac{2}{{{s^2}}}\sqrt {\frac{{\frac{{\sigma _{\rm{B}}^2}}{{h_{{\rm{B}},m}^2}} + \frac{{\sigma _{\rm{E}}^2}}{{h_{{\rm{E}},m}^2}}}}{{2\pi }}} ,\;{\rm otherwise}
\end{array} \right.
\label{eq40}
\end{eqnarray}
To minimize the terms on the right hand side of (\ref{eq40}),
we choose $s^2$ as
\begin{eqnarray}
{s^2} = \left\{ \begin{array}{l}
2\left( {\frac{{{\sigma _{\rm{B}}}}}{{\sqrt {2\pi } {h_{{\rm{B}},m}}}} + \frac{{\xi P}}{2}} \right),{\rm{if}}\sqrt {\frac{{\frac{{\sigma _{\rm{B}}^2}}{{h_{{\rm{B}},m}^2}} + \frac{{\sigma _{\rm{E}}^2}}{{h_{{\rm{E}},m}^2}}}}{{2\pi }}}  \ge \frac{{{\sigma _{\rm{B}}}}}{{\sqrt {2\pi } {h_{{\rm{B}},m}}}} + \frac{{\xi P}}{2}\\
2\sqrt {\frac{{\frac{{\sigma _{\rm{B}}^2}}{{h_{{\rm{B}},m}^2}} + \frac{{\sigma _{\rm{E}}^2}}{{h_{{\rm{E}},m}^2}}}}{{2\pi }}} ,\;{\rm{otherwise}}
\end{array} \right.
\label{eq40_0}
\end{eqnarray}
Submitting (\ref{eq40_0}) to (\ref{eq40}), $I_2$ is finally upper-bounded by
\begin{eqnarray}
{I_{\rm{2}}} \le \left\{ \begin{array}{l}
\ln \left[ {4e\left( {\frac{{{\sigma _{\rm{B}}}}}{{\sqrt {2\pi } {h_{{\rm{B}},m}}}} + \frac{{\xi P}}{2}} \right)} \right],{\rm{if}}\sqrt {\frac{{\frac{{\sigma _{\rm{B}}^2}}{{h_{{\rm{B}},m}^2}} + \frac{{\sigma _{\rm{E}}^2}}{{h_{{\rm{E}},m}^2}}}}{{2\pi }}}  \ge \frac{{{\sigma _{\rm{B}}}}}{{\sqrt {2\pi } {h_{{\rm{B}},m}}}} + \frac{{\xi P}}{2}\\
\ln \left( {4e\sqrt {\frac{{\frac{{\sigma _{\rm{B}}^2}}{{h_{{\rm{B}},m}^2}} + \frac{{\sigma _{\rm{E}}^2}}{{h_{{\rm{E}},m}^2}}}}{{2\pi }}} } \right),\;{\rm{otherwise}}
\end{array} \right.
\label{eq40_1}
\end{eqnarray}
Finally, submitting (\ref{eq31}) and (\ref{eq40_1}) into (\ref{eq26}), eq. (\ref{eq25}) can be derived.

\section{Proof of Lower Bound (\ref{equ1}) in Theorem \ref{them3}}
\label{appd}
\renewcommand{\theequation}{D.\arabic{equation}}
For this scenario, eq. (\ref{eq15}) can also be obtained.
To obtain a good input PDF, the input entropy ${\cal H}(X)$ should be maximized under constraints (\ref{eq3_1}), (\ref{eq3_2}), and (\ref{eq3_3}).

According to \cite{BIB13}, when $\alpha =0.5$, the optimal input PDF is given by
\begin{eqnarray}
{f_X}(x) = \left\{ \begin{array}{l}
\frac{1}{A},\;x \in [0,A]\\
0,\;{\rm{otherwise}}
\end{array} \right..
\label{eqd2}
\end{eqnarray}
Therefore, ${\cal H}(X)$ and ${\rm var}(Y_{\rm E})$ can be written as
\begin{eqnarray}
\left\{ \begin{array}{l}
{\cal H}(X) = \ln A\\
{\mathop{\rm var}} ({Y_{\rm E}}) = h_{{\rm E},m}^2\frac{{{\xi ^2}{P^2}}}{3} + \sigma _{\rm{E}}^2
\end{array} \right..
\label{eqd3}
\end{eqnarray}
Submitting (\ref{eqd3}) into (\ref{eq15}), lower bound (\ref{equ1}) for $\alpha=0.5$ is obtained.

When $\alpha \neq 0.5$ and $\alpha \in (0,1]$, the optimal input PDF is given by \cite{BIB13}
\begin{eqnarray}
{f_X}(x) = \left\{ \begin{array}{l}
\frac{{c{e^{cx}}}}{{{e^{cA}} - 1}},\;x \in [0,A]\\
0,\;{\rm{otherwise}}
\end{array} \right.,
\label{eqd5}
\end{eqnarray}
where $c$ is the solution to (\ref{equ2}).
Therefore, ${\cal H}(X)$ and ${\rm var}(Y_{\rm E})$ can be written as
\begin{eqnarray}
\left\{ \begin{array}{l}
{\cal H}(X) = \ln \left[ {{e^{ - c\xi P}}\left( {\frac{{{e^{cA}} - 1}}{c}} \right)} \right]\\
{\mathop{\rm var}} ({Y_{\rm{E}}}) = h_{{\rm{E}},m}^2\left[ {\frac{{A(cA - 2)}}{{c(1 - {e^{ - cA}})}} + \frac{2}{{{c^2}}} - {\xi ^2}{P^2}} \right] + \sigma _{\rm{E}}^2
\end{array} \right.\!\!\!.
\label{eqd6}
\end{eqnarray}
Submitting (\ref{eqd6}) into (\ref{eq15}), lower bound (\ref{equ1}) for $\alpha \neq 0.5$ and $\alpha \in (0,1]$ is obtained.

\section{Proof of Lower Bound (\ref{equ4}) in Theorem \ref{them4}}
\label{appe}
\renewcommand{\theequation}{E.\arabic{equation}}
In this scenario, eq. (\ref{eq26}) can also be derived, where $I_1$ can also be expressed as (\ref{eq31}).
To derive $I_2$, ${g_{{Y'_{{\rm B},m}}\left| {{Y'_{{\rm E},m}}} \right.}}\left( {{y'_{{\rm B},m}}\left| {{y'_{{\rm E},m}}} \right.} \right)$ is chosen as
\begin{eqnarray}
{g_{{Y'_{{\rm B},m}}\left| {{Y'_{{\rm E},m}}} \right.}}\left( {{y'_{{\rm B},m}}\left| {{y'_{{\rm E},m}}} \right.} \right)=\frac{1}{\sqrt{2 \pi} s}e^{-\frac{(y'_{{\rm B},m}-\mu y'_{{\rm E},m})^2}{2s^2}},
\label{eqe1}
\end{eqnarray}
where $\mu$ and $s$ are free parameters to be determined.

By using (\ref{eqe1}) and (\ref{eq32_1}), we have
\begin{eqnarray}
I_2 = \frac{1}{2} \ln(2 \pi s^2) +\mathbb{E}_{X^*} \left\{ \frac{(1-\mu \frac{h_{{\rm E},m}}{h_{{\rm B},m}})^2 (h_{{\rm B},m}^2 X^2+ \sigma_{\rm B}^2)+\mu^2 (\frac{h_{{\rm E},m}^2}{h_{{\rm B},m}^2} \sigma_{\rm B}^2+\sigma_{{\rm E},m}^2\!)}{2s^2} \right\}.
\label{eqe2}
\end{eqnarray}

According to (\ref{eq3_1}), (\ref{eq3_2}) and (\ref{eq3_3}), we have
\begin{eqnarray}
\mathbb{E}_{X^*}(X^2)=\int_0^A x^2 f_{X^*}(x){\rm d}x \le \int_0^A \!Ax f_{X^*}(x){\rm d}x=A \xi P.
\label{eqe3}
\end{eqnarray}
Then, eq. (\ref{eqe2}) is upper bounded by
\begin{eqnarray}
 I_2 \leq \frac{1}{2} \ln(2 \pi s^2) +  \frac{\left(1\!-\!\mu \frac{h_{{\rm E},m}}{h_{{\rm B},m}}\right)^2 (h_{{\rm B},m}^2 A \xi P\!\!+\!\! \sigma_{\rm B}^2)\!\!+\!\!\mu^2 \left(\frac{h_{{\rm E},m}^2}{h_{{\rm B},m}^2} \sigma_{\rm B}^2\!\!+\!\!\sigma_{{\rm E},m}^2\!\right)}{2s^2}.
\label{eqe4}
\end{eqnarray}
To obtain a tight upper bound of $I_2$, $\mu$ and $s$ are chosen as
\begin{eqnarray}
\left\{ \begin{array}{l}
\mu {\rm{ = }}\frac{{\frac{{{h_{{\rm{E}},m}}}}{{{h_{{\rm{B}},m}}}}(h_{{\rm{B}},m}^2A\xi P + \sigma _{\rm{B}}^{\rm{2}})}}{{h_{{\rm{E}},m}^2A\xi P + 2\frac{{h_{{\rm{E}},m}^2}}{{h_{{\rm{B}},m}^2}}\sigma _{\rm{B}}^{\rm{2}}{\rm{ + }}\sigma _{\rm{E}}^{\rm{2}}}}\\
{s^2} = \frac{{\left( {\frac{{h_{{\rm{E}},m}^2}}{{h_{{\rm{B}},m}^2}}\sigma _{\rm{B}}^{\rm{2}}{\rm{ + }}\sigma _{\rm{E}}^{\rm{2}}} \right)(h_{{\rm{B}},m}^2A\xi P + \sigma _{\rm{B}}^{\rm{2}})}}{{h_{{\rm{E}},m}^2A\xi P + 2\frac{{h_{{\rm{E}},m}^2}}{{h_{{\rm{B}},m}^2}}\sigma _{\rm{B}}^{\rm{2}}{\rm{ + }}\sigma _{\rm{E}}^{\rm{2}}}}
\end{array} \right..
\label{eqe5}
\end{eqnarray}
Submitting (\ref{eqe5}) into (\ref{eqe4}), we have
\begin{eqnarray}
I_2 \leq  \frac{1}{2} \ln \left[2 \pi e \frac{\left(\frac{h_{{\rm E},m}^2}{h_{{\rm B},m}^2} \sigma_{\rm B}^2+ \sigma_{\rm E}^2\right) (h_{{\rm B},m}^2 A\xi P +\sigma_{\rm B}^2)}{h_{{\rm E},m}^2 A \xi P +2 \frac{h_{{\rm E},m}^2}{h_{{\rm B},m}^2}\sigma_{\rm B}^2+\sigma_{\rm E}^2}\right].
\label{eqe6}
\end{eqnarray}
Substituting (\ref{eq31}) and (\ref{eqe6}) into (\ref{eq26}), eq. (\ref{equ4}) can be derived.

\end{document}